\newtheorem{thm}{Theorem}
\newtheorem{subthm}{Theorem}
\newtheorem{prop}{Proposition}
\newtheorem{coro}{Corollary}
\newcommand{\R}{{\mathbb{R}}}
\newtheorem{lem}{Lemma}
\newtheorem{examples}{Example}
\begin{document}
\author{Quanyan Zhu, Hamidou Tembine, Tamer Ba\c{s}ar\footnote{Q. Zhu and T. Ba\c{s}ar are with Coordinated Science Laboratory, University of Illinois at Urbana-Champaign, Urbana, IL, USA. (Email: \{zhu31, basar1\}@illinois.edu) H. Tembine is with Department of Telecommunications, \'Ecole  Sup\'erieure d'Electricit\'e (SUPELEC), France. (Email: tembine@ieee.org)}}
\title{Evolutionary Games for Multiple Access Control\footnote{The material in this paper was partially presented in \cite{TZB08a} and \cite{TZB08b}.}\ \footnote{This work was supported in part by a grant from AFOSR and by
a MURI grant.}}
\maketitle
\begin{abstract}

In this paper, we formulate an evolutionary multiple access control game with continuous-variable actions and coupled constraints. We characterize equilibria of the game and show that  the pure  equilibria are Pareto optimal and  also resilient to deviations by coalitions of any size, i.e., they are strong equilibria. We use  the concepts of  price of anarchy and strong price of anarchy to  study the performance of the system. The paper also addresses how to select one specific equilibrium solution using the concepts of normalized equilibrium and evolutionarily stable strategies. We examine the long-run behavior of these strategies under  several classes of evolutionary game dynamics, such as Brown-von Neumann-Nash dynamics, Smith dynamics and replicator dynamics. In addition, we examine correlated equilibrium for the single-receiver model. Correlated strategies are based on signaling structures before making decisions on rates. We then focus on evolutionary games for hybrid additive white Gaussian noise multiple access channel with multiple users and multiple receivers, where each user chooses a rate and splits it over the receivers. Users have coupled constraints determined by the capacity regions.  Building upon the static game, we formulate a system of hybrid evolutionary game dynamics using G-function dynamics and Smith dynamics on rate control and channel selection, respectively.  We show that the evolving game has an equilibrium and illustrate these dynamics with numerical examples.

\end{abstract}

\section{Introduction}

Recently, there has been much interest in understanding the behavior
of multiple access controls under constraints. Considerable amount of work has been carried out on the problem of how users can obtain an acceptable throughput by choosing  rates independently.
 Motivated by the interest in studying a large population of users playing the game over time, evolutionary game
theory was found to be an appropriate framework for communication networks. It has been applied to  problems such as power control in wireless networks  and mobile interference control \cite{networking,ieeesmc, belmega2010, cioffi2003}.

The game-theoretical models considered in the previous studies on user behaviors in CDMA, \cite{Alpcan01, Goodman02}, are static one-shot non-cooperative games in which users are assumed to be rational and optimize their payoffs independently. Evolutionary game theory, on the other hand, studies games that are played repeatedly, and focuses on the strategies that persist over time, yielding the best fitness of a user in a non-cooperative  environment on a large time scale.

In \cite{pairwise}, an additive white Gaussian noise (AWGN) multiple access channel problem was modeled
 as a noncooperative game with pairwise interactions,
in which users were modeled as rational entities whose only
interest was to maximize their own communication rates. The authors obtained the  Nash equilibrium of the two-user
game and introduced a two-player evolutionary game model with {\it pairwise interactions} based on replicator dynamics. However,  the case when interactions   are not pairwise arises frequently  in communication networks, such as the  Code Division Multiple Access (CDMA) or the Orthogonal Frequency-Division Multiple Access (OFDMA) in Worldwide Interoperability for Microwave Access (WiMAX) environment \cite{networking}.


In this  work, we extend the study of \cite{pairwise} to wireless communication systems with an arbitrary number of users corresponding to each receiver. We formulate a static non-cooperative game with $m$ users subject to rate capacity constraints,  and  extend the constrained game to a dynamic evolutionary game with a large number of users whose strategies evolve  over time.
Different from evolutionary games with discrete and finite number of actions, our model is based on a class of  continuous games, known as  {\it continuous-trait games}.  Evolutionary games with continuum action spaces can be encountered in a wide variety of applications
in evolutionary ecology, such as evolution of phenology, germination, nutrient
foraging in plants, and predator-prey foraging \cite{continuous,vincent}.

In addition to the single receiver model, we investigate the case with multiple users and receivers.
We first formulate a static hybrid non-cooperative game  with $N$ users who rationally make decisions
on the rates as well as the channel selection subject to rate capacity constraints of each receiver.  We extend
the static game to a dynamic evolutionary game by viewing rate selections governed by a fitness
function parameterized by the channel selections.  Such a concept of a hybrid model has appeared earlier in \cite{alpcan} and \cite{AlpcanBasar2004}, in the context of hybrid power control in CDMA systems. The strategies of channel selections determine the
long-term fitness of the rates chosen by each user. We formulate such dynamics based on generalized Smith
dynamics and generating fitness function (G-function) dynamics.

\subsection{Contribution}
The main contributions of this work can be summarized as follows.
We first introduce a game-theoretic framework for local interactions between many users and a single receiver.  We show that the static continuous-kernel rate allocation game with coupled rate constraints has a convex set of pure Nash equilibria, coinciding with the maximal face of the polyhedral capacity region.
All the pure equilibria are Pareto optimal and are also strong equilibria, resilient to simultaneous deviation by coalition of any size.
 We show that the pure Nash equilibria in the rate allocation problem are 100\% efficient in terms of   Price of Anarchy (PoA) and constrained Strong Price of Anarchy (CSPoA).
We study the stability of strong equilibria, normalized equilibria, and evolutionary stable strategies (ESS) using evolutionary game dynamics such as  Brown-von Neumann-Nash dynamics, generalized Smith dynamics, and replicator dynamics.
We further investigate the correlated equilibrium of the multiple access game where the receiver can send signals to the users to mediate the behaviors of the transmitters.

Based on the single-receiver model, we then propose an evolutionary game-theoretic framework for the hybrid additive white Gaussian noise multiple access channel. We consider a communication system of multiple users and multiple receivers, where each user chooses a rate and splits it over the receivers. Users have coupled constraints determined by the capacity regions. We characterize  Nash equilibrium of the static game and show the existence of the equilibrium under general conditions. Building upon the static game, we formulate a system of hybrid evolutionary game dynamics using G-function dynamics and Smith dynamics on rate control and channel selection, respectively.  We show that the evolving game has an equilibrium and illustrate these dynamics with numerical examples.

\subsection{Organization of the paper}
The rest of the paper is structured as follows. We present in Section \ref{localInteractions}  the evolutionary game model of rate allocation in additive white Gaussian multiple access wireless networks, and analyze its equilibria and Pareto optimality in Section \ref{payoffsSection}. In Section~\ref{secrobust}, we present strong equilibria and price of anarchy of the game. In Section \ref{secselection}, we discuss how to select one specific equilibrium such as normalized equilibrium and evolutionary stable strategies. Section~\ref{secdynamics} studies the stability of equilibria and evolution of strategies using game dynamics. Section~\ref{CESection} analyzes the correlated equilibrium of the multiple acess game.

In Section \ref{HybridRateModel}, we present the hybrid rate control model where users can choose the rates and the probability of  the channels to use.
In Section \ref{characterization}, we characterize the Nash equilibrium of the constrained hybrid rate control game model, pointing out the existence of the Nash equilibrium of the hybrid model and methods to find it. In
Section \ref{evolutionaryGames},  we apply evolutionary dynamics to both rates and channel selection probabilities. We use simulations to demonstrate the validity of these proposed dynamics and illustrate the evolution of the overall evolutionary dynamics of the hybrid model.
Section~\ref{secconclud} concludes the paper. For reader's convenience, we summarize the notations in Table \ref{notation} and the acronyms in Table \ref{acronyms}.


\begin{table}[t]
\caption{List of Notations}
\begin{center}
\begin{tabular}{ll@{}c@{}lc}
\toprule
  Symbol & Meaning  \\
\midrule
  $\mathcal{N}$  & set of $N$ users  \\
    ${\Omega}$  & a subset of $N$ users  \\
    $\mathcal{J}$  & set of $J$ receivers  \\
    $\mathcal{A}_i$ & action set of user $i$\\
  $P_i$ & maximum power of user $i$\\
    $h_i$ & channel gain of user $i$\\
   $\alpha_i$ & rate of user $i$ \\
     ${p}_{ij}$ & probability of user $i$ selecting receiver $j$\\
  $u_i$  & payoff of user $i$ \\
    $\overline{U}_i$  & expected payoff of user $i$ \\
      $\mathcal{C}$ & capacity region of a set $\mathcal{N}$ of users  in a single receiver case \\
  $\mathcal{C}(j)$ & capacity region of a set $\mathcal{N}$ of users  at receiver $j$ \\
    $\lambda_i$ & distribution over the action set $\mathcal{A}_i$\\ 
    $\mu$ & population state\\ \bottomrule
\end{tabular}
\end{center}
\label{notation}
\end{table}%

\begin{table}[t]
\caption{List of Acronyms}
\begin{center}
\begin{tabular}{ll@{}c@{}lc}
\toprule
  Abbreviation & Meaning  \\
\midrule
  AGWN  & Additive White Gaussian Noise  \\
  MAC  & Multiple Access Control  \\
    MISO  & Multi-Input and Multi-Output  \\
    CCE  & Constrained Correlated Equilibrium  \\
    ESS & Evolutionary Stable Equilibrium \\
  NE & Nash Equilibrium\\
    PoA & Price of Anarchy\\
      SPoA & Strong Price of Anarchy\\ \bottomrule
\end{tabular}
\end{center}
\label{acronyms}
\end{table}%

\section{ AWGN Mutiple Access Model: Single Receiver Case } \label{secmodel}

We consider a communication system consisting of several receivers and several senders (see Figure \ref{figfuncttt3}). At each time, there are several simultaneous local interactions (typically, at each receiver there is a local interaction). Each local interaction corresponds to a non-cooperative  one-shot game with common constraints. The opponents do not necessarily stay the same from a given time slot to the next one. Users revise their rates in view of their payoffs and the coupled constraints (for example by using an evolutionary process, a learning process or a trial-and-error updating process). The game evolves  over time.
Users are  interested in maximizing a fitness function based on
their own communication rates at each time, and  they are aware of the fact that
the other users have the same goal.
The coupled power and rate constraints are
also common knowledge. Users have to choose independently
their own coding rates at the beginning of the communication,
where the rates selected by a user may be either deterministic,
or chosen from some distribution. If the rate profile arrived at as a result of these independent decisions lies
in the capacity region, users will communicate at that operating
point. Otherwise, either the receiver is unable to decode any
signal and the observed rates are  zero, or only one of the
signals can be decoded. The latter  occurs when  all the other users are transmitting at
or below a safe rate. With these assumptions, we can define
a constrained non-cooperative game. The set of allowed strategies for user $i$ is the set
of all probability distributions over $[0,+\infty[,$ and the payoff is a function of
the  rates. In addition, the rational action (rate) sets are restricted to lie in the capacity regions (the payoff is zero if the constraint is violated).
In order to study the interactions between the selfish or partially cooperative users
and their stationary rates in the long run,
we propose to model the problem of rate allocation in Gaussian multiple access channels as an
evolutionary game with a continuous action space and coupled constraints. The development
of evolutionary game theory is a major contribution of
biology to competitive decision making and the evolution of cooperation. The key concepts of
evolutionary game theory are
 (i) {\it Evolutionary Stable States} \cite{smith},  which is a refinement of equilibria, and (ii) {\it Evolutionary Game Dynamics}  such as replicator dynamics \cite{taylor}, which describes
the evolution of strategies or frequencies of use of strategies in time, \cite{vincent,hofbauer}.
\begin{figure}[htb]
\centering
\includegraphics[scale=0.5]{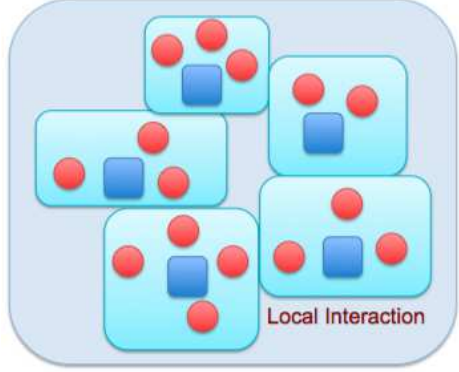}
\caption{A population: distributed receivers and senders, represented by blue rectangles and red circles respectively.} \label{figfuncttt3}
\end{figure}

The single population evolutionary rate allocation game is described as follows:
there is one population of senders (users) and several receivers. The number of senders is  large. At  each time, there are many one-shot games called {\it local interactions}.
Each sender of the
population chooses from his set of strategies
${\mathcal{A}_i}$ which is a  non-empty, convex and compact subset of $\mathbb{R}.$ Without loss of generality, we can suppose that user $i$ chooses its rate in the interval $\mathcal{A}_i=[0,C_{\{i\}}]$, where  $C_{\{i\}}$ is the rate upper bound for user $i$ (to be made precise shortly), as outside of the capacity region the payoff (as to be defined later) will be zero.
Let $\Delta({\mathcal{A}}_i) $ be
the set of probability distributions over the  pure strategy set $\mathcal{A}_i.$ The set
$\Delta({\mathcal{A}}_i)$ can be interpreted as the set of mixed strategies.
It is also interpreted as the set of distributions of strategies among
the population. Let $\lambda_i\in \Delta({\mathcal{A}}_i),$ and $E$ be a   $\lambda_i-$ measurable subset of $\mathbb{R}^N$; then  $\lambda_i(E)$ represents the fraction of users choosing
a strategy out of $ E$, at time $t.$ A distribution $\lambda_i\in \Delta({\mathcal{A}}_i) $ is sometimes called the ``state"
 of the population.   We denote by $\mathbb{B}(\mathcal{A}_i)$ the Borel $\sigma-$algebra on ${\mathcal{A}}_i$ and by $d(\lambda,\lambda')$ the distance between two states measured with the respect to the weak topology.
 Each user's payoff depends on opponents' behavior through
the distribution of opponents' choices and of their strategies. The payoff of a user $i$ in a local interaction with $(N-1)$ other users is given as a function $u_i:\ \mathbb{R}^N\longrightarrow \mathbb{R}.$ The rate profile $\alpha\in\mathbb{R}^N$ must belong to a common capacity region $\mathcal{C}\subset\mathbb{R}^N $ defined by $2^N-1$ linear inequalities. The
expected payoff of a sender $i$ transmitting at a rate $a$ when the state of the
population is $\mu\in \Delta(\mathcal{A}_i)$ is given by  $F_i(a,\mu).$  The expected payoff for user $i$ is 
$$F_i(\lambda_i,\mu):=\int_{\alpha\in \mathcal{C}}u_i(\alpha)\ \lambda_i(d\alpha_i) \prod_{j\neq i}\mu(d\alpha_j) .$$
The population state is subjected to the ``mixed extension" of capacity constraints $\mathcal{M}(\mathcal{C}).$
This will be discussed in Section \ref{secdynamics} and will be made more precise later.

\subsection{Local Interactions}\label{localInteractions}
Local interaction refers to the problem setting of one receiver and its uplink additive white Gaussian noise (AWGN) multiple access channel with $N$ senders with coupled constraints (or actions). The signal at the receiver is given by $ Y=\xi+\sum_{i=1}^{N}X_i$ where $X_i$ is a transmitted signal of user $i$ and $\xi$ is a zero-mean Gaussian noise with variance $\sigma_0^2.$ Each user has an individual power constraint $\mathbb{E}(X_i^2)\leq P_i$ and the channel gain $h_i$.  The optimal power allocation scheme
is to transmit at the maximum power available, i.e. $P_i$, for each user. Hence, we consider the case in which maximum power is attained. The decisions of the users  then  consist of choosing
their communication rates, and the receiver's  role is to
decode, if possible. The capacity region is the set of all vectors $\alpha\in \R^{N}_{+}$ such that users $i\in\mathcal{N}:=\{1,2,\ldots,N\}$  can reliably communicate at rate $\alpha_i, ~i\in\mathcal{N}.$
The capacity region $\mathcal{C}$ for this channel is the set
\begin{eqnarray} \label{tet}
\mathcal{C}=
\nonumber \left\{\alpha\in \R^{N}_{+} ~\bigg|~ \sum_{i\in \Omega}\alpha_i\leq   \log\left(1+|\Omega|\frac{P_ih_i}{\sigma^2_0}\right).
 \forall\ \emptyset \subsetneqq \Omega\subseteq \mathcal{N}\right\},
 \end{eqnarray}

\begin{examples}{(Example of capacity region with three users)}
In this example, we illustrate the capacity region with three users. Let $\alpha_1,\alpha_2,\alpha_3$ be the rates of the users and $P_i=P, h_i=h, \forall i=1,2,3$. Based on (\ref{tet}), we obtain a set of inequalities
$$\left\{\begin{array}{l}\alpha_1\geq 0,\alpha_2\geq 0,\alpha_3\geq 0\\
\alpha_i\leq \log\left(1+\frac{Ph}{\sigma_0^2}\right), i = 1, 2, 3\\
\alpha_i+\alpha_j\leq \log\left(1+2\frac{Ph}{\sigma_0^2}\right), i\not= j, i,j = 1,2,3. \\
\alpha_1+\alpha_2+\alpha_3\leq \log\left(1+3\frac{Ph}{\sigma_0^2}\right)\\
\end{array}\right., $$ or in the compact notation, $M_3\gamma_3\leq \zeta_3,\ $ where  $$ \gamma_3:=\left[\begin{array}{c}\alpha_1\\ \alpha_2\\ \alpha_3\end{array}\right]\in\mathbb{R}_+^3,\
\zeta_3:=\left[\begin{array}{c}C_{\{1\}}\\ C_{\{2\}}\\ C_{\{3\}}\\ C_{\{1,2\}}\\ C_{\{1,3\}}\\ C_{\{2,3\}}\\ C_{\{1,2,3\}}\end{array}\right] , M_3:=\left[\begin{array}{ccc}1 & 0 & 0 \\ 0 & 1 & 0\\ 0 & 0 & 1\\
 1 & 1 & 0\\ 1& 0 & 1\\ 0 & 1& 1\\ 1 & 1 & 1
 \end{array}\right]\in \mathbb{Z}^{7\times 3}.$$ Note that $M_3$ is a totally unimodular matrix.
By letting $Ph=25,\sigma_0^2=0.1,$ we sketch in Figure~\ref{figfunctt3} the capacity region with three  users.
\begin{figure}[htb]
\centering
\includegraphics[scale=0.5]{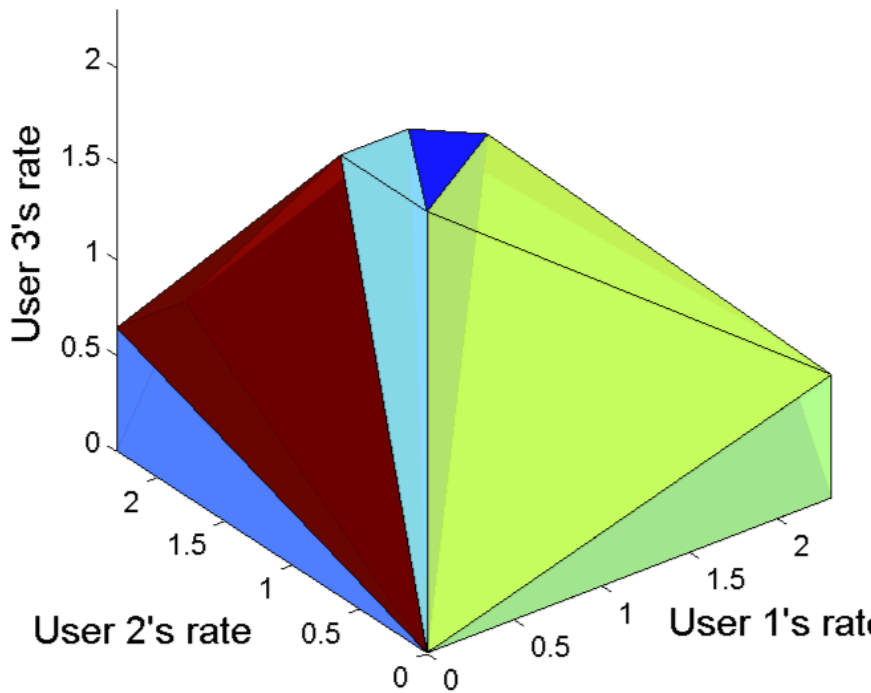}\\ 
\caption{Capacity region with three users.}
  \label{figfunctt3}
\end{figure}
\end{examples}

 The capacity region reveals a competitive nature of the interactions among senders: if a user $i$ wants to communicate at a higher rate, one of the other users  has to lower his rate;  otherwise, the capacity constraint is violated.  We let $$r_{i,\Omega} :=\log\left(1+\frac{P_ih_i}{\sigma_0^2+\sum_{i'\in \Omega,i'\neq i} P_{i'}h_{i'}}\right), i\in\mathcal{N}, \Omega\subseteq\mathcal{N}$$  denote the  bound on the rate  of a user when the signals of the $|\Omega|-1$ other users are treated as  noise.

Due to the noncooperative nature of the rate allocation, we can formulate the one-shot game $$\Xi =\langle \mathcal{N}, (\mathcal{A}_i)_{i\in\mathcal{N}}, (u_i)_{i\in\mathcal{N}}\rangle\,,$$ where the set of users $\mathcal{N}$ is the set of players, $\mathcal{A}_i$, $i\in  \mathcal{N}$, is  the set of actions, and $u_i$, $i\in  \mathcal{N}$, are the payoff functions.

\subsection{Payoffs}\label{payoffsSection}
We define $u_i:\prod_{i=1}^N\mathcal{A}_i\rightarrow \mathbb{R}_+$ as follows:
\begin{eqnarray}
u_i(\alpha_{i},\alpha_{-i})&=&\upharpoonleft_{\mathcal{C}}(\alpha)g_i(\alpha_i,\alpha_{-i})\\ &=&\left\{ \begin{array}{ll} g_i(\alpha_i) & {\textrm{~if~}}\   (\alpha_{i},\alpha_{-i})\in \mathcal{C}\\
0 & \mbox{otherwise}
\end{array}
\right.,
\end{eqnarray}
 where  $\upharpoonleft_{\mathcal{C}}$ is the indicator function; $\alpha_{-i}$ is a vector consisting of other players' rates, i.e., $\alpha_{-i}=[\alpha_1,\ldots,\alpha_{i-1},\alpha_{i+1},\ldots,\alpha_N]$  and $u_i$
 is a positive and strictly increasing  function for each fixed $\alpha_{-i}$. Since the game is subject to coupled constraints, the action set $\mathcal{A}_i$ is coupled and dependent on other players' actions. Given the strategy profile $\alpha_{-i}$ of other players, the constrained action set $\mathcal{A}_i$ is given by
 \begin{equation}
 \mathcal{A}_i(\alpha_{-i}):=\{\alpha_i\in [0,C_{\{i\}}],\ (\alpha_i,\alpha_{-i})\in \mathcal{C} \}
 \end{equation}  We then have an asymmetric game. The minimum rate that the user $i$ can guarantee in the feasible regions is $r_{i,\mathcal{N}}$ which is different than $r_{j,\mathcal{N}}.$

 Each user $i$ maximizes $u_i(\alpha_{i},\alpha_{-i})$ over the coupled constraint set.
Owing to the monotonicity of the function $g_i$ and the inequalities that define the capacity region, we obtain the following lemma.
\begin{lem}
 Let $\overline{BR}_i(\alpha_{-i})$ be the best reply to the strategy $\alpha_{-i}$, defined by  $$ \overline{BR}^i(\alpha_{-i})=\arg\max_{y\in \mathcal{A}_i(\alpha^{-i})}u_i(y,\alpha_{-i}).$$ $\overline{BR}_i$ is a non-empty single-valued correspondence (i.e. a  standard function), and is given by
\begin{equation}
\max \left(r_{i,\mathcal{N}},\min_{\Omega\in\Gamma_i}\left\{C_{\Omega}-\sum_{k\in\Omega\backslash\{i\}}\alpha_k\right\} \right),
\end{equation} \label{lembr1}
where $\Gamma_i=\{\Omega\in 2^{\mathcal{N}}, i\in\Omega\}$.
\end{lem}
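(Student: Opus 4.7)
The plan is to exploit the strict monotonicity of $g_i$ in $\alpha_i$ and then read off the active upper bound from the polymatroidal capacity region $\mathcal{C}$. Fix $\alpha_{-i}$ and restrict attention to the feasible set $\mathcal{A}_i(\alpha_{-i})=\{\alpha_i\in[0,C_{\{i\}}]:(\alpha_i,\alpha_{-i})\in\mathcal{C}\}$. On this set, $u_i(\alpha_i,\alpha_{-i})=g_i(\alpha_i)$, so since $g_i$ is strictly increasing the argmax is the unique supremum of $\mathcal{A}_i(\alpha_{-i})$. This immediately gives single-valuedness.

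Next, I would rewrite the $2^N-1$ linear inequalities defining $\mathcal{C}$ to isolate $\alpha_i$. Only the subsets that contain $i$, i.e.\ $\Omega\in\Gamma_i$, produce inequalities involving $\alpha_i$; each such $\Omega$ gives $\alpha_i\le C_\Omega-\sum_{k\in\Omega\setminus\{i\}}\alpha_k$, and the subsets not containing $i$ are constraints on $\alpha_{-i}$ alone (which are already assumed to hold, since we only consider strategy profiles for which a best reply is being sought). Taking the tightest of these inequalities yields the upper bound $\min_{\Omega\in\Gamma_i}\{C_\Omega-\sum_{k\in\Omega\setminus\{i\}}\alpha_k\}$, which also stays below $C_{\{i\}}$ because $\{i\}\in\Gamma_i$.

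Finally, I would argue that user $i$ can always guarantee at least the interference-as-noise rate $r_{i,\mathcal{N}}$, which is the content of the remark immediately preceding the lemma. Concretely, for every feasible $\alpha_{-i}$ the point $(r_{i,\mathcal{N}},\alpha_{-i})$ lies in $\mathcal{C}$: plugging $\alpha_i=r_{i,\mathcal{N}}$ into any constraint $\sum_{k\in\Omega}\alpha_k\le C_\Omega$ reduces, via the explicit form of $r_{i,\mathcal{N}}$ and $C_\Omega$ and the sub-additivity $\log(1+x)+\log(1+y)\ge \log(1+x+y+xy)$, to an inequality that holds whenever $\alpha_{-i}$ already satisfies the constraints indexed by $\Omega\setminus\{i\}$. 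Hence $\mathcal{A}_i(\alpha_{-i})\supseteq[0,r_{i,\mathcal{N}}]$ is always non-empty, and combining this lower guarantee with the upper bound from the previous step produces
\[
\sup\mathcal{A}_i(\alpha_{-i})=\max\!\left(r_{i,\mathcal{N}},\ \min_{\Omega\in\Gamma_i}\Big\{C_\Omega-\sum_{k\in\Omega\setminus\{i\}}\alpha_k\Big\}\right),
\]
which is precisely the claimed formula.

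The main obstacle is the third step—verifying $(r_{i,\mathcal{N}},\alpha_{-i})\in\mathcal{C}$ for every feasible $\alpha_{-i}$. This is where the specific structure of the Gaussian MAC must be invoked rather than a generic convexity argument, and it is exactly the content of the ``safe rate'' interpretation: $r_{i,\mathcal{N}}$ is the rate achievable by treating all other signals as noise, so by a standard successive-decoding argument it is compatible with any rate allocation the other users choose inside their own projected capacity region. The monotonicity steps and the isolation of $\alpha_i$ from the inequalities are routine and require no new ideas.
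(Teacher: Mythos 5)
Your proof is correct and follows the same route the paper indicates (monotonicity of $g_i$ plus reading off the binding inequality among the constraints indexed by $\Omega\in\Gamma_i$); the paper gives no written proof beyond that one-line remark, and your third step --- verifying $(r_{i,\mathcal{N}},\alpha_{-i})\in\mathcal{C}$ via the chain rule $C_{\Omega}-C_{\Omega\setminus\{i\}}=\log\bigl(1+\tfrac{P_ih_i}{\sigma_0^2+\sum_{k\in\Omega\setminus\{i\}}P_kh_k}\bigr)\geq r_{i,\mathcal{N}}$ --- supplies exactly the detail the paper omits. Two minor remarks: the ``sub-additivity'' you invoke is in fact the identity $(1+x)(1+y)=1+x+y+xy$, and your own argument shows $r_{i,\mathcal{N}}\leq\min_{\Omega\in\Gamma_i}\{C_{\Omega}-\sum_{k\in\Omega\setminus\{i\}}\alpha_k\}$ whenever $\alpha_{-i}$ is feasible, so the outer $\max$ in the stated formula is then redundant rather than an independent case to establish.
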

\begin{prop}\label{ne}
 The set of Nash equilibria is $$\left\{ (\alpha_i,\alpha_{-i})\ |\  \alpha^{i}\geq r_{i,\mathcal{N}},\sum_{i\in\mathcal{N}}\alpha_{i}=C_{\mathcal{N}}\right\}.$$
All these equilibria are optimal in Pareto sense.
\end{prop}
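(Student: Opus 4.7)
The plan is to combine the best-response formula of Lemma~\ref{lembr1} with strict submodularity of the AWGN capacity function to force the grand-coalition constraint to be tight. Because any Nash equilibrium $\alpha^*$ is feasible, $\alpha^*_i\leq C_\Omega-\sum_{k\in\Omega\setminus\{i\}}\alpha^*_k$ for every $\Omega\ni i$, so the minimum appearing in Lemma~\ref{lembr1} is at least $\alpha^*_i$. The identity $\alpha^*_i=\overline{BR}_i(\alpha^*_{-i})$ then simultaneously yields $\alpha^*_i\geq r_{i,\mathcal{N}}$ together with a coalition $\Omega_i\ni i$ that is tight, $\sum_{k\in\Omega_i}\alpha^*_k=C_{\Omega_i}$.

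The crucial step is to show that $\sum_i\alpha^*_i=C_\mathcal{N}$. Assume for contradiction $\sum_i\alpha^*_i<C_\mathcal{N}$; then every $\Omega_i\subsetneq\mathcal{N}$. A short expansion of $\log(1+\cdot)$ shows $C$ is strictly submodular: $C_A+C_B>C_{A\cup B}+C_{A\cap B}$ whenever $A\setminus B$ and $B\setminus A$ are both nonempty (using $P_k h_k>0$). Combining this with the double-counting identity $\sum_{k\in A}\alpha^*_k+\sum_{k\in B}\alpha^*_k=\sum_{k\in A\cup B}\alpha^*_k+\sum_{k\in A\cap B}\alpha^*_k$ and the feasibility bounds on $A\cup B$ and $A\cap B$ forces any two tight coalitions to be nested or disjoint, so the tight proper coalitions form a laminar family. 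Its maximal elements $\Omega^{(1)},\ldots,\Omega^{(k)}$ are then pairwise disjoint and cover $\mathcal{N}$ (since every user $j$ lies in its own $\Omega_j$), with $k\geq 2$ because no $\Omega^{(\ell)}$ equals $\mathcal{N}$. Summing tightness over the partition gives $\sum_i\alpha^*_i=\sum_\ell C_{\Omega^{(\ell)}}$; writing $s_\ell=\sum_{k\in\Omega^{(\ell)}}P_k h_k/\sigma_0^2$, strict superadditivity of $C$ over disjoint nonempty sets yields $\sum_\ell\log(1+s_\ell)=\log\prod_\ell(1+s_\ell)>\log(1+\sum_\ell s_\ell)=C_\mathcal{N}$, contradicting feasibility.

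The converse direction is direct: given $\alpha^*\in\mathcal{C}$ with $\alpha^*_i\geq r_{i,\mathcal{N}}$ and $\sum_i\alpha^*_i=C_\mathcal{N}$, substituting $\Omega=\mathcal{N}$ yields $C_\mathcal{N}-\sum_{k\neq i}\alpha^*_k=\alpha^*_i$, while feasibility bounds every other term from below by $\alpha^*_i$, so the minimum in Lemma~\ref{lembr1} is exactly $\alpha^*_i$; the $\max$ with $r_{i,\mathcal{N}}\leq\alpha^*_i$ again returns $\alpha^*_i$, confirming $\alpha^*_i=\overline{BR}_i(\alpha^*_{-i})$. Pareto optimality then follows from the dominant-face structure: strict monotonicity of each $u_i$ in its own coordinate combined with $\sum_i\alpha^*_i=C_\mathcal{N}$ rules out any $\alpha'\in\mathcal{C}$ that weakly improves every user's rate and strictly improves some, since that would force $\sum_i\alpha'_i>C_\mathcal{N}$. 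The main obstacle will be the laminar-family step in the middle paragraph: the tempting intuition that a user with slack can simply push his rate up fails when several proper coalitions are simultaneously tight, and strict submodularity of the AWGN capacity function is the essential ingredient that rules out such configurations.
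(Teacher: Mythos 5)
Your proof is correct, and its skeleton (non-dominant-face profiles admit a profitable unilateral deviation; dominant-face profiles are fixed points of the best response of Lemma~\ref{lembr1}; Pareto optimality from the sum constraint) matches the paper's. The genuine difference is in the necessity step. The paper simply asserts that if $\sum_i \beta_i < C_{\mathcal{N}}$ then ``at least one of the users can improve its rate,'' which is exactly the claim you flag as non-obvious: a priori every user could be blocked by some tight proper coalition even though the grand constraint has slack. Your polymatroid argument closes this gap rigorously: strict submodularity of $\Omega \mapsto \log\bigl(1+\sum_{k\in\Omega}P_kh_k/\sigma_0^2\bigr)$ (your computation reduces to $f(A\setminus B)f(B\setminus A)>0$, so it holds whenever all $P_kh_k>0$) forces the tight sets to form a laminar family, whose maximal elements partition $\mathcal{N}$, and strict superadditivity over a partition into at least two blocks then contradicts feasibility of the grand-coalition constraint. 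This is the standard ``tight sets of a polymatroid are a lattice'' argument, and it is precisely the structural property of the AWGN capacity region that makes the paper's one-line assertion true; for a general polyhedron of this form the assertion would fail. The remaining steps (extracting a tight $\Omega_i$ from $\alpha_i^*=\overline{BR}_i(\alpha_{-i}^*)$, the converse via $\Omega=\mathcal{N}$ achieving the minimum, and Pareto optimality) coincide with the paper's reasoning. In short, your proof buys a complete justification of the step the paper glosses over, at the cost of invoking submodularity explicitly; the paper's version is shorter but leaves the crucial ``someone can always push up'' claim unproved.
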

\begin{proof}
Let $\beta$ be a feasible solution, i.e., $\beta\in\mathcal{C}.$ If $\sum_{i=1}^N \beta_i< C_{\mathcal{N}}=\log\left(1+\sum_{i\in \mathcal{N}} \frac{P_ih_i}{\sigma_0^2}\right),$ then at least one of the users can improve its rate (hence its payoff) to reach one of the faces of the capacity region. We now check the strategy profile on the face
$$ \left\{ (\alpha_i,\alpha_{-i})\ \bigg|\  \alpha^{i}\geq r_{i,\mathcal{N}},\sum_{i=1}^N\alpha_{i}=C_{\mathcal{N}}\right\}.$$
If  $\beta\in \left\{ (\alpha_i,\alpha_{-i})\ \bigg|\  \alpha_{i}\geq r_{i,\mathcal{N}},\sum_{i=1}^N\alpha_{i}=C_{\Omega}\right\},$ then
from the Lemma \ref{lembr1}, $\overline{BR}_i(\beta_{-i})=\{\beta_i\}.$ Hence, $\beta$ is a strict equilibrium. Moreover, this strategy $\beta$ is Pareto optimal because  the rate of each user is maximized under the capacity constraint. These strategies are social welfare optimal if the total utility $\sum_{i=1}^N u_i(\alpha_i,\alpha_{-i})=\sum_{i=1}^N g_i(\alpha_i)$ is maximized subject to constraints.
\end{proof}

Note that the set of pure Nash equilibria is a convex subset of the capacity region.

\subsection{Robust Equilibria and Efficiency Measures} \label{secrobust}

\subsubsection{Constrained Strong Equilibria and Coalition Proofness}

An action profile  in a local interaction between $N$ senders is a constrained $k-$strong equilibrium if it is feasible and no coalition of size $k$ can
improve the rate transmissions of each of its members with  respect to the capacity constraints. An action is a constrained strong equilibrium \cite{aumann} if it is a constrained $k-$strong equilibrium for any size $k.$  A strong equilibrium is then a policy from which no coalition
(of any size) can deviate and improve the transmission rate of every
member of the coalition (group of the simultaneous moves), while possibly lowering the transmission rate
of users outside the coalition group. This notion of  constrained strong equilibria\footnote{Note that the set of constrained strong equilibria  is a subset of the set of Nash equilibria (by taking coalitions of size one) and any constrained strong equilibrium is Pareto optimal (by taking coalition  of full size).} is very attractive because it is resilient against coalitions of users. Most of the games do not admit any strong
equilibrium but in our case we will show that the multiple access channel game has several strong equilibria.

\begin{subthm} Any rate profile on the maximal  face of the capacity region $\mathcal{C}:$
$$ Face_{\max}(\mathcal{C}):=\left\{ (\alpha_i,\alpha_{-i})\in\mathbb{R}^N\ |\  \alpha_{i}\geq r_N,\sum_{i=1}^N\alpha_{i}=C_{\mathcal{N}}\right\},$$
is a constrained strong equilibrium.
\end{subthm}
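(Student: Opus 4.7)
The plan is to show that any attempted coalitional deviation from a profile $\alpha \in \mathrm{Face}_{\max}(\mathcal{C})$ must violate the grand capacity constraint $\sum_{i\in\mathcal{N}}\alpha_i \leq C_{\mathcal{N}}$, and is therefore infeasible (payoff zero). The key leverage is that $\alpha$ already saturates this constraint with equality.

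First I would fix $\alpha \in \mathrm{Face}_{\max}(\mathcal{C})$, an arbitrary coalition $\emptyset \neq \Omega \subseteq \mathcal{N}$, and a candidate deviation $\alpha'_{\Omega}$ with outsiders held at $\alpha_{-\Omega}$. For this deviation to qualify as an improvement for the coalition, each $i \in \Omega$ must satisfy $u_i(\alpha'_{\Omega},\alpha_{-\Omega}) > u_i(\alpha)$. Because $\alpha$ lies in $\mathcal{C}$ and $g_i$ is strictly positive on $\mathcal{A}_i$, we have $u_i(\alpha) = g_i(\alpha_i) > 0$, so the deviation profile must in particular be feasible (otherwise $u_i$ would be zero). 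Combined with strict monotonicity of $g_i$ in its own argument, feasibility of the deviation together with the improvement requirement forces $\alpha'_i > \alpha_i$ for every $i \in \Omega$.

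Next I would sum the coordinates over all users:
\begin{equation*}
\sum_{i\in\mathcal{N}} \alpha'_i \;=\; \sum_{i\in\Omega}\alpha'_i + \sum_{i\notin\Omega}\alpha_i \;>\; \sum_{i\in\Omega}\alpha_i + \sum_{i\notin\Omega}\alpha_i \;=\; \sum_{i\in\mathcal{N}}\alpha_i \;=\; C_{\mathcal{N}},
\end{equation*}
where the strict inequality uses $\alpha'_i > \alpha_i$ for each $i \in \Omega$ (and $\Omega$ nonempty), and the last equality is the defining property of $\mathrm{Face}_{\max}(\mathcal{C})$. But $C_{\mathcal{N}} = \log(1+\sum_{i\in\mathcal{N}}P_i h_i/\sigma_0^2)$ is precisely the grand-coalition capacity bound in the definition of $\mathcal{C}$, taking $\Omega = \mathcal{N}$. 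Hence $(\alpha'_{\Omega},\alpha_{-\Omega}) \notin \mathcal{C}$, which contradicts the feasibility requirement derived above. Therefore no such improving deviation exists, and $\alpha$ is a constrained strong equilibrium.

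There is essentially no obstacle: the result is a one-line consequence of the fact that $\alpha$ already saturates the tightest capacity inequality. The only subtlety worth stating carefully is that the coalitional improvement demand rules out both the ``infeasible'' case (payoff collapses to zero) and the ``feasible but at least one member unchanged or worse'' case, so we may assume without loss that all coalition members strictly increase their rate; the summation argument then immediately contradicts feasibility. This closes the proof and, as already noted in the footnote, Pareto optimality follows as the special case $\Omega = \mathcal{N}$.
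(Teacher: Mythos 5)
Your proof is correct and rests on exactly the same key fact as the paper's: a profile on the maximal face saturates the grand constraint $\sum_{i\in\mathcal{N}}\alpha_i = C_{\mathcal{N}}$, so the coalition members cannot all strictly raise their rates while the joint profile stays in $\mathcal{C}$. You present it as a direct contradiction (all members strictly increase $\Rightarrow$ the total exceeds $C_{\mathcal{N}}$ $\Rightarrow$ infeasible, hence zero payoff), while the paper runs the contrapositive via a two-case analysis on whether the deviants' total saturates the residual capacity $C_{\mathcal{N}}-\sum_{i\notin Dev}\alpha_i$; the underlying argument is the same, and yours is if anything the tidier packaging.
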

\begin{proof}
 We remark that if the rate profile $\alpha$ is not on the maximal face of the capacity region, then $\alpha$ is not resilient to deviation by a single user. Hence, $\alpha$ cannot be a constrained strong equilibrium. This says that a necessary condition for a rate profile to be a strong equilibrium is to be in the subset  $Face_{\max}(\mathcal{C}).$ We now prove that the condition: $\alpha\in Face_{\max}(\mathcal{C})$ is sufficient. Let $\alpha\in Face_{\max}(\mathcal{C}).$ Suppose that $k$ users deviate simultaneously from the rate profile $\alpha.$ Denote by $Dev$ the set of users which deviate simultaneously (eventually by forming a coalition). The rate constraints of the  deviants are
 \begin{enumerate}
\item ${\alpha}'_i\geq 0, \ \forall i\in Dev,$
\item $\sum_{i\in\bar{\Omega}}{\alpha}'_i\leq C_{\bar{\Omega}},\ \forall \bar{\Omega}\subseteq Dev,$
\item $\sum_{i\in \Omega\cap Dev}{\alpha}'_i\leq C_{\Omega}-\sum_{i\in \Omega, i\notin Dev}\alpha_i$, $\ \forall {\Omega}\subseteq \mathcal{N},\ \Omega\cap Dev\neq \emptyset.$
\end{enumerate}
In particular, for $\Omega=\mathcal{N},$ we have $\sum_{i\in Dev}{\alpha'}_i\leq C_{\mathcal{N}}-\sum_{ i\notin Dev}\alpha_i.$ The total rate of the deviants is bounded by
$C_{\mathcal{N}}-\sum_{ i\notin Dev}\alpha_i$, which is not controlled by the deviants. The deviants move to $({\alpha}'_i)_{i\in Dev}$ with  $$\sum_{i\in Dev}{\alpha}'_i <C_{\mathcal{N}}-\sum_{ i\notin Dev}\alpha_i\,.$$ Then, there  exists $i$ such that $\alpha_i>{\alpha}'_i.$  Since $g_i$ is non-decreasing, this implies that $g_i(\alpha_i)>g_i({\alpha}'_i).$ The user $i$ who is a member of the coalition $Dev$ does not improve its payoff.
If the rates of some of the deviants are increased, then the rates of some other users from coalition must decrease. If $({\alpha}'_i)_{i\in Dev}$ satisfies $$\sum_{i\in Dev}{\alpha}'_i =C_{\mathcal{N}}-\sum_{ i\notin Dev}\alpha_i\,,$$ then some users in the coalition $Dev$ have increased their rates compared with $(\alpha_i)_{i\in Dev}$ and some others in $Dev$ have decreased their rates of transmission (because the total rate is the constant $C_{\mathcal{N}}-\sum_{ i\notin Dev}\alpha_i).$ The users in $Dev$ with a lower rate ${\alpha}'_i\leq \alpha_i$ do not benefit by being a member of the coalition (Shapley criterion of membership of coalition does not hold) . And this holds for any $\emptyset\subsetneqq Dev\subseteqq\mathcal{N}.$ This completes the proof.
\end{proof}
\begin{coro} In the constrained rate allocation game, Nash equilibria and strong equilibria in pure strategies coincide.
\end{coro}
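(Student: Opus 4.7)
The plan is to observe that the corollary is an immediate consequence of combining Proposition \ref{ne} with the preceding theorem, so the work reduces to verifying two set inclusions, each of which is already essentially proved.

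First, I would establish the easy inclusion: every constrained strong equilibrium in pure strategies is a Nash equilibrium. This is just the observation (already noted in the footnote preceding the theorem) that a singleton $\{i\}$ is itself a coalition of size one, so resilience to all coalitional deviations implies resilience to unilateral deviations, which is the definition of a Nash equilibrium.

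For the reverse inclusion, I would invoke Proposition \ref{ne} to write the set of pure Nash equilibria explicitly as
\[
NE = \left\{\alpha \;\bigg|\; \alpha_i \geq r_{i,\mathcal{N}},\ \sum_{i\in\mathcal{N}}\alpha_i = C_{\mathcal{N}}\right\},
\]
and then note that this is exactly the description of $Face_{\max}(\mathcal{C})$ appearing in the statement of the preceding theorem. By that theorem, every point of $Face_{\max}(\mathcal{C})$ is a constrained strong equilibrium, so every pure Nash equilibrium is a strong equilibrium.

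Combining the two inclusions yields equality of the two sets. There is no real obstacle here; the only point to be careful about is ensuring that the two characterizations (the Nash set in Proposition \ref{ne} and the maximal face in the theorem) are literally the same subset of $\mathcal{C}$, which is immediate from inspection since $r_{i,\mathcal{N}}$ is precisely the minimum per-user rate on that face and $C_{\mathcal{N}}$ is the sum constraint saturated there. The proof can therefore be written in a couple of lines.
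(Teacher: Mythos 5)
Your proposal is correct and matches the paper's (implicit) argument exactly: the footnote before the theorem supplies the inclusion of strong equilibria in Nash equilibria via size-one coalitions, and the theorem together with Proposition \ref{ne} gives the reverse inclusion since both sets are the maximal face of $\mathcal{C}$. Nothing further is needed.
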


\subsubsection{Constrained Potential Function for Local Interaction}
Introduce the following function:
$$V(\alpha)=\upharpoonleft_{\mathcal{C}}(\alpha) \sum_{i=1}^N g_i(\alpha_i)\,,$$ where $\upharpoonleft_{\mathcal{C}}$ is the indicator function of $\mathcal{C}, i.e.,\ $ $\upharpoonleft_{\mathcal{C}}(\alpha)=1$ if $\alpha\in\mathcal{C}$ and $0$ otherwise.
 The function $V$ satisfies $$ V(\alpha)-V(\beta_i,\alpha_{-i})=g_i(\alpha_i)-g_i(\beta_i),\ \forall \alpha,(\beta_i,\alpha_{-i})\in\mathcal{C} .$$ If $g_i$ is  differentiable, then one has  $$\frac{\partial}{\partial \alpha_i}V(\alpha)= g'_i(\alpha_i)=\frac{\partial}{\partial \alpha_i}u_i$$ in the interior of the capacity region $\mathcal{C}$, and $V$ is a constrained potential function \cite{Zhu08} in pure strategies.

\begin{coro}
The local maximizers of $V$ in $\mathcal{C}$ are pure Nash equilibria. Global  maximizers of $V$ in $\mathcal{C}$ are both constrained strong  equilibria and social optima for the local interaction.
\end{coro}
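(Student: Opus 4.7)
The plan is to exploit the potential-function identity $V(\alpha)-V(\beta_i,\alpha_{-i})=g_i(\alpha_i)-g_i(\beta_i)$ on $\mathcal{C}$, combined with the strict monotonicity of the $g_i$'s and the characterization of Nash equilibria obtained in Proposition~\ref{ne}. The two claims will be treated separately but using the same core argument.

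For the first statement, I would fix a local maximizer $\alpha^{*}$ of $V$ in $\mathcal{C}$ and show that no unilateral deviation of player~$i$ improves $u_i$. For each $i$, the set $S_i(\alpha^{*}_{-i}):=\{\beta_i\in\mathcal{A}_i:(\beta_i,\alpha^{*}_{-i})\in\mathcal{C}\}$ is a closed interval (the intersection of $\mathcal{A}_i$ with the affine half-spaces defining $\mathcal{C}$). For $\beta_i\in S_i(\alpha^{*}_{-i})$ in a neighborhood of $\alpha^{*}_i$, both points lie in $\mathcal{C}$, so the potential identity gives $g_i(\alpha^{*}_i)-g_i(\beta_i)=V(\alpha^{*})-V(\beta_i,\alpha^{*}_{-i})\ge 0$. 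Thus $\alpha^{*}_i$ is a local maximizer of the strictly increasing function $g_i$ on the interval $S_i(\alpha^{*}_{-i})$, forcing $\alpha^{*}_i$ to be the right endpoint and, by monotonicity, the \emph{global} maximizer of $g_i$ on $S_i(\alpha^{*}_{-i})$. For $\beta_i\notin S_i(\alpha^{*}_{-i})$ the payoff $u_i(\beta_i,\alpha^{*}_{-i})$ vanishes by the definition of $u_i$, so it cannot exceed $u_i(\alpha^{*})=g_i(\alpha^{*}_i)\ge 0$. This gives the Nash property.

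For the second statement, let $\alpha^{*}$ be a global maximizer of $V$ in $\mathcal{C}$. Since each $g_i$ is strictly increasing, any feasible point with $\sum_i\alpha_i<C_{\mathcal{N}}$ can be strictly improved by raising some coordinate while remaining in $\mathcal{C}$; hence $\alpha^{*}$ must satisfy $\sum_i\alpha^{*}_i=C_{\mathcal{N}}$. A global maximizer is a fortiori a local maximizer, so by the first part $\alpha^{*}$ is a Nash equilibrium, and Proposition~\ref{ne} then yields $\alpha^{*}_i\ge r_{i,\mathcal{N}}$ for every $i$. Therefore $\alpha^{*}\in Face_{\max}(\mathcal{C})$, and the preceding theorem immediately upgrades $\alpha^{*}$ to a constrained strong equilibrium. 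The social-optimality statement is just the definition of global maximizer applied to $V(\alpha)=\sum_i g_i(\alpha_i)=\sum_i u_i(\alpha)$ on $\mathcal{C}$.

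The only subtle point is the passage from the \emph{local} optimality of $V$ to a \emph{global} best response against $\alpha^{*}_{-i}$; the rest is bookkeeping. The strict monotonicity of $g_i$ on a one-dimensional feasible interval makes this passage essentially automatic, and the dichotomy \emph{inside $\mathcal{C}$} vs.\ \emph{outside $\mathcal{C}$} handles deviations that would violate the coupled constraint. No fixed-point or convexity-of-$V$ argument is needed, which is convenient because $V$ is not concave in general.
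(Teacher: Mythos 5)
Your proof is correct, and since the paper states this corollary without any proof, your argument supplies exactly the intended justification: the potential identity $V(\alpha)-V(\beta_i,\alpha_{-i})=g_i(\alpha_i)-g_i(\beta_i)$ plus strict monotonicity of $g_i$ on the one-dimensional feasible interval $S_i(\alpha^{*}_{-i})$ upgrades local optimality of $V$ to a global best response (with the out-of-$\mathcal{C}$ deviations killed by $u_i=0$), and the second claim follows by routing through Proposition~\ref{ne} and the strong-equilibrium theorem for the maximal face. The one step worth flagging is the assertion that $\sum_i\alpha^{*}_i<C_{\mathcal{N}}$ always permits a feasible coordinate increase; this relies on the submodular (polymatroid) structure of the capacity constraints, but the paper's own proof of Proposition~\ref{ne} makes the identical unargued claim, so you are no worse off than the authors.
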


\subsubsection{Strong Price of Anarchy}
Throughout  this subsection, we assume that the functions $g_i$ are the identity function, i.e.,  $g_i(x)=id(x):=x.$
One metric used to measure how much the performance of decentralized systems is
affected by the selfish behavior of its components is the {\it price of anarchy}. We present a similar price for strong equilibria under the coupled rate constraints. This notion of Price of Anarchy can be seen as an {\it efficiency metric} that measures the {\it price of selfishness} or decentralization and has been extensively used in the context of congestion games or routing games where typically users have to minimize a cost function \cite{BasarZhu2010,Roughgarden2005}. In the context of rate allocation in the multiple access channel, we define an equivalent  measure of price of anarchy  for  rate maximization problems.
One of the advantages of a strong equilibrium
is that it has the potential to reduce the distance between the
optimal solution and the solution obtained as an outcome
of selfish behavior, typically  in the case where the capacity constraint is violated at each time.
Since the constrained rate allocation game has  strong equilibria, we can define the  strong price of anarchy, introduced in \cite{stro}, as the ratio between the payoff of the worst constrained
strong equilibrium and  the  social optimum value which $C_{\mathcal{N}}$.
\begin{subthm}
The strong price of anarchy  of the  constrained rate allocation game is 1 for $g_i(x)=x.$
\end{subthm}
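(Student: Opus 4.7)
The plan is to observe that with $g_i=\mathrm{id}$ the social welfare collapses to $\sum_i \alpha_i$, and then to exploit the earlier structural characterization of strong equilibria, which forces this sum to equal $C_{\mathcal{N}}$ at every strong equilibrium.

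First, I would invoke the preceding theorem: every constrained strong equilibrium lies on $\mathrm{Face}_{\max}(\mathcal{C})$, i.e.\ satisfies $\sum_{i=1}^N \alpha_i = C_{\mathcal{N}}$ together with $\alpha_i \ge r_{i,\mathcal{N}}$. Therefore, for any constrained strong equilibrium $\alpha^{\mathrm{SE}}$,
\begin{equation*}
\sum_{i=1}^N u_i(\alpha^{\mathrm{SE}}) \;=\; \sum_{i=1}^N g_i(\alpha_i^{\mathrm{SE}}) \;=\; \sum_{i=1}^N \alpha_i^{\mathrm{SE}} \;=\; C_{\mathcal{N}}.
\end{equation*}

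Next, I would verify that the social optimum value is exactly $C_{\mathcal{N}}$. The defining inequality of $\mathcal{C}$ for $\Omega=\mathcal{N}$ reads $\sum_{i\in\mathcal{N}}\alpha_i \le C_{\mathcal{N}}$, so the social welfare $\sum_i g_i(\alpha_i) = \sum_i \alpha_i$ is bounded above by $C_{\mathcal{N}}$ on $\mathcal{C}$, and this bound is attained on $\mathrm{Face}_{\max}(\mathcal{C})$ (which is non-empty). Hence $\max_{\alpha\in\mathcal{C}}\sum_i g_i(\alpha_i) = C_{\mathcal{N}}$.

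Combining the two steps, the ratio of the worst constrained strong equilibrium payoff to the social optimum is $C_{\mathcal{N}}/C_{\mathcal{N}}=1$, which is the strong price of anarchy. There is essentially no obstacle here: everything is a direct consequence of the previously established description of $\mathrm{Face}_{\max}(\mathcal{C})$ as the locus of strong equilibria together with the linearity afforded by $g_i(x)=x$. The only point worth flagging explicitly is that the argument relies on the fact that $\mathrm{Face}_{\max}(\mathcal{C})$ is non-empty (so that a strong equilibrium exists and the ratio is well-defined), which is immediate since $r_{i,\mathcal{N}}\le C_{\mathcal{N}}/N$ type considerations, together with the $\Omega=\mathcal{N}$ constraint, show the face is non-degenerate.
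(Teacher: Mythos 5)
Your proof is correct and follows exactly the argument the paper intends: the paper states this theorem without an explicit proof, relying (as you do) on the earlier characterization of strong equilibria as $\mathrm{Face}_{\max}(\mathcal{C})$, where $\sum_i\alpha_i=C_{\mathcal{N}}$ equals the social optimum enforced by the $\Omega=\mathcal{N}$ capacity constraint. You have simply written out the details the paper leaves implicit, including the worthwhile observation that the face is non-empty so the ratio is well-defined.
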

Note that
for $g_i\neq id,$ the CSPoA can be less than one. However, the  optimistic
price of anarchy of the {\it best constrained  equilibrium}, also called {\it price of stability} \cite{psta},  is one for any function $g_i$ i.e the  efficiency of ``best" equilibria is $100\%.$

\subsection{Selection of Pure Equilibria} \label{secselection}
We have shown in the previous sections that our rate allocation game has a continuum of pure Nash equilibria and strong equilibria.  We
address now the problem of selecting one equilibrium which has certain
desirable properties: the normalized pure Nash equilibrium, introduced in \cite{rosen}; see also \cite{corre,cor,ponstein}. We introduce the
problem of constrained maximization
faced by each user  when the other rates are at the maximal face of the polytope $\mathcal{C}$:
\begin{eqnarray}
\max_{\alpha} & u_i(\alpha)\\
 \textrm{s.t. }&\alpha_1+\ldots+\alpha_N=C_{\mathcal{N}}
\end{eqnarray}
 for which the corresponding Lagrangian for user $i$ is given by $$L_i(\alpha,\zeta)=u_i(\alpha)-\zeta_i\left(\sum_{i=1}^N\alpha_i-C_{\mathcal{N}}\right).$$ From Karush-Kuhn-Tucker optimality conditions, it follows that there exists $\zeta\in\mathbb{R}^N$ such that $$ g_i'(\alpha_i)=\zeta_i,\ \sum_{i=1}^N\alpha_i=C_{\mathcal{N}}.$$ For a fixed vector $\zeta$ with identical entries, define the normal form game $\Gamma({\zeta})$ with $N$ users, where actions are taken as rates and the payoffs  given by $L(\alpha,\zeta).$ A normalized equilibrium is an equilibrium of the game $\Gamma(\zeta^*)$ where $\zeta^*$ is normalized into the form ${\zeta^*_i}=\frac{c}{\tau_i},\ c>0,\tau_i>0.$
We now have the following result due to Goodman \cite{cor} which implies Rosen's condition on uniqueness for strict concave games.
\begin{subthm} \label{ret1}
Let $u_i$ be a smooth and strictly concave function in $\alpha_i,$ each $u_i$ be convex in $\alpha_{-i}$, and there exist some $\zeta$ such that the weighted non-negative sum of the payoffs $
\sum_{i=1}^N\zeta_i u_i(\alpha)
$ is concave in $\alpha.$ Then, the matrix $G(\alpha,\zeta)+G^{T}(\alpha,\zeta)$ is  negative definite (which implies uniqueness) where $G(\alpha,\zeta)$ is the Jacobian with respect to $\alpha$ of
 $$h(\alpha,\zeta):=\left[\zeta_1 \nabla_1 u_1(\alpha), \zeta_2 \nabla_2 u_2(\alpha),\ldots,
 \zeta_{N} \nabla_{N} u_{N}(\alpha)
 \right]^T$$
and $G^{T}$ is the transpose of the matrix $G.$
\end{subthm}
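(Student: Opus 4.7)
The plan is to reduce negative definiteness of $G+G^T$ to a pointwise inequality $v^{T}(G+G^{T})v<0$ for every nonzero $v\in\mathbb{R}^{N}$, and to obtain this inequality by decomposing the quadratic form $v^{T}G\,v$ into three pieces, each of which can be signed using exactly one of the three hypotheses.

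First I would write the entries of $G$ explicitly. Since $\alpha_{i}$ is a scalar and $h_{i}(\alpha,\zeta)=\zeta_{i}\,\partial_{i}u_{i}(\alpha)$, one has $G_{ij}=\zeta_{i}\,\partial^{2}_{ij}u_{i}(\alpha)$, so $v^{T}(G+G^{T})v=2v^{T}Gv=2\sum_{k}\zeta_{k}\,v^{T}(\text{row } k \text{ of }H[u_{k}])v_{k}$. Writing $H[u_{k}]$ for the Hessian of $u_{k}$ and splitting the inner sum along the index $k$ versus $-k$ gives
$$
v^{T}H[u_{k}]v \;=\; v_{k}^{2}\,\partial^{2}_{kk}u_{k} \;+\; 2v_{k}\!\!\sum_{j\neq k}\!\! v_{j}\,\partial^{2}_{kj}u_{k} \;+\; v_{-k}^{T}\,H_{-k,-k}[u_{k}]\,v_{-k},
$$
where $H_{-k,-k}[u_{k}]$ is the principal submatrix of $H[u_{k}]$ obtained by deleting row and column $k$.

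Next I would compute the Hessian of the weighted sum $\sigma_{\zeta}(\alpha):=\sum_{k}\zeta_{k}u_{k}(\alpha)$ by summing the above identity against $\zeta_{k}$. A direct bookkeeping then yields the key algebraic decomposition
$$
v^{T}(G+G^{T})v \;=\; v^{T}H[\sigma_{\zeta}]\,v \;+\; \sum_{k}\zeta_{k}\,v_{k}^{2}\,\partial^{2}_{kk}u_{k} \;-\; \sum_{k}\zeta_{k}\,v_{-k}^{T}\,H_{-k,-k}[u_{k}]\,v_{-k}.
$$
This identity is the heart of the proof: the three terms on the right are controlled by the three hypotheses. Concavity of $\sigma_{\zeta}$ in $\alpha$ makes $H[\sigma_{\zeta}]$ negative semidefinite, so the first term is $\le 0$; convexity of each $u_{k}$ in $\alpha_{-k}$ makes each $H_{-k,-k}[u_{k}]$ positive semidefinite, so (with $\zeta_{k}\geq 0$) the third term is $\le 0$; strict concavity of $u_{k}$ in $\alpha_{k}$ gives $\partial^{2}_{kk}u_{k}<0$, so (with $\zeta_{k}>0$) the middle term is strictly negative as soon as some $v_{k}\neq 0$. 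Adding the three bounds yields $v^{T}(G+G^{T})v<0$ for every $v\neq 0$, which is exactly negative definiteness. The standard Rosen argument (differentiating $\tfrac{1}{2}(\alpha-\beta)^{T}[h(\alpha,\zeta)-h(\beta,\zeta)]$ along the segment $\beta+t(\alpha-\beta)$) then promotes this to diagonal strict concavity of $\sigma_{\zeta}$, from which uniqueness of the normalized equilibrium follows.

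The main obstacle I anticipate is the bookkeeping in the algebraic identity above: the cross terms $2v_{k}\sum_{j\neq k}v_{j}\,\partial^{2}_{kj}u_{k}$ must be matched carefully between $v^{T}H[\sigma_{\zeta}]v$ and $2v^{T}Gv$, using symmetry of the Hessian, so that the off-diagonal mixed partials cancel and only the own-variable and opponent-variable quadratic forms survive. A secondary subtlety is that the hypothesis "$\sum_{i}\zeta_{i}u_{i}$ is concave for some $\zeta\geq 0$" really needs to be read as $\zeta_{i}>0$ for every $i$; otherwise the middle term can fail to be strictly negative on vectors $v$ supported where $\zeta_{i}=0$, and one only obtains negative semidefiniteness. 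I would state this positivity assumption explicitly at the outset of the proof.
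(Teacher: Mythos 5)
The paper offers no proof of this theorem: it is stated as a quotation of Goodman's 1980 note \cite{cor}, so the only meaningful comparison is with that source, and your argument is in substance exactly Goodman's. Your key identity is correct: writing $H^{(k)}$ for the Hessian of $u_k$, one has $v^{T}(G+G^{T})v=\sum_k\zeta_k\bigl(2v_k^2H^{(k)}_{kk}+2v_k\sum_{j\neq k}v_jH^{(k)}_{kj}\bigr)$, while $v^{T}H[\sigma_\zeta]v$ contains the same own-variable and cross terms plus $\sum_k\zeta_k\, v_{-k}^{T}H^{(k)}_{-k,-k}v_{-k}$; subtracting gives precisely your decomposition, and the three hypotheses then sign the three pieces as you claim. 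Your remark that one must read the hypothesis as $\zeta_i>0$ for every $i$ is correct, and is consistent with the normalization $\zeta^*_i=c/\tau_i$ with $c,\tau_i>0$ used in the surrounding text. One additional caveat deserves a sentence in your write-up: for a smooth function, strict concavity in $\alpha_k$ only yields $\partial^2_{kk}u_k\leq 0$, with equality possible at isolated points (e.g.\ $-\alpha_k^4$ at the origin), so the pointwise strict negativity of your middle term --- and hence negative \emph{definiteness} rather than mere semidefiniteness of $G+G^{T}$ --- really requires $\partial^2_{kk}u_k<0$. This is a looseness in the theorem statement itself (inherited from the literature) rather than in your strategy, but as written your proof asserts the strict inequality without justification; either strengthen the hypothesis to $\partial^2_{kk}u_k<0$, or obtain strictness through the integrated diagonal-strict-concavity argument you sketch at the end, where strict monotonicity of $\partial_k u_k$ in $\alpha_k$ along the segment can substitute for a pointwise negative second derivative.
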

This now leads to the following corollary for our problem.
\begin{coro}
If $g_i$ are non-decreasing strictly concave functions, then the rate allocation game has a unique normalized equilibrium which corresponds to an equilibrium of the normal form game with payoff $L(\alpha,\zeta^*)$ for some $\zeta^*.$
\end{coro}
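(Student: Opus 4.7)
The plan is to apply Theorem \ref{ret1} (the Goodman–Rosen diagonal strict concavity criterion stated just above) to our payoff functions $u_i(\alpha)=g_i(\alpha_i)$, and then to supplement the uniqueness statement with a short existence argument based on Proposition \ref{ne}.

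First I would verify the hypotheses of Theorem \ref{ret1}. Since $u_i(\alpha)=g_i(\alpha_i)$ depends only on the $i$-th coordinate, strict concavity of $g_i$ immediately gives strict concavity of $u_i$ in $\alpha_i$, and constancy in $\alpha_{-i}$ trivially yields convexity in $\alpha_{-i}$. For the existence of a vector $\zeta$ that makes the weighted sum concave, I would pick any $\zeta=(\zeta_1,\ldots,\zeta_N)$ with $\zeta_i>0$ (for instance $\zeta_i=1$) and note that
$$\sum_{i=1}^{N}\zeta_i u_i(\alpha)=\sum_{i=1}^{N}\zeta_i g_i(\alpha_i)$$
is a positively weighted sum of strictly concave univariate functions applied to distinct coordinates, hence strictly concave on $\prod_i \mathcal{A}_i$. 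Theorem \ref{ret1} then applies and gives negative definiteness of $G(\alpha,\zeta)+G^T(\alpha,\zeta)$, which is Rosen's sufficient condition for uniqueness of the normalized equilibrium.

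Next I would establish existence. A candidate normalized equilibrium must satisfy, by the KKT conditions written earlier, $g_i'(\alpha_i^*)=\zeta_i^*=c/\tau_i$ together with $\sum_i\alpha_i^*=C_{\mathcal{N}}$ and the lower-bound constraints $\alpha_i^*\ge r_{i,\mathcal{N}}$. The cleanest route is to fix positive weights $\tau_i$ and consider the auxiliary problem of maximizing the separable strictly concave function $\sum_i \tau_i g_i(\alpha_i)$ over the compact convex set $\mathrm{Face}_{\max}(\mathcal{C})$ from Proposition \ref{ne}. Strict concavity of the objective and compactness of the feasible set guarantee a unique maximizer $\alpha^*$; the Lagrange multiplier $c^*$ associated with the equality constraint $\sum_i\alpha_i=C_{\mathcal{N}}$ then produces $\zeta_i^*=c^*/\tau_i$, so that $\alpha^*$ is a Nash equilibrium of the normal form game $\Gamma(\zeta^*)$ in the sense defined just before Theorem \ref{ret1}. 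Uniqueness from the previous paragraph then certifies that this is the unique normalized equilibrium.

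The main obstacle I expect is handling the boundary: if some lower bound $\alpha_i\ge r_{i,\mathcal{N}}$ is active at the optimizer, the unconstrained stationarity $g_i'(\alpha_i)=\zeta_i$ must be replaced by complementary slackness, and one has to argue that the resulting pair $(\alpha^*,\zeta^*)$ still qualifies as a normalized equilibrium in the sense of \cite{rosen}. This is a routine KKT bookkeeping argument once one observes that $g_i'$ is strictly decreasing on $\mathcal{A}_i$ (by strict concavity), so the active-set structure is determined monotonically by $c$, and the scalar equation $\sum_i\alpha_i^*(c)=C_{\mathcal{N}}$ has a unique solution. With that in hand, the two paragraphs above combine to give uniqueness and existence, proving the corollary.
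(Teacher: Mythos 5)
Your proposal is correct and follows essentially the same route as the paper, which states this corollary as a direct consequence of Theorem~\ref{ret1}: with $u_i(\alpha)=g_i(\alpha_i)$, strict concavity in $\alpha_i$, (trivial) convexity in $\alpha_{-i}$, and concavity of the weighted sum all hold, so Rosen's diagonal strict concavity condition gives uniqueness. Your additional existence argument (maximizing $\sum_i\tau_i g_i(\alpha_i)$ over $\mathrm{Face}_{\max}(\mathcal{C})$ and reading off the multiplier) is a sensible way to make explicit what the paper leaves implicit via Rosen's existence theorem for normalized equilibria.
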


\subsection{Stability and Dynamics} \label{secdynamics}
In this subsection, we study the stability of equilibria and several classes of evolutionary game dynamics under a symmetric case, i.e., $P_i=P, h_i=h, g_i=g, \mathcal{A}_i=\mathcal{A}, \  \forall i\in\mathcal{N}$. We will drop subscript index $i$ where appropriate. We show that the associated evolutionary game has a unique pure constrained evolutionary stable strategy.
\begin{prop}
The collection of rates $\alpha=\left(\frac{C_{\mathcal{N}}}{N},\ldots,\frac{C_{\mathcal{N}}}{N}\right)\,,$ i.e. the distribution of Dirac concentrated on the rate $\frac{C_{\mathcal{N}}}{N},$ is the unique  symmetric pure Nash equilibrium.
\end{prop}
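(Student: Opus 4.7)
The plan is to derive the claim directly from Proposition \ref{ne} and the symmetry hypothesis, with one small analytic check as the only substantive step.

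First, I would invoke Proposition \ref{ne}: every pure Nash equilibrium $\alpha$ sits on the maximal face of the capacity region, so $\sum_{i=1}^N \alpha_i = C_{\mathcal{N}}$ together with $\alpha_i \geq r_{i,\mathcal{N}}$ for each $i$. In the symmetric setting, all the quantities $r_{i,\mathcal{N}}$ collapse to a common value $r := \log\bigl(1 + \tfrac{Ph}{\sigma_0^2 + (N-1)Ph}\bigr)$, while $C_{\mathcal{N}} = \log(1+N\rho)$ with $\rho := Ph/\sigma_0^2$. Imposing symmetry of the profile by writing $\alpha_i \equiv a$ for all $i$, the equality $Na = C_{\mathcal{N}}$ forces $a = C_{\mathcal{N}}/N$. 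This simultaneously identifies the candidate and shows it is the only symmetric pure profile that can satisfy the Nash conditions, delivering uniqueness.

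Next, I must verify admissibility of this candidate, namely that $C_{\mathcal{N}}/N \geq r$. Using the identity $1 + \tfrac{\rho}{1+(N-1)\rho} = \tfrac{1+N\rho}{1+(N-1)\rho}$, a short rearrangement shows this is equivalent to
$$ N\log\bigl(1 + (N-1)\rho\bigr) \;\geq\; (N-1)\log\bigl(1 + N\rho\bigr).$$
This follows from the elementary fact that $x \mapsto \log(1+x)$ is concave and vanishes at the origin, which implies $x \mapsto \log(1+x)/x$ is non-increasing on $(0,\infty)$; evaluating at $x = (N-1)\rho$ and $x = N\rho$ yields the required inequality after clearing denominators.

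The main obstacle is precisely this feasibility check: the rest of the argument is a one-line specialization of Proposition \ref{ne} to the symmetric parameters, but the symmetric candidate only qualifies as an equilibrium if the equal split $C_{\mathcal{N}}/N$ dominates the per-user rate floor $r$, and that dominance is not obvious from the game-theoretic setup alone—it requires the concavity of $\log(1+\cdot)$ to pin down the ordering between the logarithmic capacity and the logarithmic single-user safe rate.
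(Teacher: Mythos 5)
Your proposal is correct and follows essentially the same route as the paper: specialize Proposition \ref{ne} to the symmetric case, so that $N a = C_{\mathcal{N}}$ pins down $a = C_{\mathcal{N}}/N$ as the only symmetric candidate, and then check feasibility against the per-user floor. The one difference is that the paper simply asserts the bound $r_{\mathcal{N}}\leq C_{\mathcal{N}}/N$, whereas you actually prove it via the monotonicity of $x\mapsto \log(1+x)/x$; that verification is correct and is a worthwhile detail to make explicit.
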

\begin{proof}
Since the constrained rate allocation game is symmetric, there exists a symmetric  (pure or mixed) Nash equilibrium. If such an equilibrium exists in pure strategies, each user transmits with the same rate $r^*.$ It follows from Proposition \ref{ne} of Section \ref{payoffsSection}, and the bound $r_N\leq \frac{C_{\mathcal{N}}}{N}$ that $ r^*$ satisfies $N r^*=C_{\mathcal{N}}$  and $r^*$ is feasible.
\end{proof}

Since the set of feasible actions is convex, we can define convex combination of rates in the set of the feasible rates. For example, $\epsilon \alpha'+(1-\epsilon)\alpha$ is a feasible rate if $\alpha'$ and $\alpha$ are feasible. The symmetric rate profile  $(r, r,\ldots,r)$ is feasible if and only if $0\leq r\leq r^*=\frac{C_{\mathcal{N}}}{N}.$  We say that the rate $r$ is a constrained evolutionarily stable strategy (ESS) if it is feasible and for every {\it mutant strategy} $mut\neq \alpha$ there exists $\epsilon_{mut}>0$ such that
$$\left\{\begin{array}{cc}
r_{\epsilon}:=\epsilon\ mut+(1-\epsilon)r\in \mathcal{C} & \forall \epsilon\in(0,\epsilon_{mut})\\
u(r,r_{\epsilon},\ldots,r_{\epsilon})>u(mut, r_{\epsilon},\ldots,r_{\epsilon}) & \forall \epsilon\in(0,\epsilon_{mut})
\end{array}\right.$$

\begin{thm} The pure strategy $r^*=\frac{C_{\mathcal{N}}}{N}$ is a constrained evolutionary stable strategy.
\end{thm}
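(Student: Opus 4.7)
The plan is to verify the two clauses of the constrained ESS definition directly for the symmetric pure rate $r^*=C_{\mathcal{N}}/N$. Feasibility of $r^*$ is immediate, since $Nr^*=C_{\mathcal{N}}$ places $(r^*,\ldots,r^*)$ on $Face_{\max}(\mathcal{C})$. The key auxiliary observation I will need is that at this symmetric profile only the grand-coalition constraint is tight: for every proper subset $\Omega\subsetneq\mathcal{N}$ of size $k$, one has $kr^*<C_{\Omega}$. This reduces to a single concavity estimate: setting $x_0=Ph/\sigma_0^2$, the function $t\mapsto\log(1+t)/t$ is strictly decreasing on $(0,\infty)$, whence $(k/N)\log(1+Nx_0)<\log(1+kx_0)$.

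Next, I fix a mutant rate $mut\neq r^*$ and write $r_\epsilon=\epsilon\,mut+(1-\epsilon)r^*$. The feasibility premise in the ESS definition---that the symmetric profile with common rate $r_\epsilon$ lies in $\mathcal{C}$---requires $r_\epsilon\leq r^*$, which in turn forces $mut\leq r^*$. For $mut>r^*$ the premise fails for every $\epsilon>0$, so the implication is vacuous and there is nothing to check. Hence only mutants with $mut<r^*$ are relevant; for such a mutant, $r_\epsilon<r^*$ for every $\epsilon\in(0,1]$, and the symmetric profile at common rate $r_\epsilon$ is feasible thanks to the strict slack obtained above.

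It then remains to compare the two payoffs at the perturbed state. For the incumbent, the profile $(r^*,r_\epsilon,\ldots,r_\epsilon)$ lies in $\mathcal{C}$: the grand-coalition sum is $r^*+(N-1)r_\epsilon<Nr^*=C_{\mathcal{N}}$; a sub-coalition $\Omega$ of size $k<N$ containing the focal user has partial sum at most $r^*+(k-1)r_\epsilon<kr^*< C_\Omega$, and one not containing the focal user has partial sum $kr_\epsilon<kr^*< C_\Omega$, so feasibility holds and the incumbent payoff equals $g(r^*)$. The same estimates with $mut$ in place of $r^*$ give $(mut,r_\epsilon,\ldots,r_\epsilon)\in\mathcal{C}$ and mutant payoff $g(mut)$. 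Strict monotonicity of $g$ together with $mut<r^*$ yields $g(r^*)>g(mut)$, which is the required ESS inequality, valid with any choice $\epsilon_{mut}\in(0,1]$.

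The main obstacle, as I see it, is one of interpretation rather than of computation: one has to decide what ``$r_\epsilon\in\mathcal{C}$'' means when the mutation drives the population outside the capacity region. The natural reading in the constrained setting---treating feasibility as a precondition so that destabilising mutants are vacuously ruled out---is what makes the argument go through, and the rest then reduces to the concavity estimate on $\log(1+\cdot)$ and the already-assumed strict monotonicity of $g$.
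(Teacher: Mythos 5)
Your proof is correct and follows the same core argument as the paper: feasibility of the post-mutation symmetric state forces $mut< r^*$ (since $r^*$ is the largest feasible symmetric rate), and strict monotonicity of $g$ then gives the strict payoff inequality. The one place where you go beyond the paper is worth noting: the paper simply writes $u(r^*,r_\epsilon,\ldots,r_\epsilon)>u(mut,r_\epsilon,\ldots,r_\epsilon)$ ``by monotonicity of $g$,'' which tacitly assumes that both profiles $(r^*,r_\epsilon,\ldots,r_\epsilon)$ and $(mut,r_\epsilon,\ldots,r_\epsilon)$ lie in $\mathcal{C}$ so that the payoffs really are $g(r^*)$ and $g(mut)$ rather than zero. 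You verify this explicitly via the concavity estimate $\frac{k}{N}\log(1+Nx_0)<\log(1+kx_0)$ for $k<N$ (showing that only the grand-coalition constraint is tight at the symmetric equilibrium, so all proper-subset constraints have strict slack), together with the bound $r^*+(N-1)r_\epsilon<C_{\mathcal{N}}$. This closes a genuine, if small, gap in the published argument, and your handling of the case $mut>r^*$ as vacuous (the feasibility premise fails for every $\epsilon>0$) is the correct reading of the constrained ESS definition, which the paper expresses only by opening with ``Let $mut\leq r^*$.''
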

\begin{proof} Let $mut\leq r^*$
The rate $\epsilon\ mut+(1-\epsilon)r^*$ is feasible implies that $mut\leq r^*$ (because $r^*$ is the maximum symmetric rate achievable). Since $mut\neq r^*,$ $mut$ is strictly lower than $r^*.$ By monotonicity of the function $g,$ one has $ u(r^*,\epsilon\ mut+(1-\epsilon)r^*)>u(mut,\epsilon\ mut+(1-\epsilon)r^*),\ \forall \epsilon.$ This completes the proof.
\end{proof}

\subsubsection{Symmetric Mixed Strategies}
Define the mixed capacity region $\mathcal{M}(\mathcal{C})$ as the set of measures profile $(\mu_1,\mu_2,\ldots,\mu_N)$ such that $$\int_{\mathbb{R}_{+}^{|\Omega|}}\left(\sum_{i\in \Omega}\alpha_i\right)\prod_{i\in \Omega}\mu_i(d\alpha_i) \leq C_{\Omega},\ \forall \Omega\subseteq 2^{\mathcal{N}}.$$
Then, the payoff of the action $a\in\mathbb{R}_{+}$ satisfying $(a,\lambda,\ldots,\lambda)\in \mathcal{M}(\mathcal{C})$  can be defined as 
\begin{equation}\label{payoffFunctionF} 
F(a,\mu)=\int_{[0,\infty[^{N-1}} u(a,b_2,\ldots, b_N)\ \nu_{N-1}(db)\,,
\end{equation}
where $\nu_k=\bigotimes_{1}^{k} \mu$ is the product measure on $[0,\infty[^{k}.$
The constraint set becomes the set of probability measures on $\mathbb{R}_+$ such that  $$ 0\leq \mathbb{E}(\mu):=\int_{\mathbb{R}_+}\ \alpha_i\ \mu(d\alpha_i)\leq \frac{C_{\mathcal{N}}}{N}<C_{\{1\}}\,.$$

\begin{lem}
The payoff can be obtained as follows:
 $$F(a,\mu)=\upharpoonleft_{[0,C_{\mathcal{N}}-(N-1)\mathbb{E}(\mu)]} \times g(a)\times  \int_{b\in \mathcal{D}_a} \ \nu_{N-1}(db)=\upharpoonleft_{[0,C_{\mathcal{N}}-(m-1)\mathbb{E}(\mu)]}\times g(a)\nu_{N-1}(\mathcal{D}_a),$$
where $ \mathcal{D}_a=\{(b_2,\ldots, b_N)\ |\  (a,b_2,\ldots, b_N)\in \mathcal{C} \}\,.$
\end{lem}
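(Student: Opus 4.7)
The plan is to unwind the definition of $F(a,\mu)$ step by step, substitute the explicit form of the symmetric payoff $u$, and then use the separability of the integrand to factor out $g(a)$ and recognize the remaining factor as $\nu_{N-1}(\mathcal{D}_a)$.

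First, I would recall from Section \ref{payoffsSection} that in the symmetric case
$$u(a,b_2,\ldots,b_N) = \upharpoonleft_{\mathcal{C}}(a,b_2,\ldots,b_N)\, g(a),$$
since user $i$'s payoff depends on the other rates only through the capacity indicator. Substituting into (\ref{payoffFunctionF}) and pulling out $g(a)$, which is independent of the integration variable $b$, yields
$$F(a,\mu) = g(a)\int_{[0,\infty)^{N-1}} \upharpoonleft_{\mathcal{C}}(a,b_2,\ldots,b_N)\, \nu_{N-1}(db).$$
By the very definition of $\mathcal{D}_a$, the indicator $\upharpoonleft_{\mathcal{C}}(a,b_2,\ldots,b_N)$ coincides with $\upharpoonleft_{\mathcal{D}_a}(b_2,\ldots,b_N)$, so the integral collapses to $\nu_{N-1}(\mathcal{D}_a)$. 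This already produces the second of the two equal expressions in the claim.

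To recover the outer indicator $\upharpoonleft_{[0,\,C_{\mathcal{N}}-(N-1)\mathbb{E}(\mu)]}$, I would invoke the requirement that the profile $(a,\mu,\ldots,\mu)$ belongs to the mixed capacity region $\mathcal{M}(\mathcal{C})$ under which $F(a,\mu)$ is defined. Specializing the defining inequality of $\mathcal{M}(\mathcal{C})$ to $\Omega=\mathcal{N}$ with one coordinate taken to be the Dirac at $a$ and the remaining $N-1$ coordinates equal to $\mu$, one obtains
$$a+(N-1)\mathbb{E}(\mu)\;\leq\; C_{\mathcal{N}},$$
which is exactly the condition $a\in[0,\,C_{\mathcal{N}}-(N-1)\mathbb{E}(\mu)]$. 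For values of $a$ outside this interval the feasibility constraint fails and $F$ is taken to be zero by the same convention used when $u$ was defined; inside the interval the indicator is $1$ and the two displayed formulas agree.

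There is no analytic difficulty here; the only point that requires care is the bookkeeping separating two distinct uses of capacity-type constraints: the pointwise constraint $(a,b)\in\mathcal{C}$, which produces the factor $\nu_{N-1}(\mathcal{D}_a)$ via $\mathcal{D}_a$, versus the averaged constraint $(a,\mu,\ldots,\mu)\in\mathcal{M}(\mathcal{C})$, which determines the interval on which $F(a,\mu)$ is non-trivial. Keeping these two roles of the capacity region distinct is the main conceptual step; the rest is a direct application of Fubini-Tonelli and the definition of the product measure $\nu_{N-1}=\bigotimes_{1}^{N-1}\mu$.
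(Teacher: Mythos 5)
Your proposal is correct and follows essentially the same route as the paper's proof: substitute $u(a,b)=\upharpoonleft_{\mathcal{C}}(a,b)\,g(a)$ into the definition of $F$, pull out $g(a)$, and identify the remaining integral as $\nu_{N-1}(\mathcal{D}_a)$. If anything, you are slightly more careful than the paper, which writes down the indicator $\upharpoonleft_{[0,\,C_{\mathcal{N}}-(N-1)\mathbb{E}(\mu)]}$ without comment, whereas you correctly trace it to the $\Omega=\mathcal{N}$ inequality in the definition of the mixed capacity region $\mathcal{M}(\mathcal{C})$.
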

\begin{proof} If the rate does not satisfy the capacity constraints, then the payoff is $0.$ Hence the {\it rational} rate for user $i$ is lower than $C_{\{i\}}.$ Fix a rate $a\in[0,C_{\{i\}}].$
Let $D^a_\Omega:=C_\Omega-a\delta_{\{1\in \Omega\}}.$ Then, a necessary condition to have a non-zero payoff is  $(b_2,\ldots, b_N) \in \mathcal{D}_a\,,$ where $\mathcal{D}_a=\{(b_2,\ldots, b_N)\in \mathbb{R}_{+}^{N-1},\ \sum_{i\in \Omega,i\neq 1}b_i\leq D^a_\Omega,\ \Omega\subseteq 2^{\mathcal{N}} \}.$
Thus, we have
\begin{eqnarray} \nonumber F(a,\mu)&=&\int_{\mathbb{R}_{+}^{N-1}} u(a,b_2,\ldots, b_N)\ \nu_{N-1}(db)\\ \nonumber & =&\int_{b\in \mathbb{R}_{+}^{N-1},\ (a,b)\in\mathcal{C}} g(a)\ \nu_{N-1}(db)\\
\nonumber &=& \upharpoonleft_{[0,C_{\mathcal{N}}-(N-1)\mathbb{E}(\mu)]} g(a) 
\times \int_{b\in \mathcal{D}_a} \ \nu_{N-1}(db)
\end{eqnarray}
\end{proof}

   \subsubsection{Constrained Evolutionary Game Dynamics}
   The class of evolutionary games in large population provides a simple
framework for describing strategic interactions among large numbers of users.  In this subsection we  turn to modeling the behavior of the users who play them.
Traditionally, predictions of behavior in game theory are based on some notion of equilibrium,
typically Cournot equilibrium, Bertrand equilibrium, Nash equilibrium, Stackelberg solution, Wardrop equilibrium or some refinement thereof. These notions require the assumption of equilibrium knowledge, which posits that each user correctly anticipates
how his opponents will act. The equilibrium knowledge assumption is too strong and is difficult
to justify in particular in contexts with large numbers of users.
As an alternative to the equilibrium approach, we propose an explicitly dynamic
 updating choice, a procedure in which users myopically update their behavior in response to
their current strategic environment. This dynamic procedure does not assume the automatic
coordination of users' actions and beliefs, and it can derive many specifications of users'
choice procedures.
These procedures are specified formally by defining a revision of rates called {\it revision
protocol} \cite{wbillbook}.  A revision
protocol takes current payoffs  and current mean rate and maps to conditional switch rates which describe how frequently users in some class playing rate $\alpha$  who
are considering switching rates switch to strategy $\alpha'.$ Revision protocols are flexible enough to
incorporate a wide variety of paradigms, including ones based on imitation, adaptation, learning, optimization, etc.

 We use here a class of continuous evolutionary dynamics.  We refer to \cite{wiopt,gamecomm,mass} for evolutionary game dynamics with or  without time delays. The continuous-time evolutionary game
dynamics on the measure space $(\mathcal{A}, \mathcal{B}(\mathcal{A}),\mu)$ is  given by
 \begin{equation} \dot{\lambda}_t(E)= \int_{a\in E}V(a,\lambda_t) \mu(da) \end{equation}
where $$V(a,\lambda_t)=K\left[\int_{x\in \mathcal{A}}\beta^x_a(\lambda_t)\lambda_t(dx)-\int_{x\in \mathcal{A}}\beta^a_x(\lambda_t) \lambda_t(dx)\right],$$ and $\beta^x_a$ represents the rate of mutation from $x$ to $a,$ and $K$ is a growth parameter.  $\beta^x_a(\lambda_t)=0$ if $(x,\lambda_t)$ or $(a,\lambda_t)$ is not in the (mixed) capacity region, $E$ is a $\mu-$measurable subset of $ \mathcal{A}.$
At each time $t,$ probability measure $\lambda_t$ satisfies
$\frac{d}{dt}\lambda_t(\mathcal{A})=0$.
\medskip
We examine the following classes of evolutionary game dynamics, namely, Brown-von Neumann-Nash dynamics, Smith dynamics and replicator dynamics, where $F(a, \lambda_t)$ is the payoff in (\ref{payoffFunctionF}) as defined in the previous subsection.
\begin{enumerate}[RD-1:]
\item{ Constrained Brown-von Neumann-Nash dynamics.}
$$ \beta^x_a(\lambda_t)=\left\{\begin{array}{cl} \max(F(a,\lambda_t)-\int_x F(x,\lambda_t)\ dx, 0) & \mbox{if}\ (a,\lambda_t),\ (x,\lambda_t)\in \mathcal{M}(\mathcal{C}), \\ 0 \ & \mbox{otherwise.}
\end{array}\right.
$$
%
%
\item{ Constrained Replicator Dynamics.}
$$\beta^x_a(\lambda_t)=\left\{\begin{array}{cl} \max(F(a,\lambda_t)-F(x,\lambda_t), 0) & \mbox{if}\ (a,\lambda_t),\ (x,\lambda_t)\in \mathcal{M}(\mathcal{C})\\ 0 \ & \mbox{otherwise.}\end{array}\right.$$

\item{ Constrained $\theta-$Smith Dynamics.}
$$ \beta^x_a(\lambda_t)=\left\{\begin{array}{cl} \max(F(a,\lambda_t)-F(x,\lambda_t), 0)^{\theta}  &\mbox{if}\ (a,\lambda_t),\ (x,\lambda_t)\in \mathcal{M}(\mathcal{C})\\ 0 \ &\mbox{otherwise.}\end{array}\right.,\ \theta\geq 1$$
\medskip
\end{enumerate}
 A common property that applies to all these dynamics is that  the set of Nash equilibria is a subset of rest points (stationary points) of the evolutionary game dynamics. Here we extend the concepts of these dynamics to evolutionary games with a continuum action space and coupled constraints, and more than two-users interactions. The counterparts of these results in discrete action space can be found in \cite{hofbauer,wbillbook}.

\begin{thm} Any  Nash equilibrium of the game is a rest point of the following evolutionary game dynamics: constrained Brown-von Neumann-Nash, generalized Smith dynamics, and replicator dynamics. In particular, the evolutionary stable strategies set  is a subset of the rest points of these constrained evolutionary game dynamics.
\end{thm}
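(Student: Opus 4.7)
The plan is to exploit the indifference principle characterizing the Nash equilibria of the symmetric mixed extension. Namely, if $\lambda^*$ is a symmetric Nash equilibrium, then the expected payoff $a\mapsto F(a,\lambda^*)$ attains the common value $F^*:=\int_{\mathcal{A}} F(x,\lambda^*)\,\lambda^*(dx)$ for $\lambda^*$-almost every $a$ in the support of $\lambda^*$, and $F(a,\lambda^*)\le F^*$ for every feasible $a$. Indeed, any $a\in\mathrm{supp}(\lambda^*)$ must be a best reply against $\lambda^*$ (otherwise a representative sender could strictly improve by reallocating probability mass toward higher-payoff actions), while no feasible deviation can yield more than the equilibrium payoff. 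This characterization is the only property of $\lambda^*$ that enters the rest of the argument.

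I would then substitute $\lambda_t=\lambda^*$ into
\[
V(a,\lambda^*)=K\left[\int_{\mathcal{A}}\beta^x_a(\lambda^*)\,\lambda^*(dx)-\int_{\mathcal{A}}\beta^a_x(\lambda^*)\,\lambda^*(dx)\right]
\]
and check that the integrands vanish. Since both integrals are taken against $\lambda^*$, only $x\in\mathrm{supp}(\lambda^*)$ contributes, and on this set $F(x,\lambda^*)=F^*$. For the constrained replicator dynamics, $\beta^x_a(\lambda^*)=\max(F(a,\lambda^*)-F^*,0)=0$ since $F(a,\lambda^*)\le F^*$, so the incoming flow vanishes; the outgoing term $\beta^a_x(\lambda^*)=\max(F^*-F(a,\lambda^*),0)$ also vanishes whenever $a\in\mathrm{supp}(\lambda^*)$, since there $F(a,\lambda^*)=F^*$. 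The constrained $\theta$-Smith dynamics is treated identically, because $[\max(\cdot,0)]^{\theta}$ is zero exactly when the underlying positive part is. For the constrained Brown--von Neumann--Nash dynamics, $\beta^x_a(\lambda^*)=\max(F(a,\lambda^*)-\bar F(\lambda^*),0)$ with $\bar F(\lambda^*)=F^*$ vanishes everywhere by the feasibility inequality $F(a,\lambda^*)\le F^*$, so both incoming and outgoing integrands are identically zero on $\mathcal{A}$. In each case $V(a,\lambda^*)=0$ on $\mathrm{supp}(\lambda^*)$, whence $\dot\lambda_t(E)\big|_{\lambda_t=\lambda^*}=0$ for every $\mu$-measurable $E$, which is the rest-point property.

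The claim about evolutionarily stable strategies is then immediate: by definition, every constrained ESS is in particular a symmetric Nash equilibrium of the mixed extension, so the previous step yields $\mathrm{ESS}\subseteq\mathrm{NE}\subseteq\{\text{rest points}\}$, and applying it to the pure ESS $r^*=C_{\mathcal{N}}/N$ identified above recovers the symmetric Dirac equilibrium as a rest point of all three dynamics. The main delicate point is the bookkeeping for rates $a\notin\mathrm{supp}(\lambda^*)$ under the replicator and Smith dynamics: the outflow $\max(F^*-F(a,\lambda^*),0)$ may be strictly positive there. This is harmless because such $a$ carry no mass under $\lambda^*$, so the outflow must be interpreted as weighted by $\lambda^*(\{a\})=0$ in the measure-valued formulation; combined with the mass-conservation identity $\tfrac{d}{dt}\lambda_t(\mathcal{A})=0$ already noted in the paper, this yields a clean verification that the evolution starting at $\lambda^*$ does not move. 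Making this measure-theoretic convention explicit, so as to respect the constraint $(a,\lambda_t)\in\mathcal{M}(\mathcal{C})$ off the support, is the only step requiring care; the equalization of payoffs on the support does all the remaining work.
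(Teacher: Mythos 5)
Your proof is correct and follows essentially the same route as the paper's: invoke the indifference/best-reply characterization of a (symmetric) Nash equilibrium, substitute into each revision protocol, and observe that both the inflow and outflow integrands vanish on the support, so $V(a,\lambda^*)=0$ there. You are in fact more careful than the paper on the bookkeeping for actions off the support (the paper only checks $V(a,\lambda)=0$ for $a$ in the support and leaves the weighting convention implicit), so no gap.
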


\begin{proof}
It is clear for pure equilibria by using the revision protocols $\beta$ of these dynamics. Let $\lambda$ be an equilibrium. For any rate $a$ in the support of $\lambda,$ $\beta^a_x=0$ if $F(x,\lambda)\leq F(a,\lambda).$ Thus, if $\lambda$ is an equilibrium, the difference between the microscopic inflow and outflow is $V(a,\lambda)=0$, given that $a$ is the support of the measure $\lambda.$
\end{proof}

Let $\lambda$ be a  finite Borel measure on $[0,C_{\{i\}}]$ with full support. Suppose  $g$ is continuous on $[0,C_{\{i\}}].$ Then,
$\lambda$ is a rest point of the BNN dynamics if and only if $\lambda$ is a symmetric
Nash equilibrium.
Note that the choice of topology is an important issue when defining dynamics
convergence and stability of the dynamics. The most used in this area is the topology of the weak
convergence to measure closeness of two states of the system. Different
distances (Prohorov metric, metric on bounded and Lipschitz continuous
functions on $\mathcal{A}$) have been proposed. We refer the reader
to \cite{pierre}, and the references therein for more
details on {\it evolutionary robust strategy} and stability notions.

\subsection{Correlated Equilibrium}\label{CESection}

 In this subsection,  we analyze constrained correlated equilibria of multiple access (MAC) games. Building upon the signaling in the one-shot game, we formulate a system of evolutionary MAC games with evolutionary evolutionary game dynamics that describe the evolution of signaling, beliefs, rate control and channel selection, respectively.


We focus on correlated equilibrium in the single-receiver case. Correlated strategies are based on signaling structures before making decisions on rates. Different scenarios (with or without mediator, virtual mediator, cryptographic multi-stage signaling structure) have been proposed  in the literature \cite{forges,DHR00, forges2,forges3}.

\begin{figure}[htb]
\begin{center}
  \includegraphics[scale=0.5]{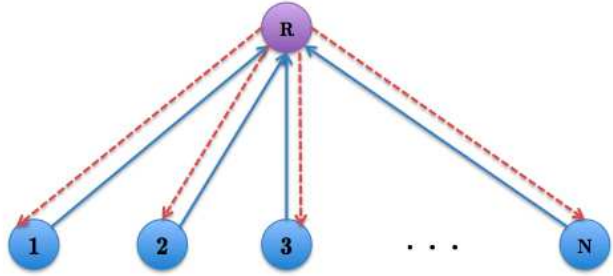}
  \caption{Signaling between multiple senders and a receiver}
  \label{CE}
\end{center}
\end{figure}

 In Figure \ref{CE}, we illustrate the signaling between multiple transmitters and one receiver. The receiver can act as a signaling device to mediate the behaviors of the transmitters. The correlated equilibrium has a strong connection with cryptography in that the private signal sent to the users can be realized by the coding and decoding in the network \cite{DHR00}. 

Let $\mathcal{B}$ be the set of  signals $\beta=[\beta_i,\beta_{-i}] \in  \mathbb{R}^{N}.$  The  values $\beta$ from the set of signals need to be in the feasible set $\mathcal{C}\subset \mathbb{R}^{N}$. Let $\mu\in \Delta{\mathcal{B}}$ be a probability measure over the set $\mathcal{B}.$ A constrained correlated equilibrium (CCE) $\mu^*$ need to satisfy the following set of inequalities,
\begin{eqnarray}
\int d\mu^*(\beta_i,\beta_{-i})\left[u_i(\alpha_i, \alpha_{-i}, \mid \beta_i)-u_i(\alpha_i', \alpha_{-i} \mid \beta_i)\right]\geq 0,  \forall i\in\mathcal{N}, \alpha'_i\in\mathcal{A}_i(\alpha_{-i}).\nonumber
\end{eqnarray}
Define a rule of assignment of user $i$ as a map its signals to its  action's set $\bar{r}_i:\ \beta_i \ \longmapsto \alpha_i.$ A CCE is then characterized by
\begin{eqnarray}
\int d\mu^*(\beta)\left[u_i(\alpha_i, \alpha_{-i}  \mid \beta_i)-u_i(r_i(\beta_i), \alpha_{-i} )\right]\geq 0, \forall i\in\mathcal{N}, \forall r_i \ \mbox{such that}\  \bar{r}_i(.)\in\mathcal{A}_i(\alpha_{-i}).
\end{eqnarray}

\begin{thm} The set of constrained pure Nash equilibria of the MISO game is given by
$$
\mbox{max-face}(\mathcal{C})=\left\{(\alpha_1,\ldots,\alpha_N)\ | \ \alpha_i\geq 0,\ \sum_{k\in\mathcal{N}}\alpha_k=C_{\mathcal{N}}\right\}$$
\end{thm}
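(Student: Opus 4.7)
The plan is to establish the claimed equality via two inclusions, essentially recycling the argument used for Proposition \ref{ne} and specializing it to the single-receiver (MISO) setting already analyzed in Section \ref{payoffsSection}.

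First I would show that every rate profile $\alpha\in\mbox{max-face}(\mathcal{C})$ is a pure Nash equilibrium. Fix a user $i$ and any feasible deviation $\alpha_i'\in\mathcal{A}_i(\alpha_{-i})$. Because $(\alpha_i',\alpha_{-i})$ must lie in $\mathcal{C}$, it satisfies in particular the grand-coalition inequality $\alpha_i'+\sum_{j\neq i}\alpha_j\leq C_\mathcal{N}$; combined with the defining equality $\sum_j\alpha_j=C_\mathcal{N}$, this forces $\alpha_i'\leq\alpha_i$. Strict monotonicity of $g_i$ then makes $\alpha_i$ the unique best response, so $\alpha$ is in fact a strict pure Nash equilibrium.

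Next I would prove the reverse inclusion. Let $\alpha$ be any pure Nash equilibrium and suppose, for contradiction, that $\sum_i\alpha_i<C_\mathcal{N}$. By Lemma \ref{lembr1}, the best response satisfies $\overline{BR}_i(\alpha_{-i})=\max(r_{i,\mathcal{N}},\min_{\Omega\in\Gamma_i}\{C_\Omega-\sum_{k\in\Omega\setminus\{i\}}\alpha_k\})$, and the term corresponding to $\Omega=\mathcal{N}$ equals $C_\mathcal{N}-\sum_{k\neq i}\alpha_k>\alpha_i$. Using the polymatroid (submodular) structure of the AWGN capacity region, the strict slack in the grand-coalition constraint implies the existence of at least one user $i$ for whom no proper subset $\Omega\subsetneq\mathcal{N}$ containing $i$ is binding, so the minimum is attained at $\Omega=\mathcal{N}$ and is strictly larger than $\alpha_i$. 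Hence user $i$ can strictly increase her rate while remaining feasible and, by strict monotonicity of $g_i$, strictly improve her payoff, contradicting the equilibrium property. Combined with non-negativity $\alpha_i\geq 0$, this yields the claimed characterization.

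A final cleanup step is a consistency check with Proposition \ref{ne}, whose lower bound was written as $\alpha_i\geq r_{i,\mathcal{N}}$ rather than $\alpha_i\geq 0$. These formulations agree on the maximal face: the constraint $\sum_{j\neq i}\alpha_j\leq C_{\mathcal{N}\setminus\{i\}}$ together with $\sum_j\alpha_j=C_\mathcal{N}$ automatically yields $\alpha_i\geq C_\mathcal{N}-C_{\mathcal{N}\setminus\{i\}}=r_{i,\mathcal{N}}$, so the two descriptions cut out the same polyhedral face of $\mathcal{C}$. The main potential obstacle is the second inclusion, specifically the step that extracts a user with no binding inner constraint when the grand-coalition constraint is slack; this is settled by the submodularity of $C_\Omega$, but must be invoked carefully rather than taken for granted.
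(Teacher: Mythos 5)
Your proof is correct and follows essentially the same route as the paper, which states this theorem without a separate argument and relies on Proposition \ref{ne} and the best-response formula of Lemma \ref{lembr1}: max-face points are strict equilibria because the grand-coalition constraint pins down each $\alpha_i$, and interior points admit a profitable unilateral increase. Your version is in fact tighter than the paper's on two counts: the paper merely asserts that some user can raise its rate when $\sum_i\alpha_i<C_{\mathcal{N}}$, whereas you justify this via the submodularity of $\Omega\mapsto C_\Omega$ (tight sets are closed under union, so if every user lay in a tight set the grand coalition would be tight), and you correctly reconcile the theorem's lower bound $\alpha_i\geq 0$ with Proposition \ref{ne}'s $\alpha_i\geq r_{i,\mathcal{N}}$ through the identity $C_{\mathcal{N}}-C_{\mathcal{N}\setminus\{i\}}=r_{i,\mathcal{N}}$, a discrepancy the paper leaves unaddressed.
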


We can characterize the CCE  using the above results as follows. 
\begin{lem}
Any mixture of constrained pure Nash equilibria of the MISO game  is a constrained correlated equilibrium.
\end{lem}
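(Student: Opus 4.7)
The plan is to unpack the definitions and observe that if the recommended profile is drawn from a distribution supported on the max-face, then, realization by realization, the recommendation to each player is already a best reply given what the others are told to do; the CCE inequality then follows by integration.

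Concretely, let $\mu^*$ be any probability measure supported on $\mbox{max-face}(\mathcal{C})$. I would first fix a single realization $\beta=(\beta_1,\ldots,\beta_N)$ in the support of $\mu^*$. By the theorem cited just above, $\beta$ is a pure Nash equilibrium, so by Lemma~\ref{lembr1} (or directly by definition of NE), the recommendation $\beta_i$ to player $i$ lies in $\overline{BR}_i(\beta_{-i})$. Interpreting the signal as the recommendation (so $\beta_i$ is what player $i$ is told to play), this gives, for every feasible deviation rule $r_i$ with $r_i(\beta_i)\in\mathcal{A}_i(\beta_{-i})$,
\begin{equation}
u_i(\beta_i,\beta_{-i}) \;\geq\; u_i(r_i(\beta_i),\beta_{-i}).
\end{equation}

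Next I would integrate this pointwise inequality against $\mu^*$. Since $\mu^*$ is a probability measure and the inequality holds for $\mu^*$-almost every $\beta$, monotonicity of the Lebesgue integral yields
\begin{equation}
\int d\mu^*(\beta)\bigl[u_i(\beta_i,\beta_{-i}) - u_i(r_i(\beta_i),\beta_{-i})\bigr] \;\geq\; 0,
\end{equation}
which is precisely the CCE condition stated in the subsection (with $\alpha_i=\beta_i$ playing the role of the recommended action conditional on the signal $\beta_i$). Feasibility of the realized profile $(\beta_i,\beta_{-i})\in\mathcal{C}$ is automatic because $\mu^*$ is supported on $\mbox{max-face}(\mathcal{C})\subset\mathcal{C}$, so the payoffs are given by $g_i$ rather than being zeroed out by the indicator $\upharpoonleft_{\mathcal{C}}$.

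The only real subtlety, and the place I would be most careful, is the treatment of the deviation side: the profile $(r_i(\beta_i),\beta_{-i})$ is the relevant one on the right-hand side, and the CCE definition in the paper restricts attention to rules $r_i$ for which this profile is feasible (i.e.\ $r_i(\beta_i)\in\mathcal{A}_i(\beta_{-i})$). For such admissible $r_i$ the pointwise NE best-reply inequality applies directly; for inadmissible deviations the deviation payoff is zero by the indicator in the definition of $u_i$, while the equilibrium payoff $u_i(\beta_i,\beta_{-i})=g_i(\beta_i)\geq 0$, so the inequality still holds. Thus in both cases the CCE inequality survives integration, and $\mu^*$ is a constrained correlated equilibrium.
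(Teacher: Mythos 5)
Your proof is correct and follows essentially the same route as the paper's: observe that each realization $\beta$ in the support of $\mu^*$ is a pure Nash equilibrium on the maximal face, so the best-reply inequality holds pointwise for every admissible deviation rule, and integrate against $\mu^*$. Your explicit handling of infeasible deviations (zero payoff via the indicator) and of general, possibly non-atomic, supports is slightly more careful than the paper's one-line version, but it is the same argument.
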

Note that the set of constrained correlated equilibria is bigger than the set of constrained Nash equilibria. For example, in a two-user case, the distribution $\frac{1}{2}\delta_{(r_1,C_{\{1,2\}}-r_1)}+\frac{1}{2}\delta_{(C_{\{1,2\}}-r_2,r_2)}$ is different than the Dirac distribution $\delta_{(\frac{r_1+C_{\{1,2\}}-r_2}{2},\frac{r_2+C_{\{1,2\}}-r_1}{2})}.$

\begin{proof}
Let $\mu$ be a probability distribution over some constrained pure equilibria. Then, $\mu\in \Delta(\mbox{max-face}(\mathcal{C})).$
For any $\beta$ such that $\mu(\beta)\neq 0,$ one has $$  \left[u_i(\alpha_i, \alpha_{-i}  \mid \beta_i)-u_i(\bar{r}_i(\beta_i), \alpha_{-i} )\right]\geq 0$$ for any measurable function $\bar{r}_i:\ [0,C_{\{i\}}]\longrightarrow [0,C_{\{i\}}].$ Thus, $\mu$ is a constrained correlated equilibrium.

\end{proof}

\begin{coro}
Any convex combination  of extreme point of the convex compact set $$
\mbox{max-face}(\mathcal{C})=\left\{ \alpha=(\alpha_1,\ldots,\alpha_N)\ | \ \alpha_i\geq r_i,\ \sum_{k\in\mathcal{N}}\alpha_k=C_{\mathcal{N}}\right\}
$$
is a constrained correlated equilibrium. Moreover any probability distribution over the maximal face of the capacity region $\mbox{max-face}(\mathcal{C})$ is a correlated constrained equilibrium distribution.
\end{coro}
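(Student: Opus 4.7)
The plan is to reduce both assertions directly to the preceding lemma, which says that any mixture supported on constrained pure Nash equilibria is a constrained correlated equilibrium. By the theorem stated just before the lemma, the set of constrained pure Nash equilibria of the MISO game coincides with $\mbox{max-face}(\mathcal{C})$; hence any point of $\mbox{max-face}(\mathcal{C})$ is a pure Nash equilibrium, and any probability measure whose support lies in $\mbox{max-face}(\mathcal{C})$ is in particular a mixture of pure Nash equilibria.

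For the first claim, I would read ``convex combination of extreme points'' as a finite convex combination $\mu=\sum_{k=1}^{K}\lambda_k\delta_{e_k}$, where $\{e_k\}$ are extreme points of the compact convex set $\mbox{max-face}(\mathcal{C})$ and $\lambda_k\geq 0$ with $\sum_k\lambda_k=1$. Each $e_k$ lies in $\mbox{max-face}(\mathcal{C})$, hence is a pure Nash equilibrium by the theorem; applying the lemma to the mixture $\mu$ yields the CCE property directly.

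For the second (stronger) claim, let $\mu$ be any probability distribution on $\mbox{max-face}(\mathcal{C})$. For $\mu$-almost every $\beta\in\mbox{max-face}(\mathcal{C})$ and every measurable deviation rule $\bar{r}_i$ whose range lies in the coupled action set $\mathcal{A}_i(\alpha_{-i})$, the Nash inequality $u_i(\alpha_i,\alpha_{-i}\mid\beta_i)-u_i(\bar{r}_i(\beta_i),\alpha_{-i})\geq 0$ holds pointwise, exactly as in the proof of the lemma. Integrating this nonnegative integrand against $d\mu(\beta)$ preserves the sign, giving
$$\int d\mu(\beta)\bigl[u_i(\alpha_i,\alpha_{-i}\mid\beta_i)-u_i(\bar{r}_i(\beta_i),\alpha_{-i})\bigr]\geq 0,$$
which is the definition of a CCE. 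The first claim is then recovered as a special case.

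The only mild obstacle is an interpretational one rather than a technical one: making precise that ``convex combination'' refers to a convex combination of Dirac measures at extreme points (not to the resulting point in $\mbox{max-face}(\mathcal{C})$), and noting via the Krein--Milman theorem that such finite mixtures are weak-$\ast$ dense in the set of probability measures on the compact convex set $\mbox{max-face}(\mathcal{C})$, so that the two statements are consistent and the second subsumes the first.
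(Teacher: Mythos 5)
Your proposal is correct and follows essentially the same route the paper intends: the corollary is an immediate consequence of the preceding theorem (identifying the pure Nash set with $\mbox{max-face}(\mathcal{C})$) and the lemma (mixtures of constrained pure equilibria are CCE), whose proof you reproduce by noting the pointwise nonnegativity of the deviation integrand and that integration against $\mu$ preserves the sign. Your replacement of the paper's ``$\mu(\beta)\neq 0$'' phrasing by a ``$\mu$-almost every $\beta$'' statement is a small but genuine improvement, since it is what makes the second assertion valid for non-atomic distributions on the face.
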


\section{Hybrid AWGN Multiple Access Control}

In this section, we extend the single receiver case to one with multiple receivers. Multi-input and multi-output (MIMO) channel access game has been studied in the context of power allocation and control. For instance,  the authors in \cite{cioffi2003} formulate a two-player zero-sum game where the first player is the group of transmitters and the second one is the set of MIMO sub-channels. In \cite{belmega2010}, the authors formulate an $N$-person non-cooperative power allocation game and study its equilibrium under two different decoding schemes.

Here, we formulate a hybrid multiple access game
where users are allowed to select their rates and channels under capacity constraints. We first obtain general results on the existence of the equilibrium and methods to characterize it. In addition,  we investigate long-term behavior of the strategies and apply evolutionary game dynamics to both rates and channel selection probabilities. We show that G-function based dynamics is appropriate for our hybrid model by viewing the channel selection probabilities as strategies that determine of fitness of rate selection. Using the generalized Smith dynamics for channel selection, we are able to build an overall hybrid evolutionary dynamics for
the static model. Based on simulations, we confirm  the validity of these proposed dynamics and the correspondence between the rest point of the dynamics and the Nash equilibrium.


\subsection{Hybrid Model with Rate Control and Channel Selection}\label{HybridRateModel}

In this subsection, we establish a model for multiple users and multiple receivers.  Each user needs to decide the rate at which to transmit  and the channel to pick. We formulate a game $\overline{\Xi}=\langle\mathcal{N}, (\mathcal{A}_i)_{i\in\mathcal{N}}, (\overline{U}_i)_{i\in\mathcal{N}}\rangle$, in which the decision variable is $(\alpha_i,\mathbf{p}_{i})$, and $\mathbf{p}_i=[p_{ij}]_{j\in\mathcal{J}}$ is a $J$-dimensional vector, where $p_{ij}$ is the probability of user $i\in\mathcal{N}$ to choose channel $j\in\mathcal{J}$ and $p_{ij}$ needs to satisfy the probability measure constraints
\begin{equation}\label{probconstraint}
\sum_{j\in\mathcal{J}}p_{ij}=1, p_{ij}\geq 0, \forall i\in\mathcal{N}
\end{equation}
The game $\overline{\Xi}$ is {\it asymmetric} in the sense that the strategy sets of users are  different and the payoffs are not symmetric. 

Let $C_{j,\Omega}:=\log\left(1+\sum_{i\in \Omega}\frac{P_{ij} h_{ij}}{\sigma^2_0}\right)$ be the capacity for a subset $\Omega\subseteq\mathcal{N}$ of users at receiver $j\in\mathcal{J}$ and $r_{ij,\Omega}:=\log\left(1+\frac{P_{ij}h_{ij}}{\sigma_0^2+\sum_{i'\in \Omega,i'\neq i} P_{i'j}h_{i'j}}\right)$  the  bound rate  of a user $i$ when the signals of the $|\Omega|-1$ other users are treated as  noise at receiver $j$. Each receiver $j$ has a  capacity region $\mathcal{C}(j)$ given by
\begin{eqnarray}\label{receiverConstraint}
 \mathcal{C}(j)=\left\{(\alpha,\mathbf{p}_{j})\in [0,1]^{N}\times \mathbb{R}_+^{N} \ \bigg| \   
\sum_{i\in \mathcal{N}}\alpha_ip_{ij}\leq  C_{j,\Omega} ,
\forall \  \emptyset\subset \Omega_j \subseteq \mathcal{N},\ j\in \mathcal{J}\right\},
\end{eqnarray}
The expected payoff function $\overline{U}_i:\prod_{i=1}^N\mathcal{A}_i\longrightarrow \mathbb{R}_+$ of the game is given by
\begin{eqnarray}\label{UtilitySection3}
\overline{U}_i(\alpha_i, \mathbf{p}_i, \alpha_{-i},\mathbf{p}_{-i})=\mathbb{E}_{\mathbf{p}_i}[u_{ij}(\alpha,\mathbf{P})]=\sum_{j\in \mathcal{J}}p_{ij}u_{ij}(\alpha, \mathbf{P}),
\end{eqnarray}
where $\alpha =(\alpha_i,\alpha_{-i})\in\mathbb{R}^N_+$ and $\mathbf{P}\in[0,1]^{N\times J}=(\mathbf{p}_i,\mathbf{p}_{-i})$, with $\mathbf{p}_i\in[0,1]^J,\mathbf{p}_{-i}\in[0,1]^{(N-1) \times J}$. Assume that the utility $u_{ij}$ of a user $i$ transmitting to receiver $j$  is only dependent on the user himself and is described by  a positive and strictly increasing function $g_{i}:\mathbb{R}_+\rightarrow\mathbb{R}_+$, i.e., $u_{ij}=g_i, \forall j\in\mathcal{J},$ when capacity constraints are satisfied.

With the presence of coupled constraints (\ref{receiverConstraint}) from each receiver and probability measure constraint (\ref{probconstraint}), each sender has his individual optimization problem (IOP) given as follows.
\begin{eqnarray}
\nonumber \max_{\mathbf{p}_i,\alpha_i}& \sum_{j\in\mathcal{J}}p_{ij}g_{i}(\alpha_ip_{ij})\\
\nonumber \textrm{s.t.}& \sum_{j\in\mathcal{J}}p_{ij}=1, \forall i\in\mathcal{N}\\
\nonumber & p_{ij}\geq 0, \forall i\in\mathcal{N},j\in\mathcal{J}\\
\nonumber & (\alpha, \mathbf{p}_j)\in\mathcal{C}(j) ,\forall j\in\mathcal{J}
\end{eqnarray}

Denote the feasible set of (IOP) by $\mathcal{F}=\mathcal{F}_1\times\mathcal{F}_2$, where 
\begin{equation}\label{F1Section3}
\mathcal{F}_1=\left\{\alpha\in\mathbb{R}^N_+ \mid (\alpha, \mathbf{P})\in\cap_{j\in\mathcal{J}}\mathcal{C}(j), \mathbf{P}\in \mathcal{F}_2\right\}, 
\end{equation}
\begin{equation}\label{F2Section3}
\mathcal{F}_2=\left\{\mathbf{P}\in\mathbb{R}^{N\times J} \mid \sum_{j\in\mathcal{J}}p_{ij}=1, p_{ij}\geq0, \forall i\in\mathcal{N}, j\in\mathcal{J} \right\}.
\end{equation}
The action set of each user can thus be described by \begin{equation}
\mathcal{A}_i(\alpha_{-i},\mathbf{p}_{-i})=\left\{(\alpha_i,\alpha_{-i})\in \mathcal{F}_1, (\mathbf{p}_i,\mathbf{p}_{-i})\in\mathcal{F}_2\right\}.
\end{equation}

\subsubsection{An Example}
Suppose we have three users and three receivers, 
that is, $\mathcal{N}=\{1,2,3\}$ and $\mathcal{J}=\{1,2,3\}$.
The capacity region at receiver $1$ is given by
$$\mathcal{C}(1)=\left\{\begin{array}{c}\alpha_{i}\geq 0,\  i=1,2,3\\
p_{11}\alpha_{1}\leq \log(1+\frac{P_1h_1}{\sigma_0^2})\\
p_{21}\alpha_{2}\leq \log(1+\frac{P_2h_2}{\sigma_0^2})\\
p_{31}\alpha_{3}\leq \log(1+\frac{P_3h_3}{\sigma_0^2})\\
p_{11}\alpha_{1}+p_{21}\alpha_{2}\leq \log(1+\frac{P_1h_1+P_2h_2}{\sigma_0^2})\\
p_{11}\alpha_{1}+p_{31}\alpha_{3}\leq \log(1+\frac{P_1h_1+P_2h_2}{\sigma_0^2})\\
p_{21}\alpha_{2}+p_{31}\alpha_{3}\leq \log(1+\frac{P_1h_1+P_2h_2}{\sigma_0^2})\\
p_{11}\alpha_{1}+p_{21}\alpha_{2}+p_{31}\alpha_{3}\leq \log(1+\frac{P_1h_1+P_2h_2+P_3h_3}{\sigma_0^2})\\
0\leq p_{i1}\leq 1,\ i=1,2,3\\
\end{array}\right\}.$$
This can be written into
\begin{eqnarray*}
\mathcal{C}(1)= \left\{  \mathbf{p}_{1}=
\left[
\begin{array}{c}
  p_{11}  \\
  p_{21}      \\
  p_{31}
\end{array}
\right]
\in [0,1]^{3},
\left[\begin{array}{c}\alpha_1\\ \alpha_2\\ \alpha_3\end{array}\right]\in\mathbb{R}_+^3\ \bigg|
M_3\left[\begin{array}{c}p_{11}\alpha_1\\ p_{21}\alpha_2\\ p_{31}\alpha_{3}\end{array}\right]\leq \left[\begin{array}{c}C_{1,\{1\}}\\ C_{1,\{2\}}\\ C_{1,\{3\}}\\ C_{1,\{1,2\}}\\ C_{1,\{1,3\}}\\ C_{1,\{2,3\}}\\ C_{1,\{1,2,3\}}\end{array}\right]\right\},\end{eqnarray*}
where $C_{1,\Omega}= \log\left(1+\sum_{i\in\Omega}\frac{P_{i1}h_{i1}}{\sigma_0^2}\right)$ and $M_3$ is a totally unimodular matrix: $M_3:=
\left[
\begin{array}{ccc}
 1 & 0  & 0   \\
 0 & 1   & 0   \\
  0 &  0 &  1 \\
1 & 1  & 0   \\
1 & 0   & 1   \\
  0 &  1 &  1 \\
  1 &  1 &  1
\end{array}
\right].$ Capacity regions at receivers 2 and 3 can be obtained in a similar way.

\subsection{Characterization of Constrained Nash Equilibria}\label{characterization}

In this subsection, we characterize the Nash equilibria of the defined game $\overline{\Xi}$ under the given capacity constraint. We use the following theorem to prove the existence of Nash equilibrium for the case where the rates are predetermined;  this result is then used to solve the game for the case when both the rates and the connection probabilities are (joint) decision variables.
\begin{thm}\label{exist}
(Ba\c{s}ar \& Olsder, \cite{basar95}) Let $\mathcal{A}=\mathcal{A}_1\times \mathcal{A}_2\cdots \times \mathcal{A}_N$ be a closed, bounded and convex subset of $\mathbb{R}^N$, and for each $i\in\mathcal{N}$, the payoff functional $\overline{U}_i:\mathcal{A}\rightarrow \mathbb{R}$ be jointly continuous in $\mathcal{A}$ and concave in $a_i$ for every $a_j \in \mathcal{A}_j, j\in \mathcal{N}, j\neq i$. Then, the associated $N$-person nonzero-sum game admits a Nash equilibrium in pure strategies.
\end{thm}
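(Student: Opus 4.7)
The plan is to apply Kakutani's fixed point theorem to the joint best response correspondence. For each user $i\in\mathcal{N}$ and each fixed profile $a_{-i}\in\mathcal{A}_{-i}:=\prod_{j\neq i}\mathcal{A}_j$, I would define
$$BR_i(a_{-i}) := \arg\max_{a_i\in\mathcal{A}_i} \overline{U}_i(a_i,a_{-i}),$$
and then form the product correspondence $BR:\mathcal{A}\to 2^{\mathcal{A}}$ by $BR(a):=\prod_{i\in\mathcal{N}} BR_i(a_{-i})$. A fixed point $a^*\in BR(a^*)$ is precisely a pure strategy Nash equilibrium, so it suffices to verify the hypotheses of Kakutani's theorem for $BR$.

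First, I would extract the required set properties. Since each $\mathcal{A}_i$ is a projection of the closed, bounded, convex set $\mathcal{A}\subset\mathbb{R}^N$ (and implicitly nonempty), each factor is nonempty, compact, and convex; hence so is $\mathcal{A}$. Next, for nonemptiness and compactness of $BR_i(a_{-i})$: by joint continuity of $\overline{U}_i$, the section $a_i\mapsto \overline{U}_i(a_i,a_{-i})$ is continuous on the compact set $\mathcal{A}_i$, so it attains its maximum and the argmax set is a nonempty compact subset. For convex-valuedness: by concavity of $\overline{U}_i(\cdot,a_{-i})$, the set of maximizers is convex (the preimage of the supremum of a concave function on a convex set is convex). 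Taking products preserves these properties, so $BR(a)$ is nonempty, compact, and convex for every $a\in\mathcal{A}$.

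The main technical step is upper hemicontinuity (equivalently, closed graph) of $BR$. Here I would invoke Berge's maximum theorem: since $\overline{U}_i$ is jointly continuous on $\mathcal{A}$ and the constraint correspondence $a_{-i}\mapsto \mathcal{A}_i$ is constant (hence trivially continuous and compact-valued), Berge's theorem yields that $BR_i$ is upper hemicontinuous with nonempty compact values. The product correspondence $BR$ inherits upper hemicontinuity factorwise. This is the step I expect to be the main obstacle, since the essential content of the theorem rests on translating concavity and continuity into the regularity needed to apply a fixed point argument.

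With all hypotheses verified, Kakutani's fixed point theorem produces some $a^*\in\mathcal{A}$ with $a^*\in BR(a^*)$, i.e. $a^*_i\in\arg\max_{a_i\in\mathcal{A}_i}\overline{U}_i(a_i,a^*_{-i})$ for every $i\in\mathcal{N}$. This is the desired pure strategy Nash equilibrium and completes the argument. I note that this classical approach (Nash–Debreu–Fan style) is exactly what is invoked in \cite{basar95}, so citing the reference and sketching the above outline is standard practice.
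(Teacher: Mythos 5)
Your argument is correct and is the standard Nash--Debreu--Fan existence proof: nonempty compact convex action sets, best-response correspondences that are nonempty-, compact-, and convex-valued (Weierstrass plus concavity), upper hemicontinuity via Berge's maximum theorem, and then Kakutani. Note, however, that the paper itself offers no proof of this statement --- it is imported verbatim from \cite{basar95} --- so there is no internal argument to compare against; your sketch is essentially the one given in that reference, and the only detail worth making explicit is that the product of the individual upper hemicontinuous, compact-valued correspondences $BR_i$ is itself upper hemicontinuous, which is what licenses applying Kakutani to $BR$ rather than to each factor separately.
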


Applying Theorem \ref{exist}, we have the following results immediately.

\begin{prop}\label{exist1}
Suppose $\alpha_i, i\in\mathcal{N},$ are predetermined feasible rates. Let feasible set $\mathcal{F}$ be closed, bounded and convex. If $g_{i}$ in (IOP) are continuous on $\mathcal{F}$ and concave in $\mathbf{p}_i$ (without the assumption of being positive and strictly increasing), the expected payoff functions $\overline{U}_i:
\mathbb{R}^N_+\times[0,1]^{N\times J}\rightarrow \mathbb{R}$ are concave in $\mathbf{p}_i$ and continuous on $\mathcal{F}$, then the static game admits a Nash equilibrium.
\end{prop}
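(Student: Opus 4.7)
The plan is to verify that, under the hypotheses, the game with fixed feasible rates $\alpha_i$ falls squarely into the framework of Theorem (Ba\c{s}ar and Olsder) and to invoke it. Because the rates $\alpha_i$ are predetermined and feasible, the only remaining decision variable of user $i$ is the channel--selection vector $\mathbf{p}_i$, and the game reduces to an $N$--person nonzero--sum game with strategy spaces lying in $[0,1]^{J}$. So the first step is to identify the effective strategy set $\mathcal{A}_i \subseteq [0,1]^{J}$ of user $i$ and check that it is closed, bounded, and convex: boundedness is immediate from $p_{ij}\in[0,1]$, and closedness/convexity follow from the fact that $\mathcal{F}_2$ is the product of unit simplices, each of which is closed and convex, together with the assumption that the overall feasible set $\mathcal{F}$ is closed, bounded, and convex.

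Second, I would verify continuity and concavity of the payoff $\overline{U}_i(\alpha_i,\mathbf{p}_i,\alpha_{-i},\mathbf{p}_{-i})=\sum_{j\in\mathcal{J}}p_{ij}u_{ij}(\alpha,\mathbf{P})$ in $\mathbf{p}_i$ on the feasible set. The continuity of each $g_i$ on $\mathcal{F}$ together with the finite sum structure of $\overline{U}_i$ gives joint continuity on $\mathcal{F}$. For concavity in $\mathbf{p}_i$, since the $g_i$ (equivalently the $u_{ij}$) are assumed concave in $\mathbf{p}_i$ and the coefficients $p_{ij}$ are the user's own decision variables entering linearly in front of these concave terms, a routine check confirms that $\overline{U}_i$ inherits concavity in $\mathbf{p}_i$ from the concavity of the $g_i$, as explicitly hypothesized in the statement. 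With continuity and concavity in hand, all the hypotheses of Theorem (Ba\c{s}ar and Olsder) are met, and the existence of a pure--strategy Nash equilibrium follows.

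The main obstacle is the coupled nature of the constraints: the capacity regions $\mathcal{C}(j)$ couple the $\mathbf{p}_i$ across users, whereas Theorem (Ba\c{s}ar and Olsder) is stated for product strategy sets. I would handle this either by noting that the hypothesis ``$\mathcal{F}$ closed, bounded, and convex'' is strong enough to let one work with a generalized (coupled--constraint) version of the Debreu--Fan--Glicksberg type fixed--point argument that underlies Theorem \ref{exist}, or, more simply, by passing to the product of projections $\mathcal{A}_i(\alpha_{-i},\mathbf{p}_{-i})$ and verifying that the best--response correspondence is a nonempty, convex--valued, upper hemicontinuous map on a compact convex set, so that Kakutani's theorem applies. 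Either route yields the conclusion, and since the authors invoke Theorem \ref{exist} directly, the cleanest writeup simply matches each of its hypotheses to the assumptions of the proposition and states the existence.
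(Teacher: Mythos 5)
Your proposal is correct and follows essentially the same route as the paper, which offers no written proof beyond the remark that the result follows ``immediately'' from Theorem \ref{exist} by matching its hypotheses (compactness and convexity of the strategy sets, continuity of the payoffs, concavity in $\mathbf{p}_i$). Your additional attention to the coupled-constraint issue --- that the capacity regions $\mathcal{C}(j)$ make the feasible set non-product, so one must either pass to the projections $\mathcal{A}_i(\alpha_{-i},\mathbf{p}_{-i})$ and run a Kakutani argument or use a Rosen-type coupled-constraint existence theorem --- is a legitimate refinement of a point the paper leaves implicit, not a different proof strategy.
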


The existence result in Proposition \ref{exist1} only captures the case where the rates $\alpha_i$ are predetermined, and relies on the convexity requirement of the utility functions $g_i$. We can actually obtain a stronger existence result by observing that the formulated game $\overline{\Xi}$ is a potential game with a potential function given by
\begin{equation}
\Psi(\alpha, \mathbf{P})=\sum_if_i(\alpha_i,\mathbf{p}_i)=\sum_i\sum_jp_{ij}g_i(\alpha_ip_{ij}),
\end{equation}
where $f_i=\sum_{j}p_{ij}g_i(\alpha_ip_{ij})$, the expected payoff to user $i$.

Note that the feasible set $\mathcal{F}$ is generally nonempty and bounded.
We can conclude the existence result in Proposition  \ref{exist2} of NE from this observation.

\begin{prop}\label{exist2}
The hybrid rate control game $\overline{\Xi}$ admits a Nash equilibrium.
\end{prop}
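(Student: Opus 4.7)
The plan is to exploit the exact-potential structure that the statement points at and reduce equilibrium existence to an optimization problem over the joint feasible set $\mathcal{F}$, then verify the maximizer is a Nash equilibrium. This avoids any fixed-point or concavity assumption and in particular sidesteps the convexity requirement used in Proposition~\ref{exist1}.

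First I would verify that $\Psi(\alpha,\mathbf{P})=\sum_{i\in\mathcal{N}}f_i(\alpha_i,\mathbf{p}_i)$ is an exact potential for $\overline{\Xi}$. By the payoff definition in (\ref{UtilitySection3}), inside the feasible set one has $u_{ij}(\alpha,\mathbf{P})=g_i(\alpha_i p_{ij})$, so $f_i(\alpha_i,\mathbf{p}_i)=\sum_{j}p_{ij}g_i(\alpha_i p_{ij})=\overline{U}_i(\alpha_i,\mathbf{p}_i,\alpha_{-i},\mathbf{p}_{-i})$ and depends only on user $i$'s own pair $(\alpha_i,\mathbf{p}_i)$. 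Hence if user $i$ deviates unilaterally from $(\alpha_i,\mathbf{p}_i)$ to $(\alpha_i',\mathbf{p}_i')$ with the other players frozen, only the $i$-th summand of $\Psi$ changes, giving the exact identity $\Psi(\alpha_i',\alpha_{-i},\mathbf{p}_i',\mathbf{p}_{-i})-\Psi(\alpha_i,\alpha_{-i},\mathbf{p}_i,\mathbf{p}_{-i})=\overline{U}_i(\alpha_i',\mathbf{p}_i',\alpha_{-i},\mathbf{p}_{-i})-\overline{U}_i(\alpha_i,\mathbf{p}_i,\alpha_{-i},\mathbf{p}_{-i})$.

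Next I would establish that the joint feasible set $\mathcal{F}=\mathcal{F}_1\times\mathcal{F}_2$ is nonempty and compact. The simplex product $\mathcal{F}_2$ is obviously compact. Each capacity region $\mathcal{C}(j)$ defined in (\ref{receiverConstraint}) is closed because the maps $(\alpha,\mathbf{P})\mapsto\sum_{i\in\Omega}\alpha_i p_{ij}-C_{j,\Omega}$ are continuous, so $\mathcal{F}_1\cap\mathcal{F}_2=\bigcap_{j\in\mathcal{J}}\mathcal{C}(j)\cap\mathcal{F}_2$ is closed. Boundedness of the rate variables follows because $\sum_j p_{ij}=1$ forces some $p_{ij}>0$, and the singleton constraint $\alpha_i p_{ij}\leq C_{j,\{i\}}$ then yields a uniform upper bound on $\alpha_i$; equivalently one may restrict to $\alpha_i\in[0,\max_j C_{j,\{i\}}]$ without loss. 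The point $\alpha=0$ together with any $\mathbf{P}\in\mathcal{F}_2$ (e.g.\ $\mathbf{p}_i=(1,0,\ldots,0)$) lies in $\mathcal{F}$, so $\mathcal{F}$ is nonempty. Assuming $g_i$ is continuous on $\mathbb{R}_+$ (which the earlier sections have implicitly assumed), $\Psi$ is continuous on $\mathcal{F}$, and the Weierstrass theorem yields a maximizer $(\alpha^*,\mathbf{P}^*)\in\arg\max_{\mathcal{F}}\Psi$.

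Finally I would argue this maximizer is a constrained (generalized) Nash equilibrium. Fix any player $i$ and any admissible unilateral deviation $(\alpha_i',\mathbf{p}_i')$ with $(\alpha_i',\alpha_{-i}^*,\mathbf{p}_i',\mathbf{p}_{-i}^*)\in\mathcal{F}$, i.e.\ $(\alpha_i',\mathbf{p}_i')\in\mathcal{A}_i(\alpha_{-i}^*,\mathbf{p}_{-i}^*)$. Maximality gives $\Psi(\alpha^*,\mathbf{P}^*)\geq\Psi(\alpha_i',\alpha_{-i}^*,\mathbf{p}_i',\mathbf{p}_{-i}^*)$, and the exact-potential identity established above converts this into $\overline{U}_i(\alpha^*,\mathbf{P}^*)\geq\overline{U}_i(\alpha_i',\mathbf{p}_i',\alpha_{-i}^*,\mathbf{p}_{-i}^*)$. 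Since $i$ was arbitrary, $(\alpha^*,\mathbf{P}^*)$ is a Nash equilibrium of $\overline{\Xi}$.

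The main obstacle I anticipate is the coupled-constraint structure: the individual action set $\mathcal{A}_i(\alpha_{-i},\mathbf{p}_{-i})$ expands or contracts with others' choices, so standard Kakutani/Debreu--Glicksberg--Fan arguments would require verifying upper hemicontinuity of the feasibility correspondence. The potential-maximization route bypasses this entirely because we optimize once over the joint set $\mathcal{F}$; the only delicate point is ensuring that within $\mathcal{F}$ the payoff $\overline{U}_i$ genuinely equals $f_i$ (i.e.\ that no implicit indicator term spoils the potential identity), which is immediate here because $\mathcal{F}$ is exactly the set where the capacity constraints bind, so $u_{ij}=g_i(\alpha_i p_{ij})$ throughout.
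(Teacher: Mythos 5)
Your proof is correct and follows essentially the same route as the paper: maximize the potential $\Psi$ over the compact nonempty joint feasible set $\mathcal{F}$ and conclude that the maximizer is a (generalized) Nash equilibrium. The only difference is that the paper delegates the ``potential maximizer $\Rightarrow$ equilibrium'' step to the cited reference \cite{Zhu08}, whereas you verify the exact-potential identity and that implication explicitly, which is a welcome self-contained filling-in rather than a different argument.
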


\begin{proof}
Let us formulate a centralized optimization problem (COP) as follows.
$$\begin{array}{ccc}
&\max_{\alpha,\mathbf{P}} & \Psi (\alpha, \mathbf{P})  \\
&\textrm{s.t.} & {(\alpha,\mathbf{P})}\in\mathcal{F}
\end{array}$$
Using the result in \cite{Zhu08}, we can conclude that if there exists a solution to (COP), then there exists a Nash equilibrium to the game $\overline{\Xi}$. Since $\mathcal{F}$ is compact and nonempty, and the
objective function is continuous, there exists a solution to (COP) and thus a Nash equilibrium to the game.
\end{proof}

The problem above is generally not convex and uniqueness of the Nash equilibrium may not be guaranteed. However, we
still can further characterize the Nash equilibrium through the following propositions. 

\begin{prop}\label{bestResponseProp}
Let $\beta_{ij}:=\alpha_ip_{ij}$. Without predetermining $\alpha$, suppose that $(\mathbf{p}_{-i},\alpha_{-i})$
is feasible. A best response strategy at receiver $j\in\mathcal{J}$ for user $i$ must satisfy
\begin{equation}\label{IneqBestResp}
0\leq p_{ij}\alpha_i\leq C_{j,\Omega_j}-\sum_{k\neq i}\alpha_kp_{kj}, \forall \Omega_j
\end{equation}
where $\Omega_j:=\{\Omega\in 2^{\mathcal{N}} \mid i'\in \Omega,  p_{i'j}>0 \}$ is the set of users transmitting to receiver $j.$
Since $g_{i}$ is a non-decreasing function, the best correspondence at $j$ is
{\small
\begin{equation}\label{BestResponseExp}
\beta_{ij}^*=\alpha_i^*p_{ij}^*=\max\left(r_{ij,\mathcal{N}},\min_{\Omega_j} \left(C_{j,\Omega_j}{-\sum_{i'\neq i}\alpha_{i'}p_{i'j}}\right)\right),\end{equation}}
where $r_{ij,\mathcal{N}}$ is the bound on the rate of  user $i$  when signals of $|{\mathcal{N}}|-1$ other users are treated as noise.
\end{prop}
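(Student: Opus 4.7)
The plan is to mirror the proof strategy of Lemma \ref{lembr1} from the single-receiver case, the main trick being the linearizing substitution $\beta_{ij} := \alpha_i p_{ij}$, which turns the quadratic capacity constraints of $\mathcal{C}(j)$ into linear constraints in the $\beta_{\cdot j}$ variables. With this substitution, the utility contribution from receiver $j$ is $p_{ij} g_i(\beta_{ij})$, which for any fixed $p_{ij} \geq 0$ is non-decreasing in $\beta_{ij}$ by monotonicity of $g_i$.

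First I would establish the inequality (\ref{IneqBestResp}). Feasibility of $(\alpha_i, \mathbf{p}_i)$ against the fixed profile $(\alpha_{-i}, \mathbf{p}_{-i}) \in \mathcal{C}(j)$ requires that for every nonempty $\Omega \subseteq \mathcal{N}$ with $i \in \Omega$ (active at receiver $j$), $\sum_{i'\in\Omega} \alpha_{i'} p_{i'j} \leq C_{j,\Omega}$. Isolating the $i$-th term and writing it in the $\beta$-coordinates yields $\beta_{ij} \leq C_{j,\Omega} - \sum_{i'\in\Omega\setminus\{i\}} \beta_{i'j}$, which is exactly (\ref{IneqBestResp}); nonnegativity is immediate from $\alpha_i, p_{ij} \geq 0$.

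Next I would derive the best-response formula (\ref{BestResponseExp}). Since the user's utility is additively separable over receivers, $\overline{U}_i(\alpha_i,\mathbf{p}_i, \cdot) = \sum_j p_{ij} g_i(\beta_{ij})$, each term $p_{ij} g_i(\beta_{ij})$ is non-decreasing in $\beta_{ij}$, so user $i$ pushes $\beta_{ij}$ against the tightest constraint available at receiver $j$. Taking the minimum over all $\Omega \ni i$ of the upper bounds derived in the previous step gives the uniform upper envelope $\min_{\Omega_j}\bigl(C_{j,\Omega_j} - \sum_{i'\neq i, i'\in\Omega_j} \beta_{i'j}\bigr)$. The outer max with $r_{ij,\mathcal{N}}$ captures the fact that, irrespective of the coalition constraints, user $i$ can always fall back on treating the other users' signals as noise, thereby guaranteeing the rate $r_{ij,\mathcal{N}}$ at receiver $j$, exactly as in the single-receiver argument of Lemma \ref{lembr1}.

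The main obstacle I anticipate is the interdependence across receivers introduced by the probability-simplex constraint $\sum_j p_{ij} = 1$ and the single shared rate $\alpha_i$: one might worry that a receiver-by-receiver maximization is not legitimate. This is why the proposition is phrased in terms of the composite quantity $\beta_{ij}^*$ rather than $\alpha_i^*$ and $p_{ij}^*$ separately: the constraints of $\mathcal{C}(j)$ decouple across receivers in the $\beta$-coordinates, while the coupling $\alpha_i = \sum_j \beta_{ij}$ (which we recover from $\sum_j p_{ij}=1$) only affects the inverse transformation back to $(\alpha_i, \mathbf{p}_i)$. Hence at a best response, whatever value $\alpha_i^*$ is chosen, each $\beta_{ij}^*$ must saturate the tightest of the constraints (\ref{IneqBestResp}) at receiver $j$ or sit at the noise-rate floor $r_{ij,\mathcal{N}}$, which is precisely (\ref{BestResponseExp}).
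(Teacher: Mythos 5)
Your argument is correct and follows essentially the same route as the paper's own (very terse) proof: derive (\ref{IneqBestResp}) by isolating the $i$-th term in each capacity inequality at receiver $j$, then invoke monotonicity of $g_i$ to push $\beta_{ij}$ to the tightest upper bound, with $r_{ij,\mathcal{N}}$ as the guaranteed floor. Your additional remarks on the linearizing role of the substitution $\beta_{ij}=\alpha_i p_{ij}$ and on why the simplex coupling does not invalidate the receiver-by-receiver maximization are a welcome elaboration of what the paper leaves implicit, but they do not constitute a different method.
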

\begin{proof} The proof is immediate by observing that the rate of user $i$ at receiver $j$ must satisfy (\ref{IneqBestResp}) due to the coupled constraints. Thus, the maximum rate that user $i$ can use to transmit to receiver $j$ without violating the constraints is clearly the minimum of $ C_{j,{\Omega}_j}-\sum_{i'\neq i}\alpha_{i'} p_{i'j}$ over all $\Omega_j$. Since the payoff is a non-decreasing function, the best response for $i$ at receiver $j$ is given by (\ref{BestResponseExp}).\end{proof}

\begin{prop}\label{Kprop}
Let $K^*_i=\textrm{arg}\max_{j\in\mathcal{J}} g_{ij}(\beta_{ij})$. If $K^*_i=\{k^*\}$ is a singleton, then the best reply for user $i$ is to choose $$\left\{\begin{array}{cc}p_{ij}=1& \textrm{if~} j=k^*_i, \\
p_{ij}=0 & \textrm{otherwise,}\end{array}\right. $$
and we can determine $\alpha_i$ by
$\alpha_i=\frac{\beta_{ik^*}}{p_{ik^*}}$.\\
If $|K^*_i|\geq 2$, then the best response correspondence is
$$\left\{\begin{array}{ll}\mathbf{p}_i\in \Delta (K^*_i)& \textrm{if~}j\in K^*_i, \\ 0 & \textrm{otherwise.~}\end{array}\right. $$
We can determine $\alpha_i$ from $\beta_{ij}$ by
$\alpha_i=\sum_{j\in K^*_i}\beta_{ij}.$
\end{prop}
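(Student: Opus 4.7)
The plan is to treat the expected payoff $\overline{U}_i=\sum_{j\in\mathcal{J}}p_{ij}\,g_i(\alpha_i p_{ij})=\sum_{j\in\mathcal{J}}p_{ij}\,g_i(\beta_{ij})$ as a convex combination indexed by the candidate receivers, and to bound it receiver-by-receiver against the per-channel best-response values $g_i(\beta_{ij}^{\ast})$ supplied by Proposition \ref{bestResponseProp}. The first step is to observe that for any feasible choice of $(\alpha_i,\mathbf{p}_i)$ the resulting quantity $\beta_{ij}=\alpha_i p_{ij}$ must itself satisfy the coupled inequalities (\ref{IneqBestResp}) at receiver $j$, and hence $\beta_{ij}\leq\beta_{ij}^{\ast}$ whenever $p_{ij}>0$. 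Since $g_i$ is non-decreasing, this yields the pointwise domination
\[
\overline{U}_i\;\leq\;\sum_{j\in\mathcal{J}}p_{ij}\,g_i(\beta_{ij}^{\ast})\;\leq\;\max_{j\in\mathcal{J}}g_i(\beta_{ij}^{\ast}),
\]
the second inequality being the standard bound of a convex combination by its maximum.

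Next I would pin down when both inequalities in this chain are saturated simultaneously. The right-hand inequality becomes an equality precisely when $p_{ij}=0$ for every $j$ with $g_i(\beta_{ij}^{\ast})$ strictly below the maximum, i.e.\ when the support of $\mathbf{p}_i$ lies inside $K_i^{\ast}$. The left-hand inequality becomes an equality precisely when $\beta_{ij}=\beta_{ij}^{\ast}$ holds on that support. Putting the two saturation conditions together gives a necessary-and-sufficient description of any best response: its support is contained in $K_i^{\ast}$, and $\alpha_i p_{ij}=\beta_{ij}^{\ast}$ for every $j$ in that support.

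The two cases of the proposition then follow from this description. In the singleton case $K_i^{\ast}=\{k^{\ast}\}$ the support condition forces $p_{ik^{\ast}}=1$ and $p_{ij}=0$ otherwise, and the saturation identity rewrites as $\alpha_i=\beta_{ik^{\ast}}/p_{ik^{\ast}}$, which is the formula stated. In the multi-element case $|K_i^{\ast}|\geq 2$, the support condition enforces $\sum_{j\in K_i^{\ast}}p_{ij}=1$; summing the saturation equality $\alpha_i p_{ij}=\beta_{ij}^{\ast}$ over $j\in K_i^{\ast}$ then produces $\alpha_i=\sum_{j\in K_i^{\ast}}\beta_{ij}^{\ast}$, as claimed. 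The main subtlety I would want to flag is that the per-channel bounds $\beta_{ij}^{\ast}$ from Proposition \ref{bestResponseProp} are derived treating each receiver's constraints in isolation given $(\alpha_{-i},\mathbf{p}_{-i})$, so the argument is only complete once one verifies that the $(\alpha_i,\mathbf{p}_i)$ produced this way is jointly feasible across all receivers simultaneously; this is immediate because each saturated per-receiver bound already incorporates the rival strategies that define the constraint at that receiver, so no feasibility at another receiver $j'\neq j$ can be violated.
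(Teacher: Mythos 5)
Your proposal is correct and follows essentially the same route as the paper's proof: both view $\overline{U}_i$ as a convex combination of per-receiver utilities, bound it by the maximum over $j$ using the per-channel best responses from Proposition \ref{bestResponseProp}, and read off the two cases from when the bound is attained. Your version is in fact somewhat more careful than the paper's, since you make explicit the two saturation conditions (support contained in $K_i^*$ and $\alpha_i p_{ij}=\beta_{ij}^*$ on that support) that the paper's argument leaves implicit.
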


\begin{proof}
Since the expected utility is given in the form  of
$$U_i(\alpha_i,\mathbf{p}_i,\alpha_{-i},\mathbf{p}_{-i})=\mathbb{E}_{\mathbf{p}_i}[u_{ij}(\alpha_{i}p_{ij})],$$
the expected utility under best response is $U_i=\mathbb{E}_{\mathbf{p}_i}[u_{ij}(\beta_{ij})].$
If for a singleton $k^*$ such that $k^*=\textrm{arg}\max_{i\in\mathcal{N}}g_{ij}(\beta_{ij}),$ we can assign all the weight $p_{ik^*}=1$ to maximize the expected utility. Since $\beta_{ik^*}=\alpha_ip_{ik^*}$, then $\alpha=\beta_{ik^*}/p_{ik^*}=\beta_{ik^*}.$ If the set $K^*$ is not a singleton,
without loss of generality, we can pick two indices $j$,$j'\in K^*$ such that $\beta_{ij}=\alpha_ip_{ij}$ and $\beta_{ij'}=\alpha_ip_{ij'}$, leading to $u_{ij}(\beta_{ij})=u_{ij'}(\beta_{ij'})$. Since the utilities to transmit using $j$ and $j'$ are the same, we can assign arbitrary (two-point) distribution, $p_{ij}$ and $p_{ij'}$ over them, with $p_{ij}+p_{ij'}=1$. Therefore, $\beta_{ij}+\beta_{ij'}=\alpha_i(p_{ij}+p_{ij'})=\alpha_i.$
\end{proof}

\subsection{Multiple Access Evolutionary Games }\label{evolutionaryGames}
Interactions among users are dynamic and the users can update their rates and channel selection with  respect to their payoffs and the known coupled constraints. Such a dynamic process can generally be modeled by either an evolutionary process, a learning process or a trial-and-error updating process.
 In classical game theory, the focus is on strategies that optimize payoffs to the players while, in evolutionary game theory, the focus is on strategies that will persist through time. 
In this subsection, we formulate evolutionary games dynamics based on the static game discussed in Section \ref{HybridRateModel}. We use generalized Smith-dynamics for channel selection and G-function based dynamics for rates. Combining them, we set up a framework of hybrid dynamics for the overall system.

The action of each user has two components $(\alpha_i,\mathbf{p}_i)\in \mathbb{R}_+\times [0,1]^J$. We use $\mathbf{p}_i$ as strategies that determine the fitness of user $i$'s rate $\alpha_i$ to receiver $j$. 
The rate selection evolves according to the channel selection strategy  $\mathbf{P}$. We may view channel selection as an inner game that involves a process on a short time scale but the rate selection is an outer game that represents the dynamical link via fitness on a longer time scale, \cite{vincent}, \cite{vincent05}.

\subsubsection{Learning the Weight Placed on Receiver }
Let $\alpha$ be a fixed  rate on the capacity region. We assume that user $i$ occasionally
experiments the weights $p_{ij}$ with alternative receivers, keeping the new strategy if and only if it leads to
a strict increase in payoff.  If the choice of receivers' weights of some users
 decreases the payoff or violates the constraints due to a strategy change
by another user, he starts a random search for a new strategy, eventually settling on one with a
probability that increases monotonically with its realized payoff. For the above generating function based dynamics, the weight of switching from receiver $j$ to receiver $j'$  is given  by
$$\eta_{jj'}^i(\alpha,\mathbf{P})=\max(0, u_{ij'}(\alpha,\mathbf{P})-u_{ij}(\alpha,\mathbf{P}))^{\theta},\ \theta\geq 1$$ if the payoff obtained at receiver $j'$ is greater the payoff obtained receiver $j$ and the constraints are satisfied; otherwise, $\eta_{jj'}^i(p,\alpha)=0.$ The frequencies of uses of each receiver is then seen as the selection strategy of receivers.

The expected change at each receiver is the difference between the incoming flow and the outgoing flow. The dynamics is also called {\it generalized Smith dynamics} \cite{cdc}  and is given by
{\small
\begin{eqnarray}\label{smith84}
\dot{p}_{ij}(t)=\sum_{j'\in\mathcal{J}}p_{ij'}(t)\eta_{j'j}^i(\alpha,\mathbf{P}(t))-
p_{ij}(t)\sum_{j'\in\mathcal{J}}\eta_{jj'}^i(\alpha,\mathbf{P}(t)).
\end{eqnarray}
}
Let $\chi_{ij}(\alpha,\mathbf{P}(t)):=\sum_{j'\in\mathcal{J}}p_{ij'}(t)\eta_{j'j}^i(\alpha,\mathbf{P}(t))-
p_{ij}(t)\sum_{j'\in\mathcal{J}}\eta_{jj'}^i(\alpha,\mathbf{P}(t)).$ Hence, the dynamics can be rewritten as $\dot{p}_{ij}=\chi_{ij}(\alpha,\mathbf{P}(t))$. For $\theta=1$ the dynamics is known as {\it Smith dynamics} and  has been used for describing
the evolution of road traffic congestion in which the fitness is determined by the strategies
chosen by all drivers. It has also been studied in the context of the resource selection in hybrid systems and migration constraint
problem in wireless networks in \cite{cdc}.
 \begin{prop} \label{propc}
 Any equilibrium of the game $\overline{\Xi}$ with predetermined rates is a rest points of the generalized Smith dynamics (\ref{smith84}).
 \end{prop}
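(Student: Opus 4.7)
The plan is to verify that at any Nash equilibrium $(\alpha,\mathbf{P}^*)$ of $\overline{\Xi}$ with fixed rates, every term in the Smith flow balance vanishes componentwise. First I would invoke the best-response characterization from Proposition \ref{Kprop}: at such a NE each user $i$ places positive weight only on receivers in the argmax set $K_i^* = \arg\max_{j\in\mathcal{J}} u_{ij}(\alpha,\mathbf{P}^*)$. Writing the support as $S_i = \{j : p_{ij}^* > 0\} \subseteq K_i^*$, this means that $u_{ij}(\alpha,\mathbf{P}^*)$ attains a common maximal value $u_i^*$ for every $j \in S_i$, while $u_{ij'}(\alpha,\mathbf{P}^*) \leq u_i^*$ for every $j' \notin S_i$.

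Next I would decompose $\dot p_{ij} = \chi_{ij}(\alpha,\mathbf{P}^*)$ as the difference of an incoming flow $\sum_{j'\in\mathcal{J}} p_{ij'}^* \eta^i_{j'j}(\alpha,\mathbf{P}^*)$ and an outgoing flow $p_{ij}^* \sum_{j'\in\mathcal{J}} \eta^i_{jj'}(\alpha,\mathbf{P}^*)$, and show each is zero. For the outgoing part, if $p_{ij}^* = 0$ the factor kills it; if $p_{ij}^* > 0$ then $j \in S_i \subseteq K_i^*$, so $u_{ij'} \leq u_{ij}$ for every $j'$, giving $\eta^i_{jj'} = \max(0, u_{ij'} - u_{ij})^\theta = 0$ for all $j'$. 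For the incoming part, only indices $j'$ with $p_{ij'}^* > 0$ contribute, but each such $j'$ lies in $K_i^*$, hence $u_{ij'} = u_i^* \geq u_{ij}$ and $\eta^i_{j'j} = \max(0, u_{ij} - u_{ij'})^\theta = 0$. Both sides vanish, so $\dot p_{ij} = 0$ for every $i$ and $j$, which is exactly the rest-point condition.

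The main subtlety to handle is reconciling the constrained NE condition with the revision protocol. Proposition \ref{Kprop} is phrased for best replies on the simplex, whereas $\overline{\Xi}$ carries the coupled capacity constraints (\ref{receiverConstraint}); if one of those is active at $\mathbf{P}^*$, a weight transfer from a lower-payoff receiver to a higher-payoff one could in principle be infeasible and thus not rule out $u_{ij} > u_{ij'}$ on $S_i$. I would address this by noting that the generalized Smith revision rate $\eta^i_{jj'}$ was defined to be zero whenever the feasibility condition fails at the current state, so the same vanishing argument goes through unchanged in the constrained setting. A brief remark to this effect closes the small gap between the unconstrained characterization used in Step~1 and the constrained dynamics in Step~2, completing the proof.
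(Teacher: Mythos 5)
Your argument is correct and rests on the same two pillars as the paper's: (i) at a constrained Nash equilibrium each user's support is contained in the set of receivers attaining the maximal payoff (so that $p_{ij}>0$ forces $\eta^i_{jj'}=0$ for all $j'$), and (ii) the revision protocol is defined to vanish when either the payoff comparison or the feasibility check fails, which is exactly how you close the gap created by the coupled constraints. The presentational difference is that the paper does not verify $\chi_{ij}=0$ componentwise as you do; it first establishes the aggregate ``positive correlation'' identity
\[
\sum_{j\in\mathcal{J}}\dot{p}_{ij}\,u_{ij}(\alpha,\mathbf{P})=\sum_{j,j'\in\mathcal{J}}p_{ij'}\max\bigl[0,\,u_{ij}(\alpha,\mathbf{P})-u_{ij'}(\alpha,\mathbf{P})\bigr]\,\eta^i_{j'j},
\]
shows this is strictly positive whenever the right-hand side of (\ref{smith84}) is nonzero, and only then invokes the equilibrium condition $p_{ij}=0$ or $\eta^i_{jj'}=0$ to conclude $\chi=0$. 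Your direct decomposition into incoming and outgoing flows is more elementary and arguably cleaner for this direction of the statement, since the vanishing of each flow follows immediately from (i) and (ii) without the sign computation; the paper's aggregate identity earns its keep mainly because it is the Nash-stationarity ingredient reused for the converse claim in Proposition~\ref{restpoint} (rest points are equilibria), which your argument would not give for free. Both proofs share the same unresolved delicacy, namely that the support-in-argmax characterization for the constrained game is justified only through the feasibility clause in the definition of $\eta^i_{jj'}$; you flag this explicitly, which is an improvement over the paper's implicit treatment.
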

 \begin{proof}
The transition rate between receivers preserves the sign in the sense that, for every user,
incoming flow  from the receiver $j' $  to $j$ is positive if and only if the constraints are satisfied and the payoff to $j$ exceeds the payoff to $j'.$ Let $\alpha$ be a feasible point. We first remark that if the right hand side of (\ref{smith84}) is non-zero for some splitting strategy $\mathbf{P},$ then
\begin{eqnarray} \nonumber
d &:= &\sum_{j\in\mathcal{J}}\ \dot{p}_{ij} u_{ij}(\alpha,\mathbf{P})=\sum_{j\in\mathcal{J}}\ \chi_{ij} u_{ij}(\alpha,\mathbf{P})\\ \nonumber
 & = & \sum_{j,j'\in\mathcal{J}}  p_{ij'}\left(u_{ij}(\alpha,\mathbf{P})-u_{ij'}(\alpha,\mathbf{P}) \right)\eta^i_{j'j}\\ \nonumber
 &=& \sum_{j,j'\in\mathcal{J}}  p_{ij'}\max\left[0,\left(u_{ij}(\alpha,\mathbf{P})-u_{ij'}(\alpha,\mathbf{P}) \right)\right]\eta^i_{j'j}
\end{eqnarray} which is strictly positive. Thus, if $(\alpha,\mathbf{P})$ is a Nash equilibrium then $(\alpha,\mathbf{P})$ satisfy the constraints, and $p_{ij}=0,$ or $\eta^i_{jj'}(\alpha,\mathbf{P})=0.$ This implies that $(\alpha,\mathbf{P})$ satisfies also $\chi(\alpha,\mathbf{P}))=0.$
\end{proof}

 The following proposition says that the equilibria are exactly the  rest point of (\ref{smith84}).
 \begin{prop}\label{restpoint}
 Any rest point of the dynamics (\ref{smith84}) is a Nash  equilibrium of the game $\overline{\Xi}$.
 \end{prop}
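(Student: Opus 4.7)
The plan is to invert the Lyapunov--style identity derived in the proof of Proposition \ref{propc}. Let $(\alpha,\mathbf{P})$ be a feasible rest point of (\ref{smith84}); that is, $\chi_{ij}(\alpha,\mathbf{P})=0$ for every user $i$ and every receiver $j$. Fix $i$ and form the weighted inner product
$$d_i := \sum_{j\in\mathcal{J}}\dot{p}_{ij}\,u_{ij}(\alpha,\mathbf{P})=\sum_{j\in\mathcal{J}}\chi_{ij}(\alpha,\mathbf{P})\,u_{ij}(\alpha,\mathbf{P}),$$
which must equal zero since every $\chi_{ij}$ vanishes.

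Next I would repeat the index swap carried out in Proposition \ref{propc} (relabel $j\leftrightarrow j'$ in the outgoing flow term) and collect to obtain
$$d_i=\sum_{j,j'\in\mathcal{J}}p_{ij'}\,\eta^{i}_{j'j}(\alpha,\mathbf{P})\bigl(u_{ij}(\alpha,\mathbf{P})-u_{ij'}(\alpha,\mathbf{P})\bigr).$$
By the definition $\eta^{i}_{j'j}=\max\bigl(0,u_{ij}-u_{ij'}\bigr)^{\theta}$ on the feasible region, each summand equals $p_{ij'}\,\max\bigl(0,u_{ij}(\alpha,\mathbf{P})-u_{ij'}(\alpha,\mathbf{P})\bigr)^{\theta+1}$ and is therefore non-negative. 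Since $d_i=0$ and every term is non-negative, every term vanishes individually, so for all $(j,j')$ either $p_{ij'}=0$ or $u_{ij'}(\alpha,\mathbf{P})\geq u_{ij}(\alpha,\mathbf{P})$. Equivalently, every receiver in the support of $\mathbf{p}_i$ attains the maximal fitness $\max_{j}u_{ij}(\alpha,\mathbf{P})$.

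From here the Nash property follows directly: for any admissible unilateral deviation $\mathbf{p}'_i$ preserving feasibility,
$$\overline{U}_i(\alpha_i,\mathbf{p}'_i,\alpha_{-i},\mathbf{p}_{-i})=\sum_{j}p'_{ij}\,u_{ij}(\alpha,\mathbf{P})\leq\max_{j}u_{ij}(\alpha,\mathbf{P})=\overline{U}_i(\alpha_i,\mathbf{p}_i,\alpha_{-i},\mathbf{p}_{-i}).$$
Applying this argument to each $i\in\mathcal{N}$ shows that $(\alpha,\mathbf{P})$ is a Nash equilibrium of the channel-selection subgame of $\overline{\Xi}$ with predetermined rates, matching the hypothesis under which Proposition \ref{propc} was stated.

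The main obstacle is a bookkeeping one rather than a deep difficulty. Since $\eta^{i}_{jj'}$ is defined to be zero off the mixed capacity region $\mathcal{M}(\mathcal{C})$, the rearrangement of the double sum and the sign argument are valid only at feasible states; one therefore has to verify that the rest point $(\alpha,\mathbf{P})$ under consideration indeed lies in $\mathcal{M}(\mathcal{C})$, and that ``deviations'' $\mathbf{p}'_i$ are restricted to those keeping $(\alpha_i,\mathbf{p}'_i,\alpha_{-i},\mathbf{p}_{-i})$ in each capacity region $\mathcal{C}(j)$. Both issues are handled by restricting attention to trajectories of (\ref{smith84}) initialised inside $\mathcal{M}(\mathcal{C})$ and to unilateral deviations respecting the coupled constraints, exactly as in the definition of the constrained action set $\mathcal{A}_i(\alpha_{-i},\mathbf{p}_{-i})$.
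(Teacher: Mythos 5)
Your argument is correct, and it is worth noting that it is genuinely more self-contained than what the paper does: the paper does not prove Proposition \ref{restpoint} at all, but instead invokes ``Theorem III in \cite{cdc}'', observing only that $\eta^i_{jj'}$ preserves the sign of the payoff difference and therefore inherits the Nash-stationarity property of pairwise-comparison (Smith-type) protocols. What you have written is precisely the proof of that cited theorem, specialized to this model: the index swap turns $d_i=\sum_j \chi_{ij}u_{ij}$ into $\sum_{j,j'}p_{ij'}\max\bigl(0,u_{ij}-u_{ij'}\bigr)^{\theta+1}$, a sum of non-negative terms, so vanishing of all the $\chi_{ij}$ forces each term to vanish, which is exactly the statement that every receiver in the support of $\mathbf{p}_i$ earns the maximal fitness. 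Your version buys the reader a complete argument in place of an external citation, and it makes transparent that the converse direction (Proposition \ref{propc}) and this direction are the two halves of the same Lyapunov-type identity. Your handling of the feasibility caveat is also needed and is not spurious: since $\eta$ is defined to be identically zero off the capacity region, every infeasible state is trivially a rest point, so the proposition is only true for rest points lying in the feasible set, as you say.

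One point deserves a sharper look. In the final step you bound the deviation payoff by $\sum_j p'_{ij}\,u_{ij}(\alpha,\mathbf{P})\leq \max_j u_{ij}(\alpha,\mathbf{P})$ with the fitnesses $u_{ij}$ frozen at the undeviated profile. This is the standard population-game reading (payoffs depend on the aggregate state, not on one's own mixing weights), and it is the reading under which the cited Theorem III applies. In the hybrid model as literally written, however, $u_{ij}=g_i(\alpha_i p_{ij})$ depends on $p_{ij}$ itself, so a deviation to $\mathbf{p}'_i$ changes the values $u_{ij}$ and the displayed inequality does not by itself bound $\overline{U}_i(\alpha_i,\mathbf{p}'_i,\alpha_{-i},\mathbf{p}_{-i})$. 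This ambiguity is inherited from the paper's own formulation rather than introduced by you, but if you want the proof to stand on its own you should either state explicitly that you are working with the state-dependent fitness interpretation, or add the extra step showing that the best response in $\mathbf{p}_i$ for the payoff $\sum_j p_{ij}g_i(\alpha_i p_{ij})$ is still characterized by the support condition you derived (compare Proposition \ref{Kprop}).
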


 The proof of  Proposition  \ref{restpoint} can be obtained by using Theorem III in \cite{cdc}.
 Since the probability to switch from receiver $j$ to $j'$ is proportional to $\eta^i_{jj'}$, which preserves the sign of payoff difference, we can use the Theorems III in \cite{cdc}. It follows that the dynamics generated by $\eta$ satisfy the Nash stationarity property.

\subsubsection{G-function Based Dynamics}
We introduce here the generating fitness function ({\it G-function}) based dynamics with projection onto the capacity region. The G-function approach has been successfully applied to non-linear continuous games by Vincent and Brown~\cite{vincent}, \cite{vincent05}. It is appropriate for our hybrid model because we can regard the channel selection as the variables in a fitness function. Users choose channel selection probabilities to aim at increasing their fitness of their rate choice. In our rate allocation game, to deal with constraints, we use projection into capacity region in order to preserve the  trajectories feasible.
Starting from a point in the polytope $\mathcal{C}$, each user revises and updates its strategy according to a rate proportional to the gradient and its payoff subject to the capacity constraints. Let $G_{ij}$ be the fitness generating function of user $i$ at receiver $j$ defined on $\mathbb{R}^N\times\mathbb{R}^{N\times J}$ satisfying
\begin{equation}
\nonumber G_{ij}(v,\alpha, \mathbf{P}){\bigg|_{v=\mathbf{p}_i}}=\left(C_{j,\mathcal{N}}-p_{ij}\beta_{ij}(t)-\sum_{i'\in\mathcal{N}\backslash\{i\}} p_{i'j}\beta_{i'j}(t)\right), \end{equation}
if $(\alpha,\mathbf{P})$ satisfies in the hybrid capacity region. Notice that the term~$C_{j,\mathcal{N}}-\sum_{i'\neq i} p_{i'j}\beta_{i'j}(t)$ is maximum rate of $i$ using channel $j$ at time $t$.
Hence, the G-function based dynamics is given by
\begin{equation}\label{GDynamics}
\dot{\beta}_{ij}= -\bar{\mu}\left[p_{ij}\beta_{ij}-C_{j,\mathcal{N}}+\sum_{i'\neq i} p_{i'j}\beta_{i'j}\right]p_{ij}\beta_{ij}.
\end{equation} 
with initial conditions $\beta_{ij}(0)\leq C_{j,\{i\}}$, where $\beta=[\beta_{ij}]$ is defined in Proposition \ref{bestResponseProp}, which is of the same dimension as $\alpha$, and $\alpha_i(t)=\sum_{j\in\mathcal{J}}\beta_{ij}(t)$;  $\bar{\mu}$ is an appropriate parameter chosen for the rate of convergence.

\subsubsection{Hybrid Dynamics}
We now combine the two evolutionary game dynamics described in the previous subsections.  Variables $\alpha$ and $\mathbf{P}$ are both evolving in time. The overall dynamics are given by

\begin{equation}\label{HybridDynamics}
\left\{
\begin{array}{lll}
\dot{p}_{ij}(t)&=&\sum_{j'\in\mathcal{J}}p_{ij'}(t)\eta_{j'j}^i(\alpha(t),\mathbf{P}(t))
-p_{ij}(t)
\sum_{j'\in\mathcal{J}}\eta_{jj'}^i(\alpha(t),\mathbf{P}(t))
\\
\dot{\beta}_{ij}(t)&=& -\bar{\mu}\left[p_{ij}(t)\beta_{ij}(t)-C_{j,\mathcal{N}}
+ \sum_{i'\neq i} p_{i'j}(t)\beta_{i'j}(t)\right]p_{ij}(t)\beta_{ij}(t)\medskip\\ \alpha_i(t)&=& \sum_{j\in\mathcal{J}}\beta_{ij}(t)\medskip,\ \beta_{ij}(0)\leq C_{j,\{i\}}, \forall j\in\mathcal{J},\ i\in\mathcal{N}
\end{array}\right.
\end{equation}
All the equilibria of the hybrid evolutionary rate control and channel selection are  rest point of the above hybrid dynamics. The following result can be obtained directly from Proposition \ref{propc} and (\ref{GDynamics}).

\begin{prop} Let $(\beta^*,\mathbf{P}^*)$ be an interior rest points of the hybrid dynamics, i.e., $\beta_{ij}^*>0,\ p^*_{ij}>0$ and $\chi(\alpha^*,\mathbf{P})=0.$ Then for all $j$, $$\sum_{i=1}^N p^*_{ij}\beta^*_{ij}= C_{j,\mathcal{N}}; ~~ \chi\left(\sum_{j=1}^N\beta^*_{ij},\mathbf{P}^*\right)=0.$$
\end{prop}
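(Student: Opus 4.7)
The proof is a direct unpacking of what it means to be a rest point of the hybrid system \eqref{HybridDynamics}, using the interior positivity hypothesis to rule out trivial ways for the right-hand sides to vanish. The plan is therefore to examine the two components of the vector field separately.

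First I would look at the rate equation, which at an interior rest point forces $\dot{\beta}_{ij}=0$ with $p^*_{ij}\beta^*_{ij}>0$. From the G-function dynamics
\[
\dot{\beta}_{ij}=-\bar{\mu}\Big[p_{ij}\beta_{ij}-C_{j,\mathcal{N}}+\sum_{i'\neq i}p_{i'j}\beta_{i'j}\Big]p_{ij}\beta_{ij},
\]
the factor $p^*_{ij}\beta^*_{ij}$ is strictly positive, so the bracket must vanish identically. Rearranging gives $p^*_{ij}\beta^*_{ij}+\sum_{i'\neq i}p^*_{i'j}\beta^*_{i'j}=C_{j,\mathcal{N}}$, which is the claimed capacity saturation $\sum_{i=1}^N p^*_{ij}\beta^*_{ij}=C_{j,\mathcal{N}}$ at every receiver $j$. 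Note that the argument works for every $i$, so the identity is consistent across users at the same $j$; this is the only place where the assumption $\beta^*_{ij}>0$ and $p^*_{ij}>0$ for all pairs is actually needed.

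Second, I would turn to the channel-selection equation. A rest point requires $\dot{p}_{ij}=\chi_{ij}(\alpha^*,\mathbf{P}^*)=0$ for all $i,j$, which is literally the Smith-type rest point condition, already implied by the hypothesis $\chi(\alpha^*,\mathbf{P}^*)=0$. Because the rates in the combined system are defined by $\alpha^*_i=\sum_{j\in\mathcal{J}}\beta^*_{ij}$, substituting this into $\chi$ yields $\chi\big(\sum_{j=1}^N\beta^*_{ij},\mathbf{P}^*\big)=0$, which is the second claim (the summation index should be read componentwise across users, matching the notation used earlier).

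There is no real obstacle here beyond bookkeeping: the argument is essentially ``interior rest point plus a product-form vector field means each factor of nonzero terms vanishes.'' If anything, the subtlety is to verify that the two parts of the hybrid system are truly decoupled at a rest point in the sense that the $\beta$-equation alone delivers the capacity saturation, while the $\mathbf{p}$-equation alone delivers the Smith-stationarity; both are guaranteed by the interior positivity assumption and do not require any further compatibility check.
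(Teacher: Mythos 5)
Your argument is correct and matches the paper's intent exactly: the paper offers no written-out proof, stating only that the result ``can be obtained directly from Proposition \ref{propc} and (\ref{GDynamics}),'' which is precisely your decomposition --- the positivity of $p^*_{ij}\beta^*_{ij}$ forces the bracket in the $\beta$-equation to vanish (yielding capacity saturation at each receiver $j$), while the Smith-stationarity condition $\chi=0$ combined with $\alpha^*_i=\sum_{j\in\mathcal{J}}\beta^*_{ij}$ gives the second identity. Your observation that the summation index in the second displayed identity should run over $\mathcal{J}$ rather than up to $N$ is also a fair catch of a typo in the statement.
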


\subsection{Numerical Examples}
In this subsection, we illustrate the evolutionary dynamics in (\ref{GDynamics}) and (\ref{HybridDynamics}) by examining a two-user and three-receiver communication system as depicted in Figure \ref{2user3Rx}. Let $h_{i1}=0.1, h_{i2}=0.2, h_{i3}=0.3$, for $i=\{1,2\}$. Each transmission power $P_i$ is set to $1 $ mW for all $i=1,2$ and the noise level is set to $\sigma^2=-20$ dBm.

\begin{figure}[htb]
\begin{center}
  \includegraphics[scale=0.5]{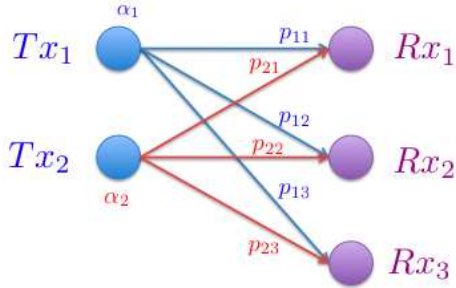}\\
  \caption{Two users and three receivers}\label{2user3Rx}
\end{center}
\end{figure}

In the first experiment, we assume that the rates of the users are predetermined to be $\alpha=[10, 20]^T$, the Smith dynamics in (\ref{GDynamics}) yield in Figure (\ref{FixedAlphap1})  and (\ref{FixedAlphap2}) the response of $\mathbf{p}_1$ and $\mathbf{p}_2$. It can be seen that the dynamics converge very fast within less than half a second.

In the second experiment, we assume that the probability matrix $\mathbf{P}$ has been optimally found by the users using   (\ref{smith84}). Figures \ref{beta1} and \ref{beta2} show that the $\beta$ values converge to an equilibrium from which we can find the optimal value for $\alpha$. Since these dynamics are much slower compared to Smith dynamics on $\mathbf{P}$, our assumption of knowledge of optimal $\mathbf{P}$ for a slowly varying $\alpha$ becomes valid.

In the next experiment, we simulate the hybrid dynamics in (\ref{HybridDynamics}). Let the probability $p_{ij} $ of user $i$ choosing transmitter $j$ and the transmission rates be initialized as follows:
$$\begin{array}{ll}
\mathbf{P}(0)=\left[\begin{array}{ccc}
0.2 & 0.3& 0.5 \\
    0.25 &0.5& 0.25 \end{array}\right],
& \alpha(0)=\left[\begin{array}{c}0.2 \\ 0.1\end{array}\right].
\end{array}$$
We let the parameter $\bar{\mu}=0.9$. Figure \ref{p1} shows the evolution of the weights of user 1 on each of the receivers. The weights converge to be $p_{1j}=1/3$ for all $j$ within two seconds, leading to an unbiased choice among receivers. In Figure \ref{p2}, we show the evolution of the weights of the second user on each receiver. At the equilibrium, $\mathbf{p}_{2}=[0.3484, 0.4847, 0.1669]^T$. It appears that user 2 favors the second transmitter over the other ones. Since the utility $u_{ij}$ is of the same form, the optimal response set $K_i^*$ is naturally nonempty and contains all the receivers. Shown in Proposition \ref{Kprop}, the probability of choosing a receiver  at the equilibrium is randomized among the three receivers and can be determined by the rates $\alpha$ chosen by the users.

The $\beta$-dynamics determines the evolution of $\alpha$ in  (\ref{HybridDynamics}). In Figure \ref{alpha}, we see that the evolutionary dynamics yield $\alpha=[15.87, 23.19]^T$ at the equilibrium. It is easy to verify that they satisfy the capacity constraints outlined in Section \ref{secmodel}. It converges within 5 seconds and appears be much slower than in Figures \ref{p1} and \ref{p2}. Hence, it can be seen that $\mathbf{P}-$dynamics may be seen as the inner loop dynamics while $\beta-$dynamics can be seen as an outer loop evolutionary dynamics. They evolve on two different time scales. In addition, thanks to Proposition \ref{restpoint},  finding the rest points for the above dynamics ensures us finding the equilibrium.
\begin{figure*}[h!tb]
\begin{minipage}[b]{0.4\linewidth} 
\centering
\includegraphics[scale=0.4]{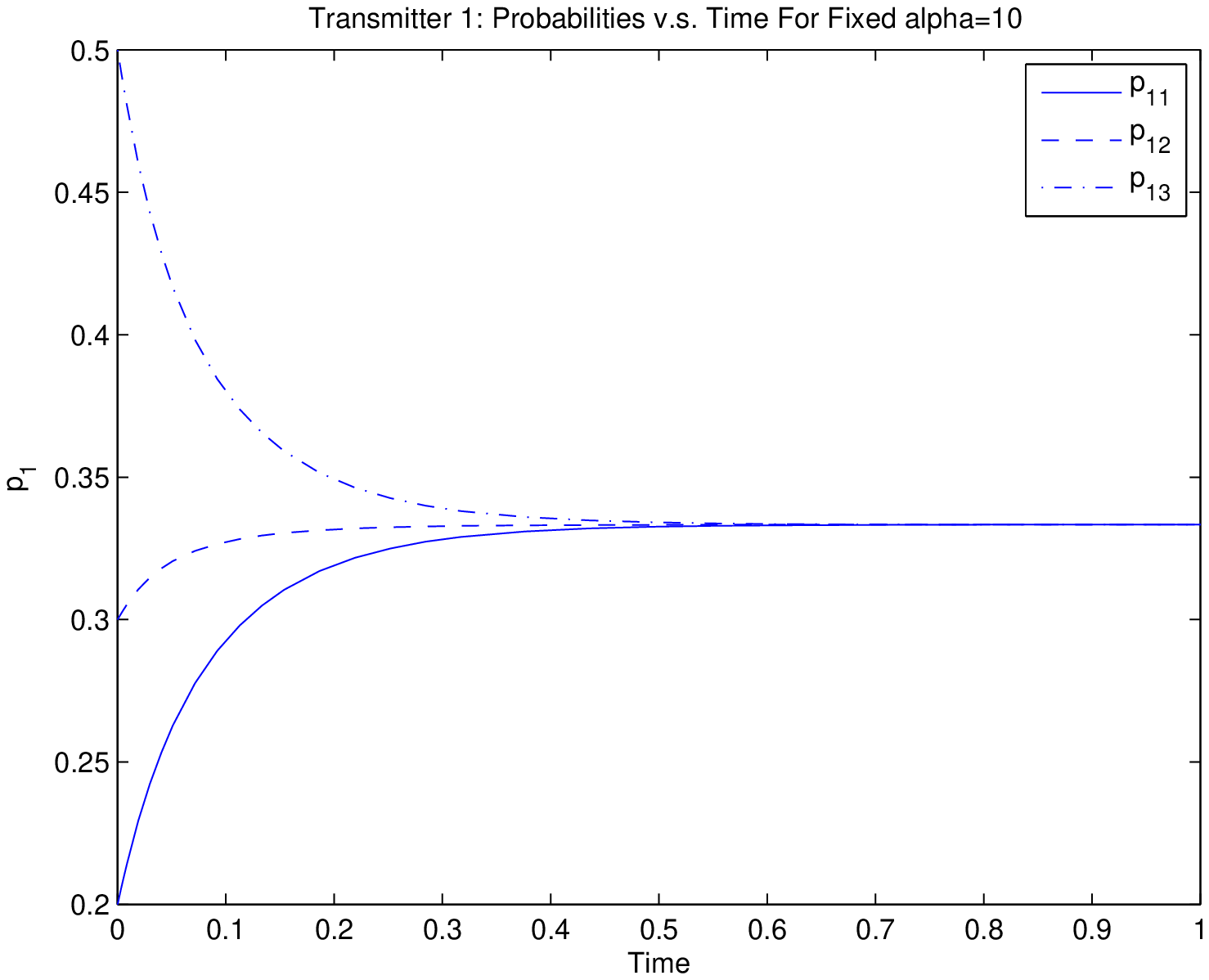}\caption{Transmitter 1: Probabilities v.s. Time For Fixed $\alpha_1$=10} \label{FixedAlphap1}
\end{minipage}
\hspace{0.3cm}
\begin{minipage}[b]{0.4\linewidth}
\centering
\includegraphics[scale=0.4]{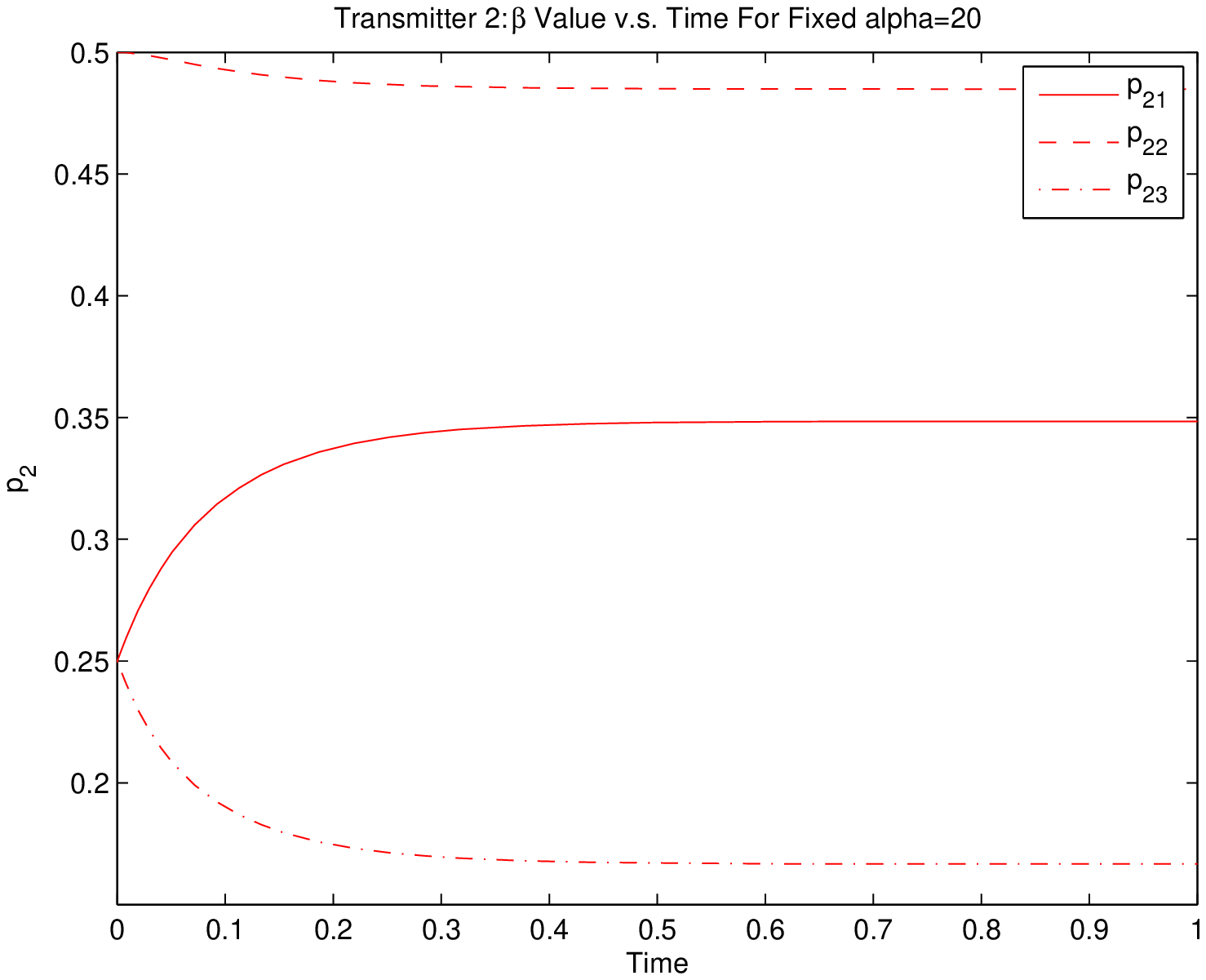}
\caption{Transmitter 2: Probabilities v.s. Time For Fixed $\alpha_2$=20} \label{FixedAlphap2}
\end{minipage}
\end{figure*}

\begin{figure*}[h!tb]
\begin{minipage}[b]{0.4\linewidth} 
\centering
\includegraphics[scale=0.4]{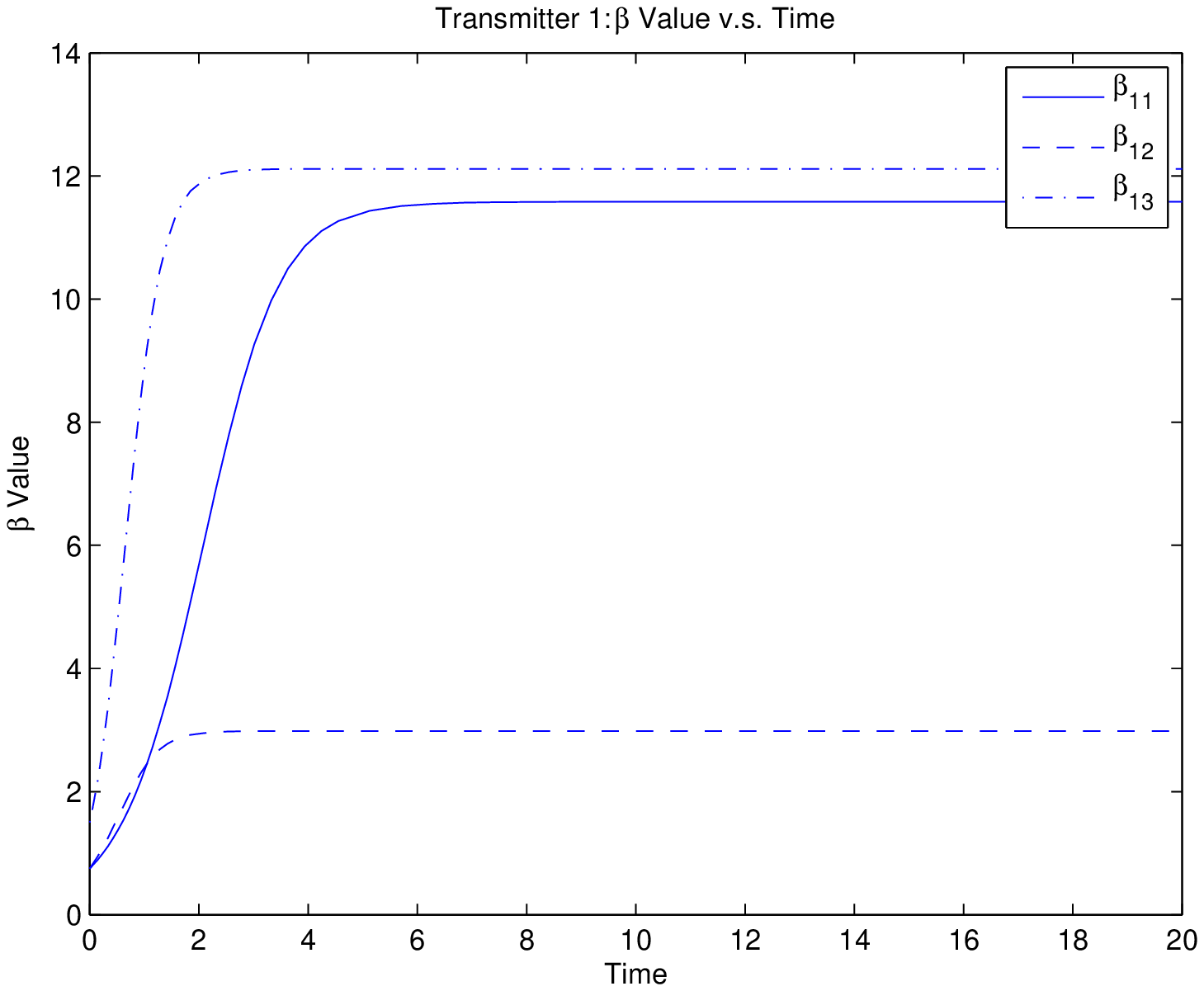}\caption{Transmitter 1: $\beta$ Value v.s. Time} \label{beta1}
\end{minipage}
\hspace{0.3cm} 
\begin{minipage}[b]{0.4\linewidth}
\centering
\includegraphics[scale=0.4]{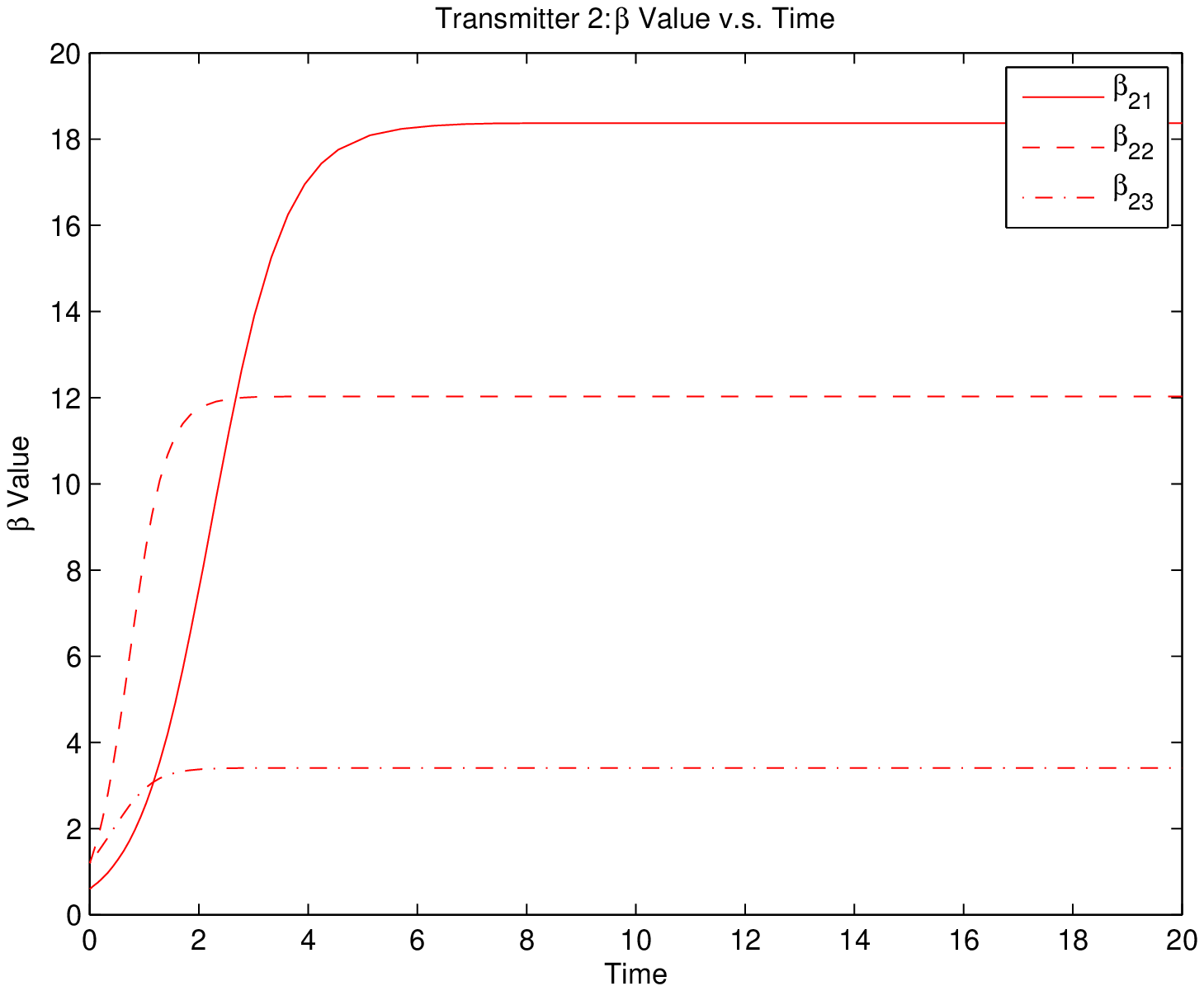}
\caption{Transmitter 1: $\beta$ Value v.s. Time} \label{beta2}
\end{minipage}
\end{figure*}

\begin{figure*}[h!tb]
\begin{minipage}[b]{0.3\linewidth} 
\centering
\includegraphics[scale=0.3]{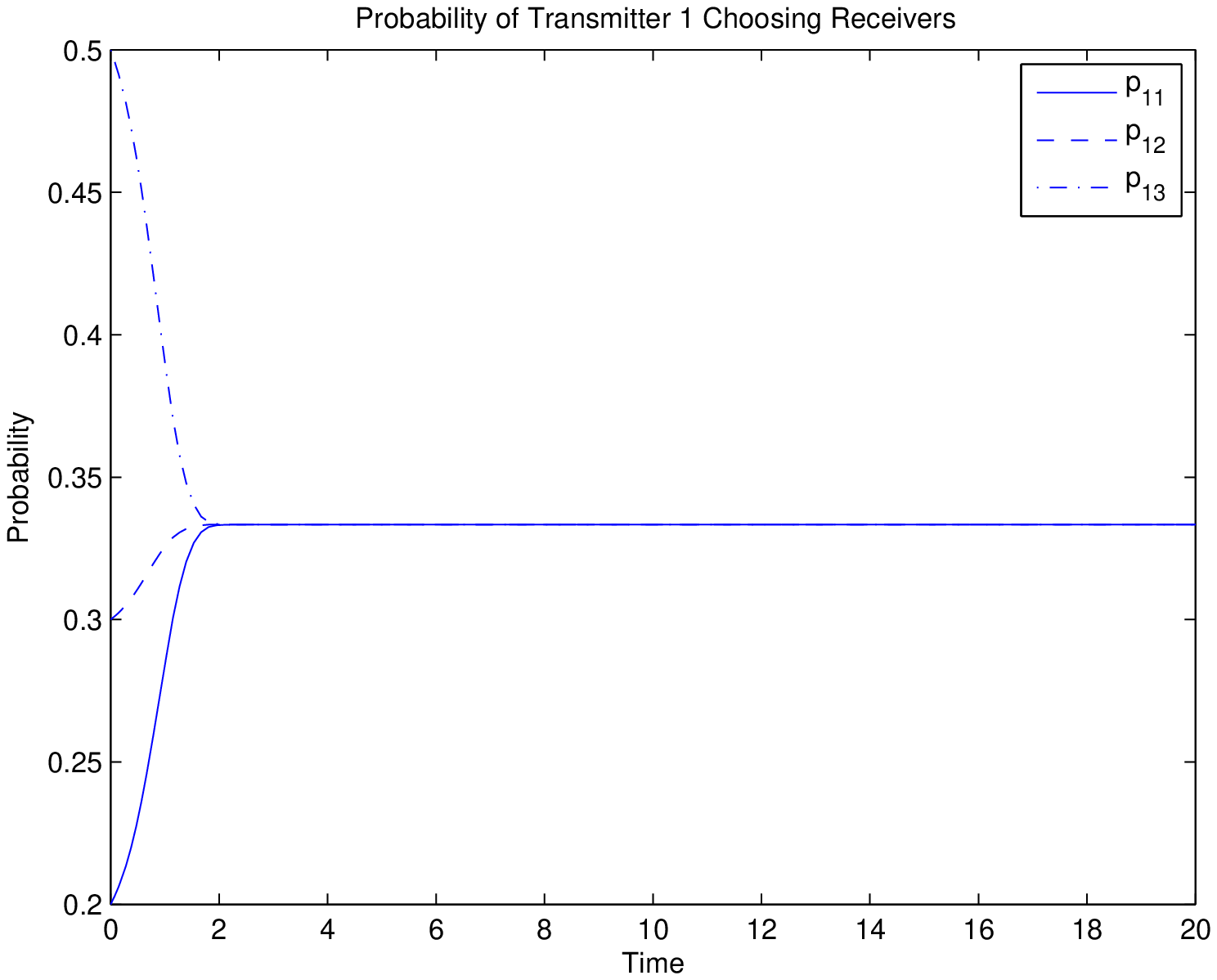} \caption{Probability of Transmitter 1 Choosing Receivers} \label{p1}
\end{minipage}
\begin{minipage}[b]{0.3\linewidth}
\centering
\includegraphics[scale=0.3]{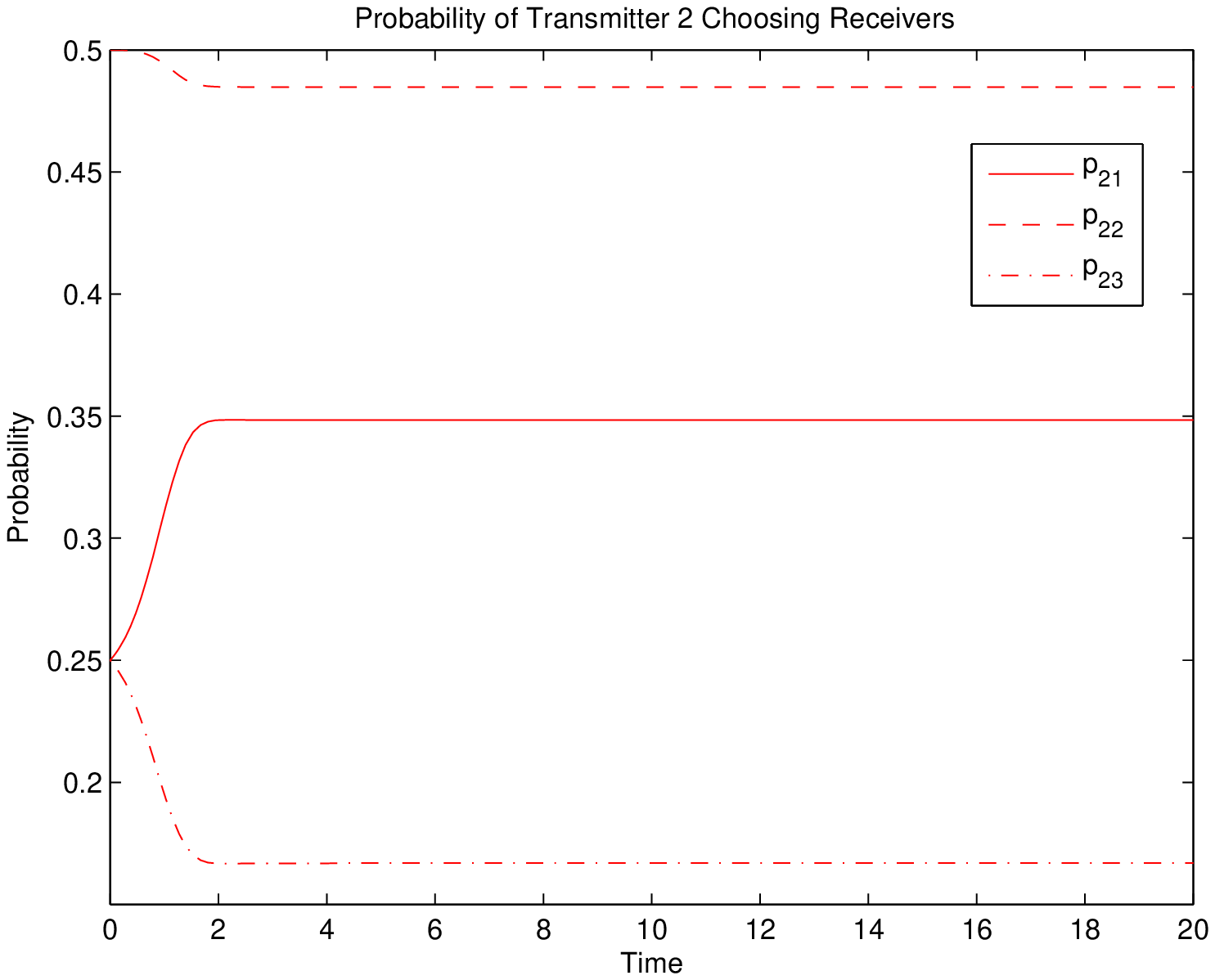}
\caption{Probability of Transmitter 2 Choosing Receivers} \label{p2}
\end{minipage}
\begin{minipage}[b]{0.3\linewidth}
\includegraphics[scale=0.3]{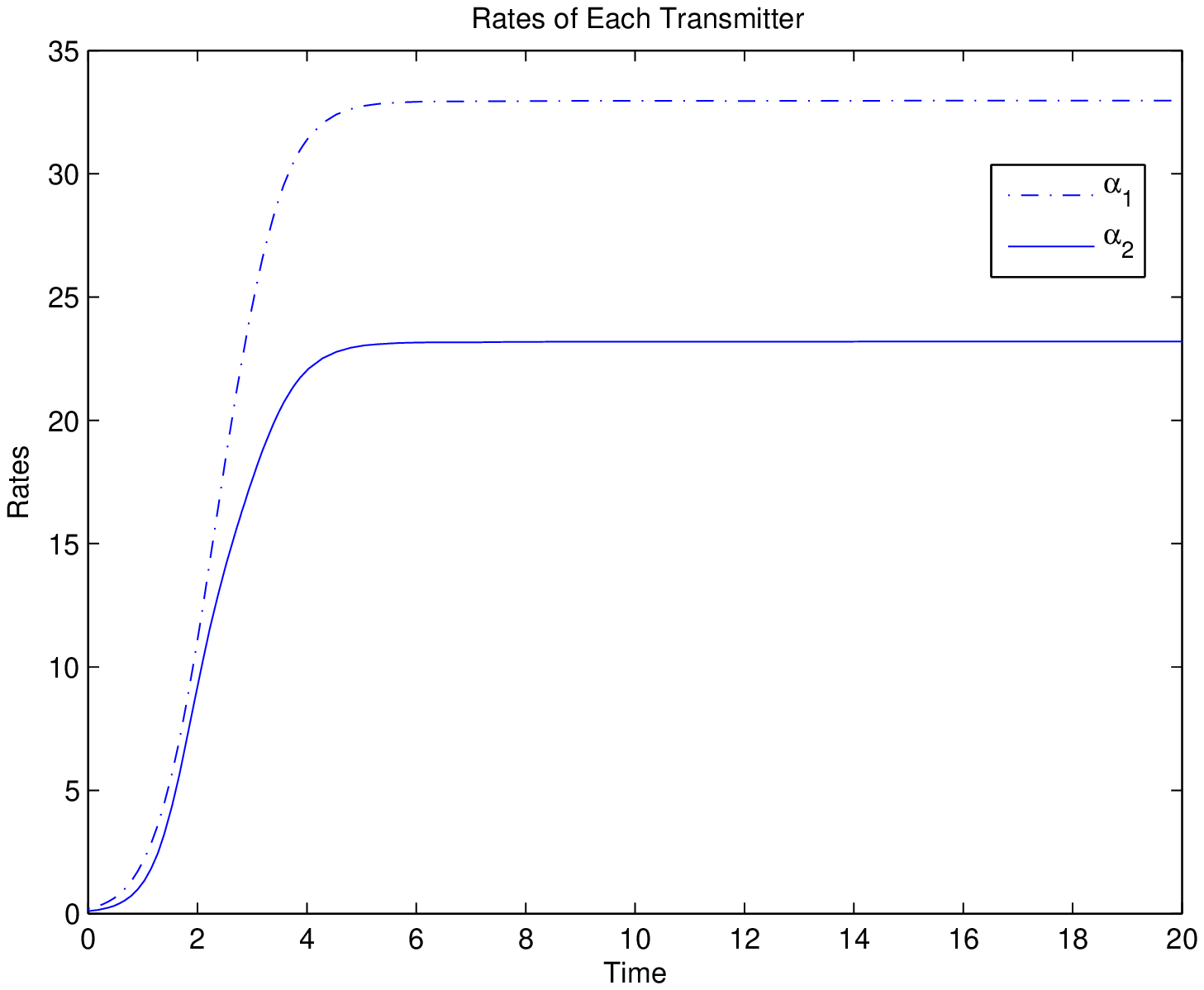}
\caption{Rates of Each Transmitter} \label{alpha}
\end{minipage}
\end{figure*}

\section{Concluding Remarks} \label{secconclud}
 In this paper, we have studied an evolutionary  multiple access channel game with a continuum action space and coupled rate constraints. We showed that the  game has a continuum of strong equilibria which  are 100\% efficient in the rate optimization problem.  We proposed the constrained Brown-von Neumann-Nash dynamics, Smith dynamics, and the replicator dynamics  to study the stability of equilibria in the  long run. In addition, we have introduced a hybrid multiple access game model and its corresponding evolutionary game-theoretic framework. We have analyzed the Nash equilibrium for the static game and suggested a system of evolutionary game dynamics based method to find it. It is found that the Smith dynamics for channel selections are a lot faster than the $\beta$-dynamics, and the combined dynamics yield a rest point that corresponds to the Nash equilibrium. An interesting extension that we leave for future  research is to introduce a dynamic channel characteristics: the gains $h_{ij}(t)$ are {\it time-dependent random variables}. Another interesting question is
to find equilibria structure  in the case of multiple access games with {\it non-convex capacity regions}.


\end{document}